\documentclass[lettersize,journal]{IEEEtran}
\usepackage{amsmath,amsfonts}
\usepackage{algorithmic}
\usepackage{algorithm}
\usepackage{array}
\usepackage[caption=false,font=normalsize,labelfont=sf,textfont=sf]{subfig}
\usepackage{textcomp}
\usepackage{stfloats}
\usepackage{url}
\usepackage{verbatim}
\usepackage{graphicx}
\usepackage[acronym]{glossaries}
\usepackage{amsmath}
\usepackage{amssymb}
\usepackage{bm}
\usepackage{amsthm}
\usepackage{mathtools}

\usepackage[table]{xcolor}
\usepackage{array}
\usepackage{multirow}
\usepackage{makecell}
\usepackage{amssymb}
\definecolor{graycell}{RGB}{217,217,217}
\definecolor{redcell}{RGB}{199,97,116}
\definecolor{goldcell}{RGB}{232,166,0}
\definecolor{bluecell}{RGB}{86,178,222}
\definecolor{greencell}{RGB}{7,160,116}
\definecolor{lightgreen}{RGB}{79,187,153}

\newtheorem{theorem}{Theorem}

\newtheorem{remark}{Remark}
\newtheorem{lemma}{Lemma}

\newacronym{awgn}{AWGN}{Additive White Gaussian Noise}
\newacronym{aoi}{AoI}{Age of Information}
\newacronym{arq}{ARQ}{automatic repeat request}
\newacronym{fbl}{FBL}{Finite Blocklength}
\newacronym{bs}{BS}{Base Station}
\newacronym{snr}{SNR}{Signal-to-Noise Ratio}
\newacronym{iid}{i.i.d.}{identically and independently distributed}
\newacronym{mdp}{MDP}{Markov Decision Process}
\newacronym{pmf}{pmf}{probability mass function}
\newacronym{pomdp}{POMDP}{Partially Observable Markov Decision Process}
\newacronym{bsc}{BSC}{Binary Symmetric Channel}
\newacronym{mac}{MAC}{Multiple Access Channel}
\newacronym{bler}{BLER}{Block Error Probability}
\newacronym{dpsc}{DPSC}{Dinkelbach per-stage cost}

\usepackage[
  backend=biber,
  style=ieee,
  doi=true,
  url=false
]{biblatex}

\AtBeginBibliography{\footnotesize}

\begin{document}

\title{Open-Loop Transmission with Discrete
	Blocklengths: Characterization of AoI-Optimal Schedules}

\author{\IEEEauthorblockN{Mojan Wegener \textit{Student Member, IEEE} and Eduard A. Jorswieck \textit{Fellow, IEEE}}
	% <-this % stops a space
	\thanks{The presented research is based on the joint research project “6G-MIMOLink”, which is funded
		by the Federal Ministry of Research, Technology and Space (BMFTR, grant number
		16KIS2441). The work of M. Wegener and E. Jorswieck is supported in part by the
		Federal Ministry of Research, Technology and Space of Germany in the program of "Souverän.
		Digital. Vernetzt." Joint project XG-RIC, project identification number: 16KIS2435. The authors
		are responsible for the content of this publication.}}% <-this % stops a space
%\thanks{Manuscript received April 19, 2021; revised August 16, 2021.}}

% The paper headers
\markboth{}%
{M. Wegener \MakeLowercase{\textit{et al.}}: Open-Loop Transmission with Discrete
	Blocklengths: Characterization of AoI-Optimal Schedules}

%\IEEEpubid{0000--0000/00\$00.00~\copyright~2021 IEEE}
% Remember, if you use this you must call \IEEEpubidadjcol in the second
% column for its text to clear the IEEEpubid mark.

\maketitle

\begin{abstract}
	In this paper, we consider a general status
	update system, consisting of a source
	at which a physical process is monitored and a
	receiver. For transmission of status updates,
	a discrete set of blocklength error probability tuples
	$\mathcal{C} = \{(n_1, \epsilon_1), (n_2, \epsilon_2),
		\ldots, (n_K, \epsilon_K)\}$
	is available. A typical example is a communication
	system with a fixed set of modulation and coding schemes.
	We assume that no feedback channel for
	acknowledgments exists and thus investigate the problem
	of constructing deterministic schedules for the status updating system.
	The average \gls{aoi} and
	average exponential \gls{aoi} are appropriate performance
	metrics to capture the freshness
	of information.
	Interestingly, a static blocklength
	is shown to be suboptimal in general.
	While for the average \gls{aoi}
	the gains that can be achieved by using a
	non-trivial periodic schedule are relatively small,
	they are significant for the average exponential \gls{aoi}.
	Motivated by this, we prove conditions for single-blocklength
	optimality when two blocklengths are available.
	To achieve this,
	we use a Dinkelbach-like approach to reformulate the problem as an infinite-state
	transient average-cost \gls{mdp} and apply careful
	bounding techniques
	to derive the optimality conditions. As a converse, we analyze
	stationary randomized policies to certify the parameter regions
	where a non-trivial deterministic schedule outperforms the best constant schedule.
	Extensions to the case with more than two transmission
	lengths are provided.
	Numerical results show that the derived conditions cover a large portion of the parameter
	space and suggest that the optimality conditions for a constant schedule with the shorter
	blocklength, while sufficient, may also be necessary.
	As an example, we apply the results to the problem of age-optimal transmission
	with a small LDPC codebook, which illustrates the practical applicability.
\end{abstract}

\begin{IEEEkeywords}
	Age of Information, Status Update Systems, Scheduling, Finite Blocklength.
\end{IEEEkeywords}

\section{Introduction}\label{sec:Introduction}

In current and future wireless communication systems, there is an increasing
interest in applications that require timely delivery of status information, such as
autonomous driving, remote control, drones, and real-time monitoring \cite{akyildiz6GFutureWireless2020}.
Typically, the freshness of information has been quantified by using the
latency of the updates or the time between updates.
However, even in simple systems like an M/M/1 queuing system,
it has been shown that there is an inherent trade-off between the latency
of the updates and the time between updates \cite{kaulRealtimeStatusHow2012}.
To solve this trade-off in an optimal way, the concept
of \gls{aoi} has been introduced in \cite{kaulRealtimeStatusHow2012},
which is defined as the time elapsed since the generation of the last successfully received update.
Since then, \gls{aoi} has been widely studied for different system models
such as parallel servers \cite{kamEffectMessageTransmission2016}, \cite{yatesStatusUpdatesNetworks2018},
multiple sources
\cite{yatesAgeInformationRealTime2019} and systems with packet management
\cite{costaAgeInformationStatus2016a}.
A similar trade-off between update rate and
latency is observed in systems where
the blocklengths used
to transmit updates can be optimized. Long blocklengths
reduce the error probability of the updates, but increase
the latencies and
the time between updates.
On the other hand, shorter blocklengths
have a higher error probability, but reduce the
latencies and the time
between updates.

There is a growing body of literature that studies
this trade-off and designs optimal transmission schemes for \gls{aoi}-based performance metrics.
These works can be broadly divided into two categories: In the first category,
there is no feedback available or the feedback about successful transmissions is only
used for retransmitting updates independent of the current age.
As a consequence, the resulting age process can be analyzed
by applying queueing theory or renewal theory, and the design problem reduces to finding the optimal
system parameters, such as the blocklength \cite{wangAgeInformationShortPacket2019},
\cite{yuAverageAgeInformation2020}, \cite{sungAgeInformationAnalysis2022}, \cite{caoInformationAgeDelayCorrelation2021},
\cite{yuanDataFreshnessOptimization2021}, \cite{caoMultiplexingDiversityAoIOriented2023},
\cite{pengJointOptimizationReliability2025}, \cite{xieEvaluationAgeInformation2019},
\cite{sacAgeOptimalChannelCoding2018}.
In the second category, feedback is available and is used
for history-dependent adaptation of the transmission process at each transmission.
As a result, \gls{mdp} theory is applied to optimize the system \cite{ceranAverageAgeInformation2019},
\cite{hanOptimalBlocklengthAllocation2019}, \cite{wangAdaptivePowerRate2021},
\cite{chenPreemptingMinimizeAge2026}, \cite{yaoBattleRateError2020}.
In the first category, the optimization variables, for example the blocklength,
are often assumed to be continuous. Together with the second-order normal approximation from
\cite{polyanskiyChannelCodingRate2010}, which relates the transmission time to the
decoding error probability, an optimal blocklength can often be found.
In reality, however, the blocklength is not a continuous variable.
Instead, we have a set of available blocklengths and corresponding
block error probabilities.

In the absence of feedback,
one might expect a fixed blocklength to be optimal.
However, as we show in this paper, this is not true in general.
To get as much insight as possible, we keep the model very general:
We assume that we have a set of $K$ blocklength error probability tuples
$\mathcal{C} = \{(n_1, \epsilon_1), (n_2, \epsilon_2), \ldots, (n_K, \epsilon_K)\}$,
where the blocklengths $n_i$ are ordered increasingly and the error probabilities $\epsilon_i$ are
ordered decreasingly.
Note that while this model is well suited for transmission, it can also be applied
to other systems.

For example, a computation offloading problem
can be modeled by the two blocklengths $n_1$ and $n_2$ and the
corresponding error probabilities $\epsilon_1$ and $\epsilon_2$
can be interpreted as the probability of an unsuccessful local or remote computation, respectively.
Already in the simplest case with two available blocklength error probability tuples
$n_1, \epsilon_1$ and $n_2, \epsilon_2$, we have identified
parameter regimes where it is not optimal to always use the same blocklength.
This can occur for a very short blocklength with a high error probability like $n_1 = 59, \epsilon_1 = 0.85$
and a much longer blocklength with a much lower error probability like $n_2 = 280, \epsilon_2 = 0.01$.
With these parameters, the average age can be reduced from 423 to 413 by using a
periodic schedule of
the form $n_1, n_1, n_1, n_2, n_1, \ldots$. While
these parameters are not very realistic and the gain is quite small, this
shows that in general it is not optimal to always use the same blocklength.
It also raises the question of how the system behaves for other age-related performance metrics.
One popular example is an exponential penalty function on the age \cite{yatesAgeInformationIntroduction2021},
\cite{tripathiWhittleIndexApproach2024}, \cite{sunUpdateWaitHow2017}.
The exponential penalty function is motivated by the fact that the state estimation
error of unstable systems grows exponentially with the age \cite{yatesAgeInformationIntroduction2021},
\cite{champatiPerformanceCharacterizationUsing2019}, \cite{klugelAoIPenaltyMinimizationNetworked2019}.
Motivated by this, we also analyze the exponential penalty function.
In this case, the relative gains from using different blocklengths are much higher.
For example, with the parameters $n_1 = 50, \epsilon_1 = 0.35, n_2 = 121, \epsilon_2 = 0.02$
and $\alpha = 0.02$,
the average exponential age can be reduced from 60 to 48, which is a
significant reduction of around 20~\%.
Our main contributions are as follows:
\begin{itemize}
	\item To the best of our knowledge, this is the first work to show that
	      a constant blocklength can be suboptimal in feedback-free
	      systems when a discrete set of blocklengths is available.
	\item We derive closed-form expressions for the average and average exponential \gls{aoi}
	      for periodic schedules with different blocklengths.
	\item Motivated by the observation that a constant blocklength is not optimal in general,
	      we derive sufficient conditions in closed form, which establish optimality of
	      constant schedules for both the average and the average exponential \gls{aoi}.
	\item Based on stationary randomized policies, we derive a method
	      for determining the parameter regions where a non-trivial
	      deterministic schedule
	      outperforms the best constant schedule.
	\item In addition to extensive numerical evaluations of the results, we show how the
	      derived optimality conditions can be used to find the optimal transmission scheme
	      for a realistic communication scenario with a
	      LDPC codebook.
\end{itemize}

The rest of the paper is organized as follows:
The system model and the performance metrics are introduced in Section \ref{sec:system_model}.
In Section \ref{sec:PeriodicPoliciesAoI},
we derive expressions
for the average and average exponential \gls{aoi} for
periodic schedules.
In Section \ref{sec:OptimalityConditions}, the optimality conditions
for constant schedules are derived.
The conditions are generalized to an arbitrary number
of blocklengths in Section~\ref{sec:MultiBlocklengthOptimalityConditions}.
The stationary randomized policies are analyzed in Section \ref{sec:stationaryRandomizedPolicies}
to determine the parameter regions where a non-trivial deterministic schedule outperforms
the best constant schedule.
In Section \ref{sec:numericalResults}, we present numerical visualizations of the optimality regions and
a case study for a realistic communication scenario.
Finally, the conclusions are drawn in Section \ref{sec:conclusion}.

\noindent\textit{Notation:}
Sets are denoted by calligraphic letters, and boldface
letters denote vectors or
sequences. The expectation operator and the
indicator function are denoted by
$\mathbb{E}[\cdot]$ and $\mathbf{1}\{\cdot\}$, respectively.
For any scalar
quantity $x$, we use
$\bar{x} \coloneqq 1-x$.

\section{System Model}\label{sec:system_model}

We consider a single source-receiver pair. The source monitors
a physical process and sends time-stamped status updates to the receiver. The source is modeled
as a generate-at-will source, i.e., it can generate a fresh status update at any time instant.
To transmit a status update, the source needs to choose a blocklength from
a discrete set of available blocklengths.
We denote the set of available blocklength block error probability pairs by
$\mathcal{C} = \{(n_1, \epsilon_1), (n_2, \epsilon_2), \ldots, (n_K, \epsilon_K)\}$,
where $K$ is the number of available blocklengths.
We assume that the set is ordered such that
$ 0 <n_1 < n_2 < \ldots < n_K$ and
$ 1 >\epsilon_1 > \epsilon_2 > \ldots > \epsilon_K \geq 0$.
We denote the block error probability associated with blocklength
$n_j$ by the function
$\epsilon(n_j)=\epsilon_j,\quad j\in\{1,\ldots,K\}$.
Accordingly, $\bar{\epsilon}(n_j)=1-\epsilon(n_j)=\bar{\epsilon}_j$.
The blocklength chosen for the $i$-th update is denoted by $a_i\in\{n_1,\ldots,n_K\}$, $i=0,1,\ldots$
We assume that the source has no feedback about the success or failure
of previous transmissions.
At time $t=0$, we assume that the receiver has an initial successfully received
update of blocklength $n_1$, generated at time $-n_1$.
Hence, the initial age is $\Delta(0)=n_1$.
As a performance metric, we consider a non-decreasing
function of the \gls{aoi} $\Delta(t)$ at the receiver side.
The \gls{aoi} at time $t$ is defined as
\begin{equation}
	\Delta(t) = t - u(t),
\end{equation}
where $u(t)$ is the generation time stamp of the most recently received update at the receiver.
We also assume perfect error detection at the receiver side. Therefore, the age $\Delta(t)$ behaves as follows:
For the points in time where a new status update $i$ is successfully received,
the age drops to $a_i$, which is the used blocklength of the update.
Between two successful receptions, the age increases linearly with slope one.
This results in a sawtooth-like behavior of the age function $\Delta(t)$.
In this work, the two performance metrics
we consider are the time-average
age $\Delta_{\mathrm{avg}}$ and the time-average exponential age $\Delta_{\exp,\mathrm{avg}}$,
which are defined as
\begin{equation}
	\Delta_{\mathrm{avg}} = \limsup_{T' \to \infty} \frac{1}{T'} \mathbb{E}
	\left [ \int_0^{T'} \Delta(t) dt \right ],
\end{equation}
and
\begin{equation}
	\Delta_{\exp,\mathrm{avg}} = \limsup_{T' \to \infty} \frac{1}{T'}
	\mathbb{E} \left [ \int_0^{T'} e^{\alpha \Delta(t)} dt \right ],
\end{equation}
where $\alpha$ is a positive parameter that controls the exponential growth of the penalty function.
Next, we introduce the policy classes of interest in this paper.
We denote by $\Pi$ the set of all policies, possibly randomized and history-dependent.
Note that in our model, we have no feedback on successful transmissions, and
therefore the history of the system is only given by the chosen blocklengths in the past.
We denote by $\Pi_D \subseteq \Pi$ the set of all deterministic policies.
Consequently, every deterministic policy induces a fixed action sequence,
and we refer to the elements of $\Pi_D$ as deterministic schedules.
We denote by $\Pi_{SD} \subseteq \Pi_D$ the set of all constant schedules where the same blocklength
is used for every transmission. Finally, we denote by
$\Pi_{SR}$ the set of stationary randomized
policies where for each transmission the blocklength is chosen randomly according to
a fixed distribution.
In this work, we focus on the
class of deterministic schedules $\Pi_D$,
which can be easily implemented.
With these definitions, we can formally state the
main optimization
problems of this paper:
\begin{align}
	 & \Delta_{\mathrm{avg}}^* =  \inf_{\pi \in \Pi_D} \Delta_{\mathrm{avg}}(\pi)\tag{P1}\label{eq:AoIOptimizationProblem1}          \\
	 & \Delta_{\exp,\mathrm{avg}}^* = \inf_{\pi \in \Pi_D} \Delta_{\exp,\mathrm{avg}}(\pi)\tag{P2}\label{eq:AoIOptimizationProblem2}
\end{align}

\section{Average \gls{aoi} and Exponential \gls{aoi} for Periodic
  Deterministic Schedules}\label{sec:PeriodicPoliciesAoI}

In this section, we derive closed-form expressions for the average and average exponential
\gls{aoi} of deterministic periodic schedules
$s_0,s_1,\ldots,s_{P-1},s_0,s_1,\ldots$
For each position $i\in\{0,\ldots,P-1\}$, we define the success and
error probabilities as
$p_i = \bar{\epsilon}(s_i)$ and $q_i=\epsilon(s_i)$.
All schedule-position indices in this section are interpreted modulo $P$.
With this notation, the average \gls{aoi} is given as follows:

\begin{theorem}\label{theo:AoICyclicClosedForm}
	Let $\pi \in \Pi_D$ be a deterministic periodic schedule of period $P$.
	Then the average \gls{aoi} of this schedule is given by
	\begin{align}
		\Delta_{\mathrm{avg}}
		=
		\frac{
			\sum_{i=0}^{P-1}
			p_{i-1}
			\left(
			s_{i-1}\mathbb{E}[Y_i]
			+
			\frac{\mathbb{E}[Y_i^2]}{2}
			\right)
		}{
			\sum_{i=0}^{P-1}
			p_{i-1}\mathbb{E}[Y_i]
		},
	\end{align}
	with
	\begin{align}
		 & \mathbb{E}[Y_i]
		=
		\frac{
			\displaystyle
			\sum_{k=0}^{P-1}
			\left(\prod_{r=0}^{k-1}q_{i+r}\right)
			s_{i+k}
		}{
			\displaystyle
			1-\prod_{r=0}^{P-1}q_{i+r}
		}, \label{eq:EY}       \\
		 & \mathbb{E}[Y_i^2] =
		\frac{
			\sum_{k=0}^{P-1}
			\left(\prod_{r=0}^{k-1} q_{i+r}\right)\bigl(s_{i+k}\bigr)^2
		}{
			1-\prod_{r=0}^{P-1} q_{i+r}
		}\label{eq:EYSquared}
		\\
		 & \quad+
		\frac{
			2\sum_{k=0}^{P-1}
			\left(\prod_{r=0}^{k} q_{i+r}\right)
			s_{i+k}\,
			\mathbb{E}\!\left[Y_{i+k+1}\right]
		}{
			1-\prod_{r=0}^{P-1} q_{i+r}
		}.\notag
	\end{align}
\end{theorem}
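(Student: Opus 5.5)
The plan is a renewal-reward argument, using a Markov regenerative structure indexed by the schedule position of the most recent successful update. Successful receptions occur at transmission completion times. Let the successive successes happen at positions $j_1, j_2, \ldots$ (mod $P$). The positions of successive successes form a finite Markov chain $\{J_m\}$ on $\{0,\ldots,P-1\}$. The transmissions themselves are independent Bernoulli trials with success probability $p_i$ at position $i$, so this chain is irreducible on the set of positions it visits; at least one $p_i>0$ because all $\epsilon<1$. For each success, the age drops to $s_{J_m}$ at the reception instant. I define $Y$ as the inter-reception time: the time from the end of the successful transmission at position $i-1$ until the next successful reception. Given that the last success was at position $i-1$, this time is $Y_i = \sum_{k=0}^{L} s_{i+k}$, where $L$ is the index of the first success among positions $i, i+1, \ldots$. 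The area under the sawtooth during this interval is the trapezoid $s_{i-1}Y_i + Y_i^2/2$.

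\textbf{Step 1 (moments of $Y_i$).} I derive \eqref{eq:EY} by a first-step recursion over one full period. Either a success occurs within the next $P$ slots, or all $P$ fail, with probability $\prod_{r=0}^{P-1} q_{i+r}$, and the process restarts at the same position with $\sum_k s_{i+k}$ already elapsed. Equivalently, I write $\mathbb{E}[Y_i] = \sum_{k\ge 0} \Pr(\text{first } k \text{ fail}) s_{i+k}$ and group the terms periodically into a geometric series, which gives \eqref{eq:EY}.

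For the second moment I use $Y_i = s_i + \mathbf{1}\{\text{fail at } i\}Y_{i+1}$. Squaring gives
\begin{align*}
\mathbb{E}[Y_i^2] = s_i^2 + q_i\bigl(2 s_i\mathbb{E}[Y_{i+1}] + \mathbb{E}[Y_{i+1}^2]\bigr),
\end{align*}
by independence of the failure at position $i$ from later trials. Unrolling this $P$ times yields a linear equation in $\mathbb{E}[Y_i^2]$ whose solution is exactly \eqref{eq:EYSquared}. The index bookkeeping $\prod_{r=0}^{k} q_{i+r}$ arises from the factor $q_{i+k}$ multiplying $2s_{i+k}\mathbb{E}[Y_{i+k+1}]$.

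\textbf{Step 2 (renewal-reward for Markov regenerative processes).} By the ergodic theorem for the finite chain $J_m$ combined with the strong law, the limsup time-average equals
\begin{align*}
\frac{\sum_i \mu_i\, \mathbb{E}[\text{area} \mid J=i-1]}{\sum_i \mu_i\, \mathbb{E}[Y_i]},
\end{align*}
where $\mu$ is the stationary distribution of the embedded chain. The initial condition $\Delta(0)=n_1$ contributes only a transient with finite expectation, so it vanishes in the limit. The main obstacle is to show $\mu_{i-1}\propto p_{i-1}$.

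\textbf{Step 3 (the stationary distribution).} I would argue this directly: the long-run fraction of periods in which position $i-1$ is a success is exactly $p_{i-1}$, by the law of large numbers over independent periods. So the rate of successes at position $i-1$ per period is $p_{i-1}$, and the embedded stationary probabilities are $p_{i-1}/\sum_j p_j$. The normalizing constant cancels between numerator and denominator, which gives the stated formula. Finally, I would confirm that expectation and limit may be interchanged in the $\limsup$ definition. This follows from uniform integrability: the areas per cycle have finite second moments, since the $Y_i$ have geometric tails.
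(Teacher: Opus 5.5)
Your proposal is correct and follows essentially the same route as the paper: a Markov renewal-reward argument over the embedded chain of success positions, with first-step recursions for $\mathbb{E}[Y_i]$ and $\mathbb{E}[Y_i^2]$ unrolled over one period and stationary weights proportional to $p_{i-1}$. You justify the stationary distribution by a success-rate/law-of-large-numbers argument rather than the paper's balance equations, and you treat the initial transient and the interchange of limit and expectation explicitly; the paper leaves all of this implicit, so these are minor additions rather than a different approach.
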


The proof is given in Appendix \ref{app:AoICyclicClosedForm}.
In the next theorem, we give a closed-form expression for the average exponential \gls{aoi} of periodic schedules.
Note that, for the exponential \gls{aoi}, a stability condition
is needed, which ensures that the average exponential \gls{aoi} is finite.

\begin{theorem}\label{theo:ExponentialAoICyclicClosedForm}
	Let $\pi \in \Pi_D$ be a deterministic periodic schedule of period $P$.
	If the condition
	\begin{align}\label{eq:ExpoAgeExistenceCondition}
		e^{\alpha\sum_{j=0}^{P-1} s_j}\prod_{j=0}^{P-1} q_j
		<1
	\end{align}
	is satisfied, then the average exponential \gls{aoi} of this schedule is given by
	\begin{align}
		\Delta_{\exp,\mathrm{avg}} = \frac{
			\sum_{i=0}^{P-1} p_{i-1}\, e^{\alpha s_{i-1}}\left(M_i-1\right)
		}{
			\alpha \sum_{i=0}^{P-1} p_{i-1}\,\mathbb{E}[Y_i]
		},
	\end{align}
	where $\mathbb{E}[Y_i]$ is defined as in Theorem
	\ref{theo:AoICyclicClosedForm},
	\begin{align}
		M_i
		=
		\frac{
			\displaystyle \sum_{k=0}^{P-1}
			\left(\prod_{r=0}^{k-1} \rho_{i+r}\right)\, \gamma_{i+k}
		}{
			\displaystyle 1-\prod_{r=0}^{P-1} \rho_{i+r}
		},
	\end{align}
	and $\rho_i$ and $\gamma_i$ are defined as
	\begin{align}
		\rho_i  = q_i e^{\alpha s_i}, \qquad
		\gamma_i = p_i e^{\alpha s_i}.
	\end{align}
\end{theorem}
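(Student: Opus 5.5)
We follow the renewal-reward structure underlying Theorem~\ref{theo:AoICyclicClosedForm}. Successful receptions are renewal points of a Markov-renewal process whose embedded state is the schedule position (modulo $P$) at which the next transmission starts. Consider a successful reception of the update sent at position $i-1$. The age drops to $s_{i-1}$ and then grows linearly with slope one until the next success. The length of this inter-reception interval is $Y_i=\sum_{k=0}^{N-1}s_{i+k}$, where $N\geq 1$ is the number of transmissions, starting at position $i$, up to and including the first success. The cost accumulated over the interval is
\begin{align}
	\int_0^{Y_i} e^{\alpha(s_{i-1}+t)}\,dt=\frac{e^{\alpha s_{i-1}}}{\alpha}\left(e^{\alpha Y_i}-1\right).
\end{align}
Hence we need $M_i:=\mathbb{E}[e^{\alpha Y_i}]$, the moment generating function of $Y_i$ at $\alpha$.

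The embedded chain on positions has stationary distribution $\nu_i\propto p_{i-1}$. A success must occur at position $i-1$ for the next cycle to start at $i$, and in stationarity the successes at the various positions occur with relative frequencies proportional to the success probabilities $p_{i-1}$. To justify this, note that each transmission index is visited periodically, so the long-run fraction of successes occurring at position $j$ is proportional to $p_j$. This matches the weighting already used in Theorem~\ref{theo:AoICyclicClosedForm}. The Markov renewal-reward theorem then gives
\begin{align}
	\Delta_{\exp,\mathrm{avg}}=\frac{\sum_i p_{i-1}\frac{e^{\alpha s_{i-1}}}{\alpha}(M_i-1)}{\sum_i p_{i-1}\mathbb{E}[Y_i]},
\end{align}
which is the claimed form, provided $M_i<\infty$. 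The initial condition $\Delta(0)=n_1$ only affects a transient and does not change the limit, since the first cycle has finite expected cost under the same condition.

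To compute $M_i$, we condition on the first transmission. This gives the recursion
\begin{align}
	M_i=p_i e^{\alpha s_i}+q_i e^{\alpha s_i}M_{i+1}=\gamma_i+\rho_i M_{i+1}.
\end{align}
Unrolling it $P$ times and using $M_{i+P}=M_i$ yields
\begin{align}
	M_i\Bigl(1-\prod_{r=0}^{P-1}\rho_{i+r}\Bigr)=\sum_{k=0}^{P-1}\Bigl(\prod_{r=0}^{k-1}\rho_{i+r}\Bigr)\gamma_{i+k},
\end{align}
and dividing through gives the stated expression. Equivalently, writing $M_i$ as a series over $N$ gives a geometric series in $\prod_r\rho_{i+r}=e^{\alpha\sum_j s_j}\prod_j q_j$ with nonnegative terms.

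The main obstacle is rigor about finiteness. The recursion alone is only a formal identity and could produce a spurious finite solution when $M_i=\infty$. We therefore derive $M_i$ directly as the series expansion. Condition~\eqref{eq:ExpoAgeExistenceCondition} is exactly the requirement that this geometric series converges, so it guarantees $M_i<\infty$ for every $i$. Finiteness of $M_i$ also gives a finite expected reward per cycle, which the renewal-reward theorem requires to turn the $\limsup$ of expectations into the ratio of expectations.
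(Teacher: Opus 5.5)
Your proposal is correct and follows essentially the same route as the paper: the Markov renewal-reward ratio with stationary weights $\nu_i\propto p_{i-1}$, the per-interval cost $\frac{e^{\alpha s_{i-1}}}{\alpha}\bigl(e^{\alpha Y_i}-1\bigr)$, and the recursion $M_i=\gamma_i+\rho_i M_{i+1}$ unrolled over one period. Your direct geometric-series derivation of $M_i$ tightens the finiteness step, since the paper only asserts finiteness from the recursion, and $M_i=\infty$ would also formally satisfy the unrolled identity; your remark on the initial age $n_1$ covers a point the paper leaves implicit.
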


The proof is given in Appendix \ref{app:ExponentialAoICyclicClosedForm}.
We used these expressions to compare non-trivial periodic
schedules with constant schedules, in which the same fixed blocklength is used
for every transmission. While for most parameter
choices, we did not find a non-trivial
periodic schedule that outperforms the best
constant schedule, such schedules
do exist.
One example was presented in the introduction.
To gain further insight into when constant schedules are optimal,
we derive in the next section a set of
sufficient conditions, which can be easily checked,
under which a constant schedule $\pi \in \Pi_{SD}$ is optimal.

\section{Optimality Conditions with Two Blocklengths}\label{sec:OptimalityConditions}

In this section, we present one of the main
results of this paper, which consists of simple conditions
that guarantee the optimality of a constant schedule.
For ease of exposition, we focus on the simplest case of two available blocklengths with
$n_1 < n_2$ and corresponding block error probabilities $\epsilon_1 > \epsilon_2$.
Later we extend the results to the case of more than two blocklengths.
Our approach is as follows:
We first recast the problem as a semi-Markov infinite-state average-cost \gls{mdp}, where at each transmission
instance, the sender only knows the probability distribution of the current
age based on the previously chosen blocklengths and the starting state with $\Delta(0) = n_1$.
We then use a Dinkelbach-like approach to transform the average-cost
semi-\gls{mdp} into a transient infinite-state deterministic average-cost \gls{mdp}.
In the resulting average-cost \gls{mdp}, each stage cost depends on a
parameter $\lambda$ introduced by the Dinkelbach
approach. This parameter can be bounded by bounding the optimal average \gls{aoi}.
Finally, we show that for certain parameters $n_1, n_2, \epsilon_1, \epsilon_2$,
and the parameter $\lambda$ bounded as in the previous step,
any non-constant schedule can be improved
by replacing an action that differs from the
corresponding constant schedule. Repeating
this replacement for every differing action
yields the constant schedule without increasing
the total cost. Therefore, the
constant schedule is optimal.

We start with the \gls{aoi} metric. The results for
the exponential \gls{aoi} metric are very similar,
and the proofs are based on the same ideas.
As a first step, problem (\ref{eq:AoIOptimizationProblem1})
is reformulated as follows:
\begin{align}
	\Delta_{\mathrm{avg}}^* & = \inf_{\pi\in \Pi_D} \limsup_{T' \to \infty} \mathbb{E}\left[ \frac{1}{T'} \int_0^{T'} \Delta(t) dt \right].\label{eq:gObjective}     \\
	                        & = \inf_{\pi\in \Pi_D} \limsup_{T \to \infty} \frac{\mathbb{E}\left[\sum_{i=0}^{T}c(X_i,a_i)\right]}{\sum_{i=0}^{T}a_i}                 \\
	                        & = \inf_{\pi \in \Pi_D} \limsup_{T\to\infty} \frac{\sum_{i=0}^{T}\tilde{c}(b_i,a_i)}{\sum_{i=0}^{T}a_i}, \label{eq:DeterministicAoIMDP}
\end{align}
where the random variable $X_i$ denotes the age at the beginning of the $i$-th transmission
and $c(x_i,a_i)$ is the area under the age curve during the $i$-th transmission,
which is given by $c(x_i,a_i) = x_i a_i + \frac{a_i^2}{2}$.
The expected cost $\tilde{c}(b_i,a_i)$ is given by
\begin{align}
	\tilde{c}(b_i,a_i) = \sum_{x\in\mathcal{S}} b_i(x)\,c(x,a_i),
\end{align}
where $b_i$ is the belief on $X_i$ at the beginning of the $i$-th transmission and is given by
\begin{equation}
	b_i(x)\triangleq \mathbb{P}(X_i=x\mid a_0,\dots,a_{i-1}),\qquad x\in\mathcal{S}.
\end{equation}
Since the model is blind (no observations), the belief update is purely a
prediction step induced by the transition kernel. For an action $a_i\in \{n_1,n_2\}$,
the next belief is given by
\begin{equation}
	b_{i+1}(x')=\sum_{x\in\mathcal{S}} b_i(x)\,P(x'\mid x,a_i),\qquad x'\in\mathcal{S},
\end{equation}
with the following transition kernel:
\begin{align}
	P(x'\mid x,a_i) =
	\begin{cases}
		\bar{\epsilon}(a_i), & \text{if } x' = a_i.     \\
		\epsilon(a_i),       & \text{if } x' = x + a_i. \\
		0,                   & \text{otherwise}.
	\end{cases}
\end{align}
We would like to point out that the semi-\gls{mdp} in (\ref{eq:DeterministicAoIMDP})
is deterministic, has an infinite state space and does not have any recurrence
structure, because each state is visited exactly once. Therefore,
we cannot use the standard approach of Bellman optimality equations and value iteration
to solve the problem. Instead, we transform the problem into a standard average-cost
\gls{mdp} and then bound the difference in the total cost
when one action is changed.
By using a Dinkelbach approach similar to \cite{yaoBattleRateError2020}, %\cite{liSamplingAchieveGoal2024a}
the semi-\gls{mdp} can be transformed into a standard
average-cost \gls{mdp}:
\begin{lemma}\label{lem:DinkelbachTrafo}
	The semi-\gls{mdp} defined in (\ref{eq:DeterministicAoIMDP})
	is equivalent to the following optimization problem:
	\begin{align}
		\Delta_{\mathrm{avg}}^* \!=\! \inf \left\{\! \lambda\!: \!\!\inf_{\pi\in \Pi_D}\! \limsup_{T\to\infty} \frac{1}{T} \sum_{i=0}^{T} (\tilde{c}(b_i,a_i)
		- \lambda a_i) \leq 0\right\},\label{eq:TransformedDinkelbachProblem}
	\end{align}
	and for the $\lambda^*$ that solves the problem above, we have $\Delta_{\mathrm{avg}}^* = \lambda^*$.
\end{lemma}
The proof is given in Appendix \ref{app:DinkelbachTrafo}.
From now on we focus on the transformed problem in Lemma \ref{lem:DinkelbachTrafo},
which is visualized in Fig. \ref{fig:TransientMDPChain}.
\begin{figure}
	\centering
	\includegraphics[width=1.0\linewidth]{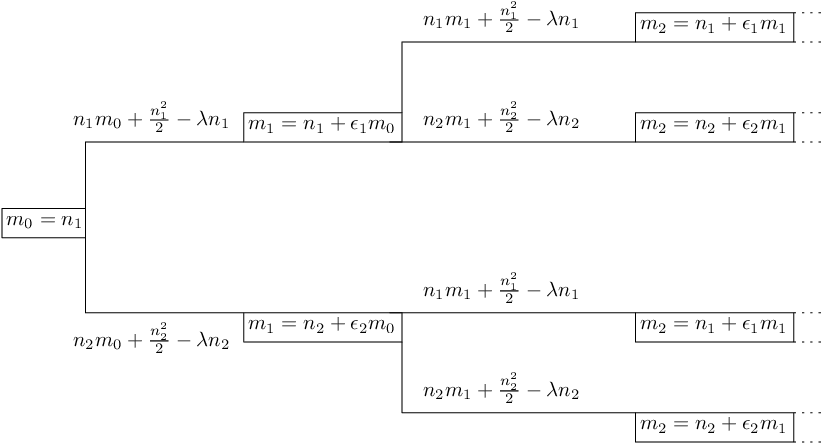}
	\caption{Transient deterministic chain of the belief MDP.}
	\label{fig:TransientMDPChain}
\end{figure}
The state $m_i$ of the \gls{mdp} is given by the expected age
at the beginning of the $i$-th transmission:
\begin{align}
	m_i \triangleq \sum_{x\in\mathcal{S}} b_i(x)x.
\end{align}
We denote the cost terms $a_i m_i +\frac{a_i^2}{2} - \lambda a_i$
as \gls{dpsc}.
Next, we investigate how the evolution
of the expected age $m_{i+1}$ depends on the current
expected age $m_i$ and the chosen action $a_i$.
In Lemma~\ref{lem:boundedcostLemmaAoI} we will show
that the progression of the expected age can be written as
$m_{i+1} = f_e(m_i), e \in \{1,2\}$, where
we have introduced the two maps $f_1(m)$ and $f_2(m)$ as
\begin{align}\label{eq:fAoIMaps}
	f_1(m)  = n_1 + \epsilon_1 m, \quad
	f_2(m)  = n_2 + \epsilon_2 m.
\end{align}
Based on $m_i$, the function $f_1$ gives the expected age $m_{i+1}$
when choosing the smaller blocklength $n_1$, while the
function $f_2$ gives the
expected age $m_{i+1}$ when choosing the larger blocklength $n_2$.
Details on this can
be found in Appendix \ref{app:boundedcostLemmaAoI}.
In the next lemma, we give asymptotic
bounds on the expected age $m_i$:

\begin{lemma}\label{lem:boundedcostLemmaAoI}
	For each $\delta > 0$, there exists a positive integer $i_0$
	large enough such that for all $i \geq i_0$, the expected age
	$m_i$ is bounded as
	\begin{align}
		m_L - \delta \leq \underbrace{\sum_{x\in \mathcal{S}} b_i(x)x}_{m_i} \leq m_U,
	\end{align}
	with
	\begin{align}
		m_L = \min \left\{ \frac{n_1}{\bar{\epsilon}_1}, \frac{n_2}{\bar{\epsilon}_2} \right\}, \quad
		m_U = \max \left\{ \frac{n_1}{\bar{\epsilon}_1}, \frac{n_2}{\bar{\epsilon}_2} \right\}.
	\end{align}
\end{lemma}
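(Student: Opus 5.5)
First I will establish the recursion $m_{i+1}=f_{e}(m_i)$ from the belief update. Given the current belief $b_i$ and action $a_i$, the kernel sends $x$ to $a_i$ with probability $\bar{\epsilon}(a_i)$ and to $x+a_i$ with probability $\epsilon(a_i)$. Hence
\begin{align}
	m_{i+1}=\sum_{x}b_i(x)\bigl(\bar{\epsilon}(a_i)a_i+\epsilon(a_i)(x+a_i)\bigr)=a_i+\epsilon(a_i)m_i .
\end{align}
This is exactly $f_1(m_i)$ when $a_i=n_1$ and $f_2(m_i)$ when $a_i=n_2$, as in (\ref{eq:fAoIMaps}). The initial state is $m_0=n_1$, since $\Delta(0)=n_1$ and no update has yet been in transit.

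The key observation is that each map $f_e$ is an affine contraction with factor $\epsilon_e<1$ and fixed point $m_e^\star=n_e/\bar{\epsilon}_e$. Write $\epsilon_{\max}=\epsilon_1=\max\{\epsilon_1,\epsilon_2\}<1$ and let $I=[m_L,m_U]$. Then $I$ is forward invariant under both maps. Indeed, for $m\in I$ we have $f_e(m)-m_e^\star=\epsilon_e(m-m_e^\star)$, so $f_e(m)$ lies between $m$ and $m_e^\star$, and both of these points lie in $I$.

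For the lower bound, define $d_i=\max\{0,\,m_L-m_i\}$, the distance of $m_i$ below the interval. Suppose $m_i<m_L\le m_e^\star$. Then
\begin{align}
	m_L-m_{i+1}\le m_e^\star-m_{i+1}=\epsilon_e(m_e^\star-m_i).
\end{align}
This bound is slightly loose, so I will instead use the identity
\begin{align}
	m_{i+1}-m_L=(f_e(m_i)-f_e(m_L))+(f_e(m_L)-m_L).
\end{align}
The second term is $\ge 0$ because $f_e(m_L)\in I$ by invariance. The first term equals $\epsilon_e(m_i-m_L)\ge -\epsilon_{\max}d_i$. Therefore $d_{i+1}\le\epsilon_{\max}d_i$, and $d_i\le\epsilon_{\max}^i d_0\to 0$ regardless of the action sequence. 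Choosing $i_0$ with $\epsilon_{\max}^{i_0}d_0\le\delta$ gives $m_i\ge m_L-\delta$ for all $i\ge i_0$.

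The same argument applied from above gives $m_i\le m_U+\epsilon_{\max}^i\max\{0,m_0-m_U\}$. However, the statement has no $\delta$ slack on the upper side, so I will check the start state. Since $n_1<n_1/\bar{\epsilon}_1\le m_U$, we have $m_0=n_1\le m_U$. By the same invariance argument, the set $(-\infty,m_U]$ is forward invariant: $f_e(m)\le f_e(m_U)\le m_U$ because $f_e$ is increasing and maps $m_U$ into $I$. Hence $m_i\le m_U$ holds for every $i$, not merely asymptotically.

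The only step needing care is the uniformity over arbitrary (non-stationary) action sequences. This is why I will argue via invariance of the interval and a one-step contraction of the distance to it, rather than by iterating a fixed map. There is no genuine obstacle beyond verifying the belief-mean recursion, which requires only that the belief is a probability measure with finite mean. That holds inductively, since the support of $b_i$ is finite.
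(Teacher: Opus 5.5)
Your proof is correct and follows essentially the same route as the paper. The shared steps are the recursion $m_{i+1}=a_i+\epsilon(a_i)m_i$, the induction giving $m_i\le m_U$ from $m_0=n_1\le m_U$ and $f_e(m_U)\le m_U$, and the one-step contraction $\max\{m_L-m_{i+1},0\}\le\epsilon(a_i)\max\{m_L-m_i,0\}$, which yields convergence uniformly over all action sequences; your forward-invariance argument for $[m_L,m_U]$ just spells out the justification of that contraction step, which the paper asserts without proof.
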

The proof is based on the fixed points
of $f_1$ and $f_2$ and can be found in Appendix \ref{app:boundedcostLemmaAoI}.
Note that from Lemma \ref{lem:boundedcostLemmaAoI}, it immediately follows that
$\tilde{c}(b_i,a_i)$ is uniformly bounded from below and above by positive constants.
Also note that the functions $f_1(m)$ and $f_2(m)$ cross at most once at
\begin{align}
	m = \frac{n_2-n_1}{\epsilon_1-\epsilon_2} \triangleq m^*.
\end{align}
Furthermore, the function $f_2(m)-f_1(m)= (n_2-n_1) + (\epsilon_2-\epsilon_1)m$ is strictly decreasing in $m$, and it follows that
\begin{align}
	 & f_1(m) < f_2(m) \text{ for } m < m^*, \\
	 & f_1(m) > f_2(m) \text{ for } m > m^*.
\end{align}
Depending on the current expected age $m_{i}$, we have now characterized
which action is better in terms of the expected age at the beginning of the next transmission.
Because each \gls{dpsc} is strictly increasing in $m_{i}$,
there is a trade-off between choosing actions that minimize the total cost in the long run
due to the progression of the expected ages $m_{i+1}, m_{i+2}, \ldots$ and actions
that minimize the
immediate \gls{dpsc} of the action itself.
To further analyze this trade-off, we have to bound
the parameter $\lambda$ which occurs in the \gls{dpsc}. According
to Lemma \ref{lem:DinkelbachTrafo},
$\lambda$ can be bounded by bounding the average age $\Delta_{\mathrm{avg}}^*$.
This is done in the following lemma:
\begin{lemma}\label{lem:AoIBounds}
	Let $\Delta_{\mathrm{avg}}^*$ be the minimal average \gls{aoi} achievable by any
	schedule $\pi \in \Pi_D$.
	Then the following bounds hold:
	\begin{align}
		 & n_1+\max \Bigg\{\frac{1}{2}n_1 \frac{1+\epsilon_2}{\bar{\epsilon}_2} , \\
		 & \min_{\substack{\theta_1,\theta_2\ge 0                                 \\
				\frac{n_1}{\bar{\epsilon}_1}\theta_1+
				\frac{n_2}{\bar{\epsilon}_2}\theta_2\le 1}}
		\Bigg\{
		n_1(n_2-n_1)\theta_2
		+
		\frac{1}{2(\theta_1+\theta_2)}
		\Bigg\} \Bigg\}                                                           \\
		 & \leq \Delta_{\mathrm{avg}}^*
		\leq \min \left\{ \frac{1}{2}n_1 \frac{1+\epsilon_1}{\bar{\epsilon}_1}+n_1,
		\frac{1}{2}n_2 \frac{1+\epsilon_2}{\bar{\epsilon}_2} + n_2 \right\}. \label{eq:lambdaUpperBound}
	\end{align}
\end{lemma}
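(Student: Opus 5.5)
The upper bound in \eqref{eq:lambdaUpperBound} comes from evaluating the two constant schedules with Theorem~\ref{theo:AoICyclicClosedForm}. The lower bound needs two separate lower estimates, one for each term inside the $\max$, and each must hold for every deterministic schedule.

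\emph{Upper bound.} Since $\Pi_{SD}\subseteq\Pi_D$, it suffices to evaluate the two constant schedules. I apply Theorem~\ref{theo:AoICyclicClosedForm} with $P=1$, $s_0=n_j$, $p=\bar\epsilon_j$, $q=\epsilon_j$. This gives
\begin{align}
\mathbb{E}[Y]=\frac{n_j}{\bar\epsilon_j},\qquad \mathbb{E}[Y^2]=\frac{n_j^2}{\bar\epsilon_j}+\frac{2\epsilon_j n_j^2}{\bar\epsilon_j^2}=\frac{n_j^2(1+\epsilon_j)}{\bar\epsilon_j^2}.
\end{align}
Hence $\Delta_{\mathrm{avg}}=n_j+\mathbb{E}[Y^2]/(2\mathbb{E}[Y])=n_j+\tfrac12 n_j\frac{1+\epsilon_j}{\bar\epsilon_j}$. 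Taking the minimum over $j$ gives the upper bound.

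\emph{First lower bound.} I split the age as $\Delta(t)=a_{\ell(t)}+W(t)$. Here $a_{\ell(t)}\ge n_1$ is the blocklength of the last successful update, and $W(t)$ is the time elapsed since that update was received. The initial condition $\Delta(0)=n_1$ fits this form with $W(0)=0$. Let $d_i$ be the expected value of $W$ at the start of transmission $i$. Then the expected area of $W$ during transmission $i$ is $a_id_i+a_i^2/2$, and
\begin{align}
d_{i+1}=\epsilon(a_i)(d_i+a_i)\ge \epsilon_2(d_i+n_1).
\end{align}
Iterating this inequality gives $d_i\ge \epsilon_2n_1/\bar\epsilon_2-\delta$ for all $i\ge i_0(\delta)$. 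In addition, $a_i^2/2\ge n_1a_i/2$. Dividing the sums by $\sum_i a_i\to\infty$ makes the finitely many initial terms irrelevant. It also shows that the long-run average of $W$ is at least $n_1/2+n_1\epsilon_2/\bar\epsilon_2-\delta$. Adding the baseline $n_1$ and letting $\delta\to0$ yields $n_1+\tfrac12 n_1\frac{1+\epsilon_2}{\bar\epsilon_2}$.

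\emph{Second lower bound.} I interpret $\theta_j$ as the long-run rate, per unit time, of successful deliveries using blocklength $n_j$. For a finite horizon $T$, let $N_j(T)$ be the number of transmissions with $n_j$. The expected number of successes is $\bar\epsilon_jN_j(T)$, and the time constraint is $n_1N_1+n_2N_2\le T'$. Setting $\theta_j=\bar\epsilon_jN_j/T'$ gives exactly the feasibility constraint $\frac{n_1}{\bar\epsilon_1}\theta_1+\frac{n_2}{\bar\epsilon_2}\theta_2\le1$.

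The age is then bounded from below by three contributions.
\begin{itemize}
\item[(i)] The baseline $n_1$ is always present.
\item[(ii)] After each successful $n_2$ update, the baseline is $n_2$ instead of $n_1$, and it stays there at least until the next reception. The next transmission lasts at least $n_1$, so each such success adds area at least $n_1(n_2-n_1)$. This gives the term $n_1(n_2-n_1)\theta_2$.
\item[(iii)] The elapsed-time part $W$ forms a sawtooth whose inter-delivery gaps $Y_k$ satisfy $\sum_k Y_k\approx T'$. By Cauchy--Schwarz (Jensen), $\frac1{T'}\sum_k Y_k^2/2\ge \frac{T'}{2\,\#\text{deliveries}}$. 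In expectation, Jensen applied to $x\mapsto1/x$ gives $\frac{1}{2(\theta_1+\theta_2)}$.
\end{itemize}
Minimizing the sum of (i)--(iii) over all feasible $(\theta_1,\theta_2)$ gives the stated bound, and taking the maximum of the two lower bounds completes the proof.

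\emph{Main obstacle.} A deterministic schedule need not have convergent empirical frequencies, so the rates $\theta_j$ are not well defined in general. I would handle this as follows. Pass to a sequence of horizons $T'_k$ along which the $\limsup$ is attained. By compactness of the feasible set, choose a further subsequence along which $(\theta_1(T'_k),\theta_2(T'_k))$ converges. The two lower estimates then hold along this subsequence. The bookkeeping of expectations in (iii) needs care, since the number of deliveries is random; one must apply Jensen and Cauchy--Schwarz in the correct order.
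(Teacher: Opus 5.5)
Your proposal is correct. The upper bound and the second lower bound follow the paper's proof essentially step for step:
\begin{itemize}
\item the upper bound is the better of the two constant schedules;
\item the $\theta_j$ are expected success rates, and feasibility follows from $n_1N_1+n_2N_2\le T'$;
\item the baseline part is bounded by $n_1+n_1(n_2-n_1)\theta_2$;
\item Jensen is applied twice (to the sample average of the squared gaps, then to $x\mapsto 1/x$ in expectation) to obtain $\frac{1}{2(\theta_1+\theta_2)}$;
\item non-convergent frequencies are handled by a Bolzano--Weierstrass subsequence.
\end{itemize}
The paper chooses the subsequence attaining the $\liminf$ of $Z_{T'}$ rather than the $\limsup$. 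Either choice is enough for a lower bound.

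Where you genuinely differ is the first lower bound. The paper only asserts that $n_1+\frac12 n_1\frac{1+\epsilon_2}{\bar{\epsilon}_2}$, the average AoI of a fictitious constant schedule with tuple $(n_1,\epsilon_2)$, lower-bounds every deterministic schedule. It never argues why this dominance holds. You prove it directly. You write $\Delta(t)$ as the last received blocklength plus the elapsed time $W(t)$, and use the comparison recursion $d_{i+1}=\epsilon(a_i)(d_i+a_i)\ge\epsilon_2(d_i+n_1)$. You then combine $d_i\to$ (at least) $\epsilon_2 n_1/\bar{\epsilon}_2$ with $a_i^2/2\ge n_1a_i/2$, normalizing by the actual elapsed time $\sum_i a_i$. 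That normalization is what makes the termwise bounds legitimate. Your route is self-contained and supplies the monotonicity justification the paper leaves implicit. The paper's phrasing is shorter but needs a coupling or monotonicity argument that it does not give.
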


The proof is given in Appendix \ref{app:AoIBounds}.
We now state the first optimality condition:
\begin{theorem}\label{thm:chooseN1OptimalUnified}
	Let
	\begin{align}
		m^* \!\!\triangleq \frac{n_2-n_1}{\epsilon_1-\epsilon_2},
		\,\,\,\delta_n \triangleq n_2\!-n_1,
		\,\,\,\delta_{\epsilon} \triangleq \epsilon_1\!-\epsilon_2,
	\end{align}
	and define the bound
	\begin{align}
		U & \!\triangleq \min \left\{
		\frac{1}{2}n_1\frac{1+\epsilon_1}{\bar{\epsilon}_1}\!+n_1,\,
		\frac{1}{2}n_2\frac{1+\epsilon_2}{\bar{\epsilon}_2}+n_2
		\right\}.
	\end{align}
	Assume one of the following two regimes holds:

	\medskip
	\noindent\textbf{(i) Above-threshold regime:}
	\begin{align}
		 & m_L > m^* \qquad \text{and} \notag                                                                                           \\
		 & U<\frac{n_1+n_2}{2}+\frac{n_1}{\bar{\epsilon}_1} + m_U\left( 1-\frac{n_1\delta_{\epsilon}}{\bar{\epsilon}_1\delta_n}\right).
	\end{align}

	\medskip
	\noindent\textbf{(ii) Below-threshold regime:}
	\begin{align}
		 & m_U < m^* \qquad \text{and} \notag                                                                                           \\
		 & U<\frac{n_1+n_2}{2}+\frac{n_1}{\bar{\epsilon}_1} + m_L\left( 1-\frac{n_1\delta_{\epsilon}}{\bar{\epsilon}_1\delta_n}\right).
	\end{align}
	Then the constant schedule that always chooses \(n_1\) is optimal among all deterministic schedules
	for the time-average age criterion.
\end{theorem}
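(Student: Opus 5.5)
The plan is an exchange argument on the transformed problem of Lemma~\ref{lem:DinkelbachTrafo}. Fix $\lambda$ at its optimal value $\lambda^*=\Delta_{\mathrm{avg}}^*$. By Lemma~\ref{lem:AoIBounds}, $\lambda^*\le U$. Take any deterministic schedule and a horizon $T$. I will show that replacing an $n_2$ action by $n_1$ never increases the accumulated \gls{dpsc} $\sum_{i\le T}(a_i m_i + a_i^2/2 - \lambda a_i)$, provided the replaced position satisfies $i\ge i_0$ and every later action up to $T$ is already $n_1$. Here $i_0$ is taken from Lemma~\ref{lem:boundedcostLemmaAoI}, so that $m_i\in[m_L-\delta,m_U]$. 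Processing the differing positions in $[i_0,T]$ backwards from $T$ then turns the truncated schedule into ``arbitrary prefix, then all $n_1$''. The prefix of length $i_0$ has bounded cost, because $\tilde c$ is uniformly bounded. The effect of the prefix on later expected ages decays geometrically, since $m_{i+1}$ is affine in $m_i$ with slope at most $\epsilon_1<1$. Both contributions are $o(T)$ after dividing by $T$. Consequently the $\limsup$ average \gls{dpsc} of the constant-$n_1$ schedule is at most that of the original schedule. Since the latter is $\le 0$ at $\lambda^*$ for an optimal schedule, the constant $n_1$ schedule is feasible for~\eqref{eq:TransformedDinkelbachProblem} at $\lambda^*$, and hence optimal.

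The core computation is the cost change from a single swap at position $i$, with all later actions equal to $n_1$. The immediate term changes by
\begin{align}
(n_1-n_2)(m_i-\lambda)+\tfrac{n_1^2-n_2^2}{2}=-\delta_n\Bigl(m_i+\tfrac{n_1+n_2}{2}-\lambda\Bigr).
\end{align}
The next expected age changes by $f_1(m_i)-f_2(m_i)=\delta_\epsilon m_i-\delta_n$, using the maps~\eqref{eq:fAoIMaps}. Because every later step applies $f_1$, this perturbation is multiplied by $\epsilon_1^{j-i-1}$ at step $j$ and weighted by $a_j=n_1$ in the cost. The future change is therefore
\begin{align}
(\delta_\epsilon m_i-\delta_n)\sum_{k\ge 0} n_1\epsilon_1^k=(\delta_\epsilon m_i-\delta_n)\frac{n_1}{\bar\epsilon_1},
\end{align}
up to a truncation remainder at $T$ that vanishes geometrically. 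The total change is nonpositive if and only if
\begin{align}
\lambda< \frac{n_1+n_2}{2}+\frac{n_1}{\bar\epsilon_1}+m_i\Bigl(1-\frac{n_1\delta_\epsilon}{\bar\epsilon_1\delta_n}\Bigr),
\end{align}
and this is exactly the shape of the stated conditions, with $\lambda\le U$.

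It remains to make this hold uniformly in $m_i\in[m_L-\delta,m_U]$. The right-hand side is affine in $m_i$, so it suffices to check the worst endpoint. In regime (ii), $m_U<m^*$. I expect the relevant minimum to be at $m_L$ there, and at $m_U$ in regime (i), where $m_L>m^*$. This endpoint choice matches the conditions as stated, but I need to verify the sign of the slope $1-\frac{n_1\delta_\epsilon}{\bar\epsilon_1\delta_n}$ in each regime. If the sign does not match, I would instead use the extra information carried by the threshold relation: the sign of $\delta_\epsilon m_i-\delta_n$ is fixed in each regime. The strict inequality in the hypothesis absorbs the $\delta$ slack in the lower bound of Lemma~\ref{lem:boundedcostLemmaAoI} by taking $\delta$ small.

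I expect the main obstacle to be the bookkeeping of the limit. The $m_i$ seen at later swaps depend on earlier swaps, so I must check that after the backward swaps the visited $m_i$ still lie in $[m_L-\delta,m_U]$. This holds because both maps send $[m_L,m_U]$ into itself, and the fixed points of $f_1$ and $f_2$ are the endpoints. I must also handle schedules with infinitely many $n_2$ actions by truncating at $T$ before swapping, and by controlling the boundary terms uniformly in $T$.
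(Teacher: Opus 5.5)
Your proposal is correct and follows essentially the same route as the paper. Both arguments use the Dinkelbach reformulation and replace $n_2$ by $n_1$ one position at a time, working backward so that every later action is already $n_1$. Both then take the geometric tail factor $n_1/\bar\epsilon_1$, use monotonicity in $\lambda$ to insert $U$, and evaluate the affine bound at its worst endpoint. The only difference is in the truncation: you sum the geometric remainders over all swaps to get an $O(1)$ total, where the paper uses a fixed cutoff $M$ with the correction $\delta_M$.

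The slope check you left open resolves as you guessed. Since $1-\frac{n_1\delta_\epsilon}{\bar\epsilon_1\delta_n}=\frac{\delta_\epsilon}{\delta_n}\bigl(m^*-\frac{n_1}{\bar\epsilon_1}\bigr)$ and $m_L\le \frac{n_1}{\bar\epsilon_1}\le m_U$, the slope is negative in regime (i) and positive in regime (ii). The worst endpoints are therefore $m_U$ in regime (i) and $m_L$ in regime (ii), matching the stated conditions.
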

The proof can be found in Appendix
\ref{app:chooseN1OptimalUnified}.
So far we have derived conditions under which always choosing $n_1$ is optimal. By using the same approach and reversing the lower and upper bounds,
we can obtain corresponding conditions under
which always choosing $n_2$ is optimal:

\begin{theorem}\label{thm:chooseN2OptimalUnified}
	Let $m^*$, $\delta_n$ and $\delta_{\epsilon}$
	be defined as in
	Theorem \ref{thm:chooseN1OptimalUnified}
	and define the bound
	\begin{align}
		L & \triangleq n_1+\max \Bigg\{\frac{1}{2}n_1 \frac{1+\epsilon_2}{\bar{\epsilon}_2} , \\
		  & \min_{\substack{\theta_1,\theta_2\ge 0                                            \\
				\frac{n_1}{\bar{\epsilon}_1}\theta_1+
				\frac{n_2}{\bar{\epsilon}_2}\theta_2\le 1}}
		\Bigg\{
		n_1(n_2-n_1)\theta_2
		+
		\frac{1}{2(\theta_1+\theta_2)}
		\Bigg\}. \Bigg\}
	\end{align}
	Assume that $m_L > m^*$ and
	\begin{align}
		L> \frac{n_1+n_2}{2} + \frac{n_2}{\bar{\epsilon}_2} + m_L \left( 1-\frac{n_2\delta_{\epsilon}}{\bar{\epsilon}_2\delta_n}\right).
	\end{align}
	Then the constant schedule that always chooses \(n_2\) is optimal
	among all deterministic schedules
	for the time-average age criterion.
\end{theorem}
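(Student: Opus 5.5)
The plan mirrors the exchange argument behind Theorem~\ref{thm:chooseN1OptimalUnified}, with the roles of the two blocklengths swapped and the upper bound $U$ on $\lambda$ replaced by the lower bound $L$ from Lemma~\ref{lem:AoIBounds}. By Lemma~\ref{lem:DinkelbachTrafo}, it suffices to show the following. With $\lambda=\lambda^*=\Delta_{\mathrm{avg}}^*$, take any deterministic schedule that uses $n_1$ at some stage $i$. Replacing that action by $n_2$ must not increase the long-run Dinkelbach cost $\limsup_T \frac1T\sum_i(a_im_i+\frac{a_i^2}{2}-\lambda a_i)$. Since only the average matters, finitely many initial stages are irrelevant. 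By Lemma~\ref{lem:boundedcostLemmaAoI}, for $i\ge i_0$ we may therefore assume $m_i\in[m_L-\delta,m_U]$. Because $m_L>m^*$, we may choose $\delta$ small enough that $m_i>m^*$, hence $f_1(m_i)>f_2(m_i)$ for all these stages.

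\textbf{Step 1 (cost change of one swap).} Suppose stage $i$ is switched from $n_1$ to $n_2$ and all later actions are kept. The immediate DPSC increases by
\[
\delta_n m_i+\delta_n\tfrac{n_1+n_2}{2}-\lambda\delta_n .
\]
The next expected age drops by $D\triangleq f_1(m_i)-f_2(m_i)=\delta_\epsilon m_i-\delta_n>0$. Since the maps $f_e$ are affine with slopes $\epsilon(a)$, this drop propagates: $m_{i+k}$ is lowered by $D\prod_{r=1}^{k-1}\epsilon(a_{i+r})$. Stage $i+k$ therefore saves $a_{i+k}D\prod_{r=1}^{k-1}\epsilon(a_{i+r})$. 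The total saving is $D\cdot S$ with
\[
S=\sum_{k\ge1}a_{i+k}\prod_{r=1}^{k-1}\epsilon(a_{i+r}),
\]
which is the expected time from stage $i+1$ until the first success. The key estimate is a lower bound on $S$. Writing $S=a_{i+1}+\epsilon(a_{i+1})S'$ and using induction or a fixed-point argument as in Lemma~\ref{lem:boundedcostLemmaAoI}, I would show $S\ge\min\{n_1/\bar\epsilon_1,n_2/\bar\epsilon_2\}=m_L$. If the swaps are performed so that the tail is already all-$n_2$, the sharper value $S=n_2/\bar\epsilon_2$ is available.

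\textbf{Step 2 (sign condition).} The swap is non-increasing in cost if
\[
\lambda\;\ge\;\tfrac{n_1+n_2}{2}+m_i+S-\tfrac{\delta_\epsilon}{\delta_n}m_iS .
\]
Define $g(x,y)=x+y-\frac{\delta_\epsilon}{\delta_n}xy$. Then $\partial_x g=1-y/m^*<0$ for $y>m^*$, and symmetrically $\partial_y g<0$ for $x>m^*$. So $g$ is decreasing in each argument on the relevant region, and its maximum is attained at the smallest admissible values. Plugging in $S=n_2/\bar\epsilon_2$ and $m_i\ge m_L$ gives exactly the right-hand side
\[
\frac{n_1+n_2}{2}+\frac{n_2}{\bar\epsilon_2}+m_L\Bigl(1-\frac{n_2\delta_\epsilon}{\bar\epsilon_2\delta_n}\Bigr).
\]
The hypothesis, together with $\lambda^*\ge L$ from Lemma~\ref{lem:AoIBounds}, then ensures every swap is weakly improving. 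The slack $\delta$ is absorbed by the strict inequality.

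\textbf{Step 3 (iteration).} The obstacle is that a schedule may contain infinitely many $n_1$ actions, so single swaps must be turned into a statement about the limsup average. I would first handle schedules truncated at horizon $T$. Swap the differing actions from the last one backwards, so that each swap sees an all-$n_2$ tail; this justifies $S=n_2/\bar\epsilon_2$ up to a boundary term $O(1)$, which matters only at the horizon. This yields $\sum_{i\le T}(\text{cost of }\pi)\ge\sum_{i\le T}(\text{cost of all-}n_2)-O(1)$. Dividing by $T$ and taking the limsup shows that the constant $n_2$ schedule attains Dinkelbach cost $\le0$ at $\lambda^*$. By Lemma~\ref{lem:DinkelbachTrafo} it is then optimal. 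Controlling the boundary term is where I expect the main technical care to be needed. The uniform bounds of Lemma~\ref{lem:boundedcostLemmaAoI} should make it bounded independently of $T$.
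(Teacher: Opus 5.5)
Your proposal is correct and is essentially the paper's proof. Both swap $n_1\to n_2$ from the last differing index backwards, so each swap sees an all-$n_2$ tail and picks up the factor $n_2/\bar{\epsilon}_2$; both use $\lambda^*\ge L$ from Lemma~\ref{lem:AoIBounds}; and both use monotonicity in $m_t$ to put the worst case at $m_t\to m_L$ (your $\partial_x g<0$ is the paper's positive slope of $\Psi_L$). The only difference is horizon bookkeeping: you bound the total truncation loss by $\sum_{t\le T} n_2\,\delta_\epsilon m_U\,\epsilon_2^{T-t}/\bar{\epsilon}_2\le n_2\delta_\epsilon m_U/\bar{\epsilon}_2^{2}$, while the paper uses a cutoff $T-M$ with the correction $\delta_M=\epsilon_2^M/\bar{\epsilon}_2$, and both work.
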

The proof is similar to the proof of Theorem \ref{thm:chooseN1OptimalUnified} and can be found in Appendix
\ref{app:chooseN2OptimalUnified}.

\begin{remark}[No below-threshold regime for the constant \(n_2\) schedule]
	Assume the below-threshold condition $m_U<m^*$ holds.
	Then the constant schedule that always chooses \(n_2\) cannot be optimal.
	This can be seen by noting that the
	average age of a constant schedule with blocklength $n_e$ is given
	by
	\begin{align}
		J_e
		=
		n_e+\frac{1}{2}n_e\frac{1+\epsilon_e}{\bar{\epsilon}_e}
		=
		\frac{n_e}{\bar{\epsilon}_e}+\frac{n_e}{2}.
	\end{align}
	Some simple calculations show that $m_U < m^*$
	implies $J_2 - J_1 > 0$. This shows that
	the constant schedule with $n_2$ cannot be optimal in
	this regime.
\end{remark}

Numerical evaluations of Theorems
\ref{theo:AoICyclicClosedForm} and
\ref{theo:ExponentialAoICyclicClosedForm} indicate
that although constant schedules are generally not optimal
for the average \gls{aoi} metric, the gains that
can be achieved using non-constant schedules
are relatively small.
In comparison,
for the exponential age metric,
the gains that can be achieved using non-constant schedules
can be significant for certain parameter regimes.
We therefore
extend the previous results for the exponential
\gls{aoi} criterion
to better understand for which parameter regions
constant schedules are optimal.
The steps are similar to the average age case,
but with a different cost term in the corresponding
\gls{mdp} problem and with different equations
governing the evolution of the
expected exponential age.
The area under the curve of the exponential
age penalty during one transmission is given by
\begin{align}
	c_{\exp}(x_i,a_i) = \int_{0}^{a_i} e^{\alpha (x_i+t)} dt = \frac{1}{\alpha}e^{\alpha x_i}(e^{\alpha a_i}-1),
\end{align}
where $x_i$ denotes the age at the beginning of the transmission and
$a_i \in \{n_1, n_2\}$ is the chosen blocklength.
Hence, the cost $\tilde{c}_{\exp}(b_i,a_i)$ in our
infinite-state average-cost semi-\gls{mdp} is given by
\begin{align}
	\tilde{c}_{\exp}(b_i,a_i) = \frac{e^{\alpha a_i}-1}{\alpha} \sum_{x \in \mathcal{S}} b_i(x) e^{\alpha x}.
\end{align}

Following the same lines as in the proofs
of Theorems \ref{thm:chooseN1OptimalUnified} and \ref{thm:chooseN2OptimalUnified},
we obtain the following sufficient conditions for the optimality
of constant schedules for the exponential age criterion:
\begin{theorem}\label{thm:chooseN1OptimalExpUnified}
	Let
	\begin{align}
		 & K_e \triangleq \frac{e^{\alpha n_e}-1}{\alpha},\!\! \quad e\in\{1,2\}, \\
		 & \beta_e \triangleq \bar{\epsilon}_ee^{\alpha n_e},\!\! \quad
		\rho_e \triangleq \epsilon_e e^{\alpha n_e}, \!\!\quad e\in\{1,2\},
	\end{align}
	and define the following quantities:
	\begin{align}
		z^* \triangleq \frac{\beta_2-\beta_1}{\rho_1-\rho_2},
		\qquad
		\delta_n \triangleq n_2-n_1,
	\end{align}

	\begin{align}
		U_{\exp}
		\triangleq \min \left\{ \frac{K_1 \beta_1}{n_1 \bar{\rho}_1},
		\frac{K_2 \beta_2}{n_2 \bar{\rho}_2}\right\}
	\end{align}

	\begin{align}
		z_L = \min \left\{ \frac{\beta_1}{\bar{\rho}_1}
		, \frac{\beta_2}{\bar{\rho}_2}\right\}, \quad
		z_U =\max \left\{ \frac{\beta_1}{\bar{\rho}_1}
		, \frac{\beta_2}{\bar{\rho}_2}\right\}
	\end{align}
	If $\rho_1 < \rho_2 < 1$ or if
	$\rho_2 < \rho_1 < 1$ and one of the following two regimes holds:

	\medskip
	\noindent\textbf{(i) Above-threshold regime:}
	\begin{align}
		  & z_L > z^* \quad\text{and} \notag                                              \\
		U_{\exp}
		< &
		\frac{1}{\delta_n} \min_{z \in [z_L, z_U]} \Bigg\{ (K_2-K_1)z                     \\
		  & - \frac{K_1}{\bar{\rho}_1}\Big(\beta_1-\beta_2+(\rho_1-\rho_2)z\Big) \Bigg\}.
	\end{align}

	\medskip
	\noindent\textbf{(ii) Below-threshold regime:}
	\begin{align}
		  & z_U < z^* \quad\text{and}        \notag                                       \\
		U_{\exp}
		< &
		\frac{1}{\delta_n} \min_{z \in [z_L, z_U]} \Bigg\{ (K_2-K_1)z                     \\
		  & - \frac{K_1}{\bar{\rho}_1}\Big(\beta_1-\beta_2+(\rho_1-\rho_2)z\Big) \Bigg\}.
	\end{align}
	Then the constant schedule that always chooses $n_1$ is optimal
	among all deterministic schedules
	for the time-average exponential age criterion.
\end{theorem}
The proof can be found in Appendix \ref{app:chooseN1OptimalExpUnified}.

\begin{theorem}\label{thm:chooseN2OptimalExpUnified}
	Let $K_e$, $\beta_e$, $\rho_e$, $z^*$, $\delta_n$,
	$z_L$ and $z_U$ be defined as in
	Theorem \ref{thm:chooseN1OptimalExpUnified}, and define
	the bound
	\begin{align}
		 & L_{\exp}
		\triangleq
		\max\Bigg\{
		\frac{\bar{\epsilon}_2e^{\alpha n_1}
			\left(e^{\alpha n_1}-1\right)}
		{\alpha n_1\left(1-\epsilon_2e^{\alpha n_1}\right)}, \nonumber \\
		 & \min_{\substack{\theta_1,\theta_2\ge 0                      \\
				\frac{n_1}{\bar{\epsilon}_1}\theta_1+
				\frac{n_2}{\bar{\epsilon}_2}\theta_2\le 1}}
		\Bigg\{
		\frac{e^{\alpha n_1}}{\alpha}
		(\theta_1+\theta_2)
		\left(
		e^{\frac{\alpha}{\theta_1+\theta_2}}-1
		\right) \nonumber                                              \\
		 &
		+
		\frac{e^{\alpha n_2}-e^{\alpha n_1}}{\alpha}
		\left(e^{\alpha n_1}-1\right)\theta_2
		\Bigg\}
		\Bigg\}.
	\end{align}
	\begin{flalign*}
		         & \text{If} \quad \rho_2<\rho_1 < 1,
		\qquad
		z_L>z^*, &                                    &
	\end{flalign*}
	and the following condition holds:
	\begin{align}
		L_{\exp}
		>
		 & \frac{1}{\delta_n} \max_{z \in [z_L, z_U]} \Bigg\{\frac{K_2(\beta_2-\beta_1)}{\bar{\rho}_2} \\
		 & + z\left( (K_2-K_1) - \frac{K_2(\rho_1-\rho_2)}{\bar{\rho}_2}\right) \Bigg\}.
	\end{align}
	Then the constant schedule that always chooses $n_2$ is optimal
	among all deterministic schedules
	for the
	time-average exponential age criterion.
\end{theorem}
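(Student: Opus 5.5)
We follow the template of the proof of Theorem \ref{thm:chooseN2OptimalUnified}, adapted to the exponential cost. First, we apply the Dinkelbach reformulation of Lemma \ref{lem:DinkelbachTrafo} to the exponential criterion. This gives $\Delta^*_{\exp,\mathrm{avg}}=\lambda^*$, where $\lambda^*$ is the smallest $\lambda$ such that some deterministic schedule has nonpositive limsup average of the per-stage costs $K_{a_i} z_i - \lambda a_i$, with $z_i \triangleq \sum_x b_i(x)e^{\alpha x}$. Since $\mathbb{E}[e^{\alpha X_{i+1}}]=\bar\epsilon(a_i)e^{\alpha a_i}+\epsilon(a_i)e^{\alpha a_i}z_i$, the state evolves through the affine maps
\begin{align}
g_e(z)=\beta_e+\rho_e z,\qquad e\in\{1,2\}.
\end{align}
Under $\rho_2<\rho_1<1$ these maps are contractions with fixed points $\beta_e/\bar\rho_e$. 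Repeating the argument of Lemma \ref{lem:boundedcostLemmaAoI} shows that, for any schedule, $z_i$ eventually lies in $[z_L-\delta,z_U]$. The maps cross at $z^*$, and since $\rho_1>\rho_2$ we have $g_2(z)<g_1(z)$ for $z>z^*$. Hence in the regime $z_L>z^*$, choosing $n_2$ always yields the smaller next state. Next, we need a lower bound on $\lambda^*$. We take $L_{\exp}$ as the exponential analogue of Lemma \ref{lem:AoIBounds}. We would prove it from two relaxations: the first term comes from the bound the paper proves for the $n_1$ schedule with error $\epsilon_2$, and the second from the frequency relaxation with $\theta_e$ the transmission rates, using Jensen and convexity of $x\mapsto e^{\alpha x}$. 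We then simply invoke it.

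The core step is an exchange argument. Take any deterministic schedule and any time $i\ge i_0$ with $a_i=n_1$, and replace it by $n_2$. We compare total Dinkelbach costs over the remaining horizon. The immediate change is
\begin{align}
K_2 z_i - K_1 z_i - \lambda\,\delta_n .
\end{align}
For the future, let $z'_{i+1}=g_2(z_i)$ and $z_{i+1}=g_1(z_i)$. Then
\begin{align}
z'_{i+1}-z_{i+1}=(\beta_2-\beta_1)-(\rho_1-\rho_2)z_i<0 .
\end{align}
Since all later maps are affine with slopes $\rho_{a_j}<1$, this difference propagates as a geometrically damped sequence. The resulting future cost change is a nonpositive multiple of it, and its magnitude is at most
\begin{align}
\frac{K_2}{\bar\rho_2}\,(\rho_1-\rho_2)z_i-\frac{K_2}{\bar\rho_2}(\beta_2-\beta_1).
\end{align}
Here we bound the future coefficients $K_{a_j}\prod\rho$ by their worst case. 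The appropriate choice of $K_2$ versus $K_1$ and of $\rho$ is the delicate point: we must check which combination gives a valid one-sided bound, and we expect the $K_2/\bar\rho_2$ form in the statement to come from bounding the later $K_{a_j}$ by $K_2$ and the later slopes by a geometric sum. The total change is then at most
\begin{align}
\max_{z\in[z_L,z_U]}\Big\{(K_2-K_1)z-\tfrac{K_2}{\bar\rho_2}\big((\rho_1-\rho_2)z-(\beta_2-\beta_1)\big)\Big\}-\lambda\,\delta_n .
\end{align}
If $\lambda\ge L_{\exp}$, the hypothesis makes this negative. Dividing by $\delta_n$ gives exactly the stated condition, so each substitution strictly decreases the cost.

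Finally, we repeat the substitution for every $n_1$ action after $i_0$. The finitely many actions before $i_0$ do not affect the limsup average, so the cost of the constant $n_2$ schedule is no larger than that of any schedule at $\lambda=\lambda^*$. Hence that schedule attains $\lambda^*$.

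The main obstacle is making infinitely many substitutions rigorous, together with controlling the damped propagation. When an infinite sequence of swaps is performed, we must ensure that the limsup of the averaged costs does not increase. We would argue on finite horizons $T$: swapping the actions in $[i_0,T]$ from last to first, each swap decreases the partial sum up to $T$ up to a boundary term of order $\rho_1^{T-i}$. Summing these boundary terms gives a bounded error, which vanishes after dividing by $T$. We also must track the $\delta$-slack of the state bounds, using strict inequality and continuity in $z$ to absorb it.
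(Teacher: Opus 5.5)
\textbf{There is a genuine gap in your core exchange step, exactly where you flag the ``delicate point.''} Write the total change (new minus old) as $(K_2-K_1)z_t-\lambda\delta_n-c\,|\Delta_1|$. Here $|\Delta_1|=(\beta_1-\beta_2)+(\rho_1-\rho_2)z_t$ and $c=\sum_{k\ge 1}K_{a_{t+k}}\prod_{j=1}^{k-1}\rho_{a_{t+j}}$. The immediate term may be positive, so to make the total negative you need a \emph{lower} bound $c\ge K_2/\bar\rho_2$.

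You instead assert that the magnitude of the future change is \emph{at most} $\frac{K_2}{\bar\rho_2}|\Delta_1|$. Your proposed mechanism, bounding the later $K_{a_j}$ by $K_2$, also gives an upper bound on the benefit. So ``the total change is then at most \dots'' does not follow.

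The worst-case route cannot be repaired as it stands either. Over arbitrary continuations, $\inf c=\min\{K_1/\bar\rho_1,K_2/\bar\rho_2\}$, which is the fixed point of $c\mapsto\min_e\{K_e+\rho_e c\}$; an all-$n_1$ continuation already gives $K_1/\bar\rho_1$. Nothing in your argument shows $K_1/\bar\rho_1\ge K_2/\bar\rho_2$. With an arbitrary future you would therefore only get a sufficient condition that is strictly stronger than the stated one whenever $K_1/\bar\rho_1<K_2/\bar\rho_2$.

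\textbf{The missing idea is one you already mention, but only as bookkeeping in your last paragraph: do the replacements from the horizon backward.} Then, when the action at position $t$ is switched from $n_1$ to $n_2$, every later action is already $n_2$. The gap therefore evolves exactly as $\Delta_k=\rho_2^{k-1}\Delta_1$, and each later stage contributes exactly $K_2\Delta_k$. The future benefit over the horizon is $K_2\frac{1-\rho_2^{T-t}}{\bar\rho_2}\Delta_1$, i.e.\ $K_2/\bar\rho_2$ up to a geometrically vanishing tail. That tail is the paper's $\delta_M=\rho_2^M/\bar\rho_2$; your boundary term becomes of order $\rho_2^{T-t}$.

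This is exactly what the paper's proof does. With it, the rest of your plan goes through and yields precisely the stated condition:
\begin{itemize}
\item $\lambda^*\ge L_{\exp}$ together with monotonicity in $\lambda$;
\item the state interval $[z_L-\delta,z_U]$;
\item affinity in $z$, so the maximum is attained at an endpoint;
\item discarding the finite prefix.
\end{itemize}
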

The proof can be found in Appendix \ref{app:chooseN2OptimalExpUnified}.
Note that there is no regime with $\rho_1 < \rho_2 < 1$
where always choosing $n_2$ is optimal. This
can be seen by comparing the two expressions
in the upper bound $U_{\exp}$. It follows that always
choosing $n_1$ achieves a smaller average
exponential \gls{aoi} than
always choosing $n_2$ if $\rho_1 < \rho_2 < 1$, and therefore
always choosing $n_2$ cannot be optimal among the
deterministic schedules in this case.

\section{Optimality Conditions with $K \geq 2$ Available Blocklengths}\label{sec:MultiBlocklengthOptimalityConditions}

In this section, we generalize the optimality
conditions from Section \ref{sec:OptimalityConditions}
to the case of $K$ blocklengths.
This requires only minor modifications and
results in the following conditions for
optimality:

\begin{theorem}
	Let $n_1<n_2<\ldots<n_K$ be the available blocklengths with the corresponding
	error probabilities $\epsilon_1>\epsilon_2>\ldots>\epsilon_K$.
	Let
	\begin{align}
		m_L & = \min_{i\in\{1,\ldots,K\}} \frac{n_i}{\bar{\epsilon}_i}, \label{eq:m_L_K2} \\
		m_U & = \max_{i\in\{1,\ldots,K\}} \frac{n_i}{\bar{\epsilon}_i}, \label{eq_m_U_K2} \\
		L   & = n_1+
		\max \Bigg\{
		\frac{1}{2}n_1\frac{1+\epsilon_K}{\bar{\epsilon}_K},\label{eq:L_K2}
		\\
		    &
		\min_{\substack{\theta_i\ge 0,\ i=1,\dots,K                                       \\
				\sum_{i=1}^K \frac{n_i}{\bar{\epsilon}_i}\theta_i\le 1}}
		\Bigg[
			n_1\sum_{i=1}^K (n_i-n_1)\theta_i
			+
			\frac{1}{2\sum_{i=1}^K \theta_i}
			\Bigg]
		\Bigg\}.\notag                                                                    \\
		U   & = \min_{i\in\{1,\ldots,K\}}
		\left\{ \frac{n_i}{2}\frac{1+\epsilon_i}{\bar{\epsilon}_i} + n_i\right\} \label{eq:U_K2}
	\end{align}
	Let the blocklength $n^* \in \{n_1, n_2, \ldots, n_K\}$ satisfy the following conditions:
	\begin{itemize}
		\item For all $n_i < n^*$ the conditions of Theorem \ref{thm:chooseN2OptimalUnified} are satisfied pairwise with
		      $n_1 = n_i$ and $n_2 = n^*$, where
		      the bounds $m_L$ and $m_U$ given in (\ref{eq:m_L_K2}), (\ref{eq_m_U_K2})
		      and the bounds $L$ and
		      $U$ given in (\ref{eq:L_K2}), (\ref{eq:U_K2})
		      are used.
		\item For all $n_i > n^*$ the conditions of Theorem \ref{thm:chooseN1OptimalUnified} are satisfied pairwise with
		      $n_1 = n^*$ and $n_2 = n_i$, where
		      the bounds $m_L$ and $m_U$ given in (\ref{eq:m_L_K2}), (\ref{eq_m_U_K2})
		      and the bounds $L$ and
		      $U$ given in (\ref{eq:L_K2}), (\ref{eq:U_K2})
		      are used.
	\end{itemize}
	Then the constant schedule with blocklength $n^*$ is
	optimal for the time-average age criterion
	within the class of deterministic schedules.

\end{theorem}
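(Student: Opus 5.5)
We follow the same route as in Section \ref{sec:OptimalityConditions}: Lemma \ref{lem:DinkelbachTrafo}, then an exchange argument on the transient deterministic chain of expected ages. The plan is to reduce the $K$-blocklength statement to the pairwise exchange steps already carried out in the proofs of Theorems \ref{thm:chooseN1OptimalUnified} and \ref{thm:chooseN2OptimalUnified}.

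\textbf{Step 1: redo the Dinkelbach transformation and the state bounds for $K$ actions.} The proof of Lemma \ref{lem:DinkelbachTrafo} never uses that there are only two actions, so the transformed problem (\ref{eq:TransformedDinkelbachProblem}) holds verbatim with $a_i\in\{n_1,\ldots,n_K\}$. The expected age then evolves as $m_{i+1}=f_e(m_i)=n_e+\epsilon_e m_i$ for $e\in\{1,\ldots,K\}$. Each $f_e$ is a contraction with fixed point $n_e/\bar{\epsilon}_e$, so the interval $[m_L,m_U]$ defined by (\ref{eq:m_L_K2})--(\ref{eq_m_U_K2}) is invariant under every $f_e$. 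Repeating the argument of Lemma \ref{lem:boundedcostLemmaAoI} therefore gives $m_L-\delta\le m_i\le m_U$ for all $i\ge i_0$.

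\textbf{Step 2: bound $\lambda^*=\Delta^*_{\mathrm{avg}}$ by $L$ and $U$.} The upper bound $U$ is immediate: every constant schedule is admissible, and its average age is $J_i=\frac{n_i}{2}\frac{1+\epsilon_i}{\bar{\epsilon}_i}+n_i$, so $\Delta^*_{\mathrm{avg}}\le\min_i J_i$.

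For the lower bound $L$, we follow the proof of Lemma \ref{lem:AoIBounds}.
\begin{itemize}
\item The first term comes from the observation that every age drop is to a value at least $n_1$, and every transmission has error probability at least $\epsilon_K$.
\item The second term comes from relaxing a schedule to its action frequencies. Take $\theta_i$ as the number of uses of $n_i$ per unit time. The constraint $\sum_i \frac{n_i}{\bar{\epsilon}_i}\theta_i\le 1$ is the $K$-action analogue of the two-action constraint, and the objective is the same convexity (Jensen) bound, now summed over all $K$ actions.
\end{itemize}
Both pieces generalize term by term. We then pick $\delta$ small enough that the strict inequalities assumed in the theorem survive the $\delta$ slack.

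\textbf{Step 3: the exchange argument, which is the main step.} Take any deterministic schedule and any position $i\ge i_0$ at which $a_i=n_j\neq n^*$. We compare the total \gls{dpsc} with and without replacing $a_i$ by $n^*$.

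The replacement changes the state only from $i+1$ on. Because all later maps are affine with slope at most $\epsilon_1<1$, the perturbation decays geometrically. The resulting change in total cost is
\begin{itemize}
\item the immediate term $n^*m_i+\frac{(n^*)^2}{2}-\lambda n^*-(n_jm_i+\frac{n_j^2}{2}-\lambda n_j)$, plus
\item a tail term driven by $f_{n^*}(m_i)-f_{n_j}(m_i)$.
\end{itemize}

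The key observation is that this computation involves only the pair $\{n_j,n^*\}$ and the current state $m_i$. The other actions enter only through the sign and size of the future slopes, and those are controlled by $m_i\in[m_L-\delta,m_U]$ and $\epsilon_k\le\epsilon_1$.

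So if $n_j<n^*$, the pairwise hypotheses of Theorem \ref{thm:chooseN2OptimalUnified} give a nonpositive cost change. If $n_j>n^*$, those of Theorem \ref{thm:chooseN1OptimalUnified} do. In both cases the bounds of Steps 1--2 are used in place of the two-action ones.

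The obstacle I expect is the tail term. In the two-action proofs, the discounted effect of the state perturbation is bounded using future slopes in $\{\epsilon_1,\epsilon_2\}$ and the sign of $f_2-f_1$ relative to $m^*$. With $K$ actions, the future slopes range over all $\epsilon_k$. I would first verify that the pairwise proofs only use the worst-case bound $\prod\epsilon\le$ an appropriate product. If they instead use a specific slope such as $\epsilon_2$, I would replace it by $\epsilon_1$ or $\epsilon_K$, whichever is the worst case, and check that the theorem's use of the global $m_L$, $m_U$, $L$, $U$ absorbs this. I would also check that the threshold $m^*$ relevant to the pair $(n_j,n^*)$ is the one appearing in the pairwise conditions, so that the sign of the state difference is fixed along the whole invariant interval.

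\textbf{Step 4: conclude.} The finitely many positions $i<i_0$ do not affect the long-run average. Applying the replacement successively to every differing action at $i\ge i_0$ never increases the limsup average of the \gls{dpsc} at $\lambda=\lambda^*$; a limiting argument is needed because there can be infinitely many replacements. This yields the constant $n^*$ schedule.

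Hence $\limsup\frac1T\sum(\tilde c-\lambda^* a_i)\le 0$ is attained by the constant $n^*$ schedule. By Lemma \ref{lem:DinkelbachTrafo}, this schedule achieves $\Delta^*_{\mathrm{avg}}$.
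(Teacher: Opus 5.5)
\textbf{Gap in Step 3.} Your outline matches the paper: Dinkelbach reformulation, global state bounds $m_t\in[m_L-\delta,m_U]$, global bounds $L\le\lambda^*\le U$, and pairwise exchanges. Step~3, however, has a genuine gap, and it sits at exactly the obstacle you flagged.

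You exchange a single action inside an otherwise arbitrary schedule, so the suffix after the exchange time $t$ is arbitrary. The tail is then $\Delta_1 W$ with $W=\sum_{k\ge1}a_{t+k}\prod_{l=1}^{k-1}\epsilon(a_{t+l})$. Note that the future blocklengths enter as cost weights, not only through the slopes. Since $W=n_e+\epsilon_e W'$, the infimum and supremum of $W$ over suffixes are $m_L$ and $m_U$, attained by constant suffixes.

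The pairwise conditions of Theorems~\ref{thm:chooseN1OptimalUnified} and~\ref{thm:chooseN2OptimalUnified} instead contain the specific coefficient $p^*\triangleq n^*/\bar{\epsilon}(n^*)$, which is the tail of a constant-$n^*$ suffix. Replacing it by the worst case ($m_U$ in regime (i); $m_L$ in regime (ii) and in Theorem~\ref{thm:chooseN2OptimalUnified}) yields strictly stronger hypotheses whenever $p^*$ is not that extreme. The stated hypotheses do not force it to be.

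For example, suppose some $n_i<n^*$ satisfies $m_L>m^*$ for the pair $(n_i,n^*)$, and let $f^*(m)=n^*+\epsilon(n^*)m$. Then $f_i(p^*)>f^*(p^*)=p^*$, hence $n_i/\bar{\epsilon}_i>p^*$. A constant-$n_i$ suffix then makes the tail of a regime-(i) exchange against some $n_j>n^*$ larger than the condition controls. Concretely, in Table~\ref{tab:multi_blocklength_optimality} the row $n^*=325$ certifies pairs in both regimes, while $325/0.78\approx416.7$ lies strictly inside $[m_L,m_U]\approx[376.3,425.4]$.

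So the global $m_L,m_U,L,U$ do not absorb the change, and a forward exchange is not guaranteed to be favourable under the theorem's hypotheses. Your claim that the computation ``involves only the pair $\{n_j,n^*\}$ and $m_i$'' is true only for a constant-$n^*$ suffix.

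\textbf{Missing idea.} Use the ordering from the proof of Theorem~\ref{thm:chooseN1OptimalUnified}. On a finite horizon $T$, exchange the differing actions from the last one backwards, so that $\eta^{(j)}$ already coincides with the constant schedule from $\tau_{N_T-j+1}$ on. Every exchange then has a constant-$n^*$ suffix. Consequently $\Delta_{k+1}=\epsilon(n^*)\Delta_k$, the future cost differences are $n^*\Delta_k$, and the tail equals $\bigl(\frac{1}{\bar{\epsilon}(n^*)}-\delta_M\bigr)n^*\Delta_1$ with $\delta_M=\epsilon(n^*)^M/\bar{\epsilon}(n^*)$ from the cutoff $T-M$.

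The other $K-2$ actions then enter only through the arbitrary prefix, that is, through $m_t\in[m_L-\delta,m_U]$ and through $L$ and $U$. This is precisely why the theorem keeps the pairwise coefficients unchanged and only globalizes $m_L,m_U,L,U$.

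The same finite-horizon telescoping also supplies the limiting argument your Step~4 leaves open: the at most $M$ exchanges in $(T-M,T]$ contribute $o(1)$ after division by $T$.

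A minor point on Step 2: in the bound $L$, the $\theta_i$ must be success rates. With usage rates the constraint would read $\sum_i n_i\theta_i\le1$.
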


\begin{proof}
	The proof is a straightforward generalization of the proofs of Theorems \ref{thm:chooseN1OptimalUnified}
	and \ref{thm:chooseN2OptimalUnified}. If the conditions are satisfied,
	starting from an arbitrary schedule with
	blocklengths from the set
	$\{n_1, n_2, \ldots, n_K\}$,
	we can iteratively construct a sequence
	$\mu^{(0)}, \mu^{(1)}, \ldots, \mu^{(N_T)}$ of schedules
	to transform the arbitrary schedule into the schedule
	in which blocklength $n^*$ is used at every transmission.
	Because the conditions are satisfied
	pairwise, each replacement still strictly reduces
	the \gls{aoi} cost,
	and hence the final schedule
	with blocklength $n^*$ is optimal.
	The modified bounds on $L$ and
	$U$ are straightforward generalizations
	of the bounds in Theorems
	\ref{thm:chooseN1OptimalUnified}
	and \ref{thm:chooseN2OptimalUnified}.
\end{proof}
Note that a similar generalization can be done for the
average exponential age criterion.

\section{Conditions for Suboptimality of Constant Schedules}\label{sec:stationaryRandomizedPolicies}

In the previous sections,
we have derived conditions for
the optimality of constant schedules
within the class of deterministic schedules.
We have also seen that constant
schedules are not always optimal.
It is therefore natural to ask for conditions
under which a non-constant schedule is required for optimality
as a converse to Theorems \ref{thm:chooseN1OptimalUnified}, \ref{thm:chooseN2OptimalUnified},
\ref{thm:chooseN1OptimalExpUnified} and \ref{thm:chooseN2OptimalExpUnified}.
One straightforward approach is to calculate the average
age with the closed-form expressions from Section
\ref{sec:PeriodicPoliciesAoI}
and to compare it with the average age of
the constant schedules. However,
this does not result in an easily checkable
condition for
suboptimality of constant schedules.
In this section, we take
a different approach and analyze the average
and average exponential age
for stationary randomized policies. This results in
algebraic conditions that are sufficient
for the suboptimality of constant schedules
within the class of deterministic schedules.

\subsection{Stationary Randomized Policies for \gls{aoi}}
First, we derive a closed-form expression for the average \gls{aoi} for
stationary randomized policies. We then show that the optimal stationary randomized policy can be found by solving a simple
quadratic equation.

\begin{lemma}\label{lem:OptimalStationaryAoI}
	Let $\pi \in \Pi_{SR}$ be a stationary randomized policy that chooses $n_1$ with probability $\xi$ and
	$n_2$ with probability $\bar{\xi}$.
	Then the average \gls{aoi} $\Delta_{\mathrm{avg}}$ of $\pi$ is given by
	\begin{align}
		\Delta_{\mathrm{avg}} \!=\! \frac{\xi n_1 + \bar{\xi}n_2}{\xi\bar{\epsilon}_1+\bar{\xi}\bar{\epsilon}_2}
		\!+\! \frac{\xi n_1^2 + \bar{\xi}n_2^2}{2(\xi n_1 + \bar{\xi}n_2)}.
	\end{align}
\end{lemma}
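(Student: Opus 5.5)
We will use a renewal-reward argument. Under the stationary randomized policy the blocklengths $a_0,a_1,\ldots$ are i.i.d. with $\mathbb{P}(a=n_1)=\xi$ and $\mathbb{P}(a=n_2)=\bar{\xi}$. Since no feedback is used, the success indicators are then independent of the past, with per-transmission success probability $\bar{\epsilon}(a)$. Successful receptions are therefore renewal epochs, and we will decompose time into inter-delivery intervals.

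\textbf{Interval structure.} Consider the interval between two consecutive successful receptions. At the start of the interval the age equals $A$, the blocklength of the previously successful update. Conditioned on success, $A$ is distributed as $\mathbb{P}(A=n_1)=\xi\bar{\epsilon}_1/p$ and $\mathbb{P}(A=n_2)=\bar{\xi}\bar{\epsilon}_2/p$, where $p=\xi\bar{\epsilon}_1+\bar{\xi}\bar{\epsilon}_2$. The interval length is $Y=\sum_{j=1}^{N}a_j$, where $N$ is the number of transmissions up to and including the next success. The area under the age curve over the interval is $AY+Y^2/2$. Because the actions are i.i.d. and feedback-free, $A$ is independent of $Y$. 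By the renewal-reward theorem (the $\limsup$ is a limit, and initial conditions wash out),
\[
\Delta_{\mathrm{avg}}=\frac{\mathbb{E}[A]\mathbb{E}[Y]+\mathbb{E}[Y^2]/2}{\mathbb{E}[Y]}=\mathbb{E}[A]+\frac{\mathbb{E}[Y^2]}{2\mathbb{E}[Y]}.
\]

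\textbf{Moments of $Y$.} Let $\mu_1=\xi n_1+\bar{\xi}n_2$ and $\mu_2=\xi n_1^2+\bar{\xi}n_2^2$. We derive $\mathbb{E}[Y]$ and $\mathbb{E}[Y^2]$ by first-step analysis, which avoids handling the dependence between the $a_j$ and $N$ explicitly. Writing $Y=a_1+\mathbf{1}\{\text{fail}\}Y'$ with $Y'$ an independent copy of $Y$ gives
\[
\mathbb{E}[Y]=\mu_1+(1-p)\mathbb{E}[Y], \quad\text{so}\quad \mathbb{E}[Y]=\mu_1/p.
\]
For the second moment, $\mathbb{E}[Y^2]=\mu_2+2\,\mathbb{E}[a\,\epsilon(a)]\,\mathbb{E}[Y]+(1-p)\mathbb{E}[Y^2]$, which yields
\[
\mathbb{E}[Y^2]=\frac{\mu_2}{p}+\frac{2\mu_1 r}{p^2}, \qquad r=\xi n_1\epsilon_1+\bar{\xi}n_2\epsilon_2 .
\]
Also $\mathbb{E}[A]=(\xi n_1\bar{\epsilon}_1+\bar{\xi}n_2\bar{\epsilon}_2)/p=(\mu_1-r)/p$.

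\textbf{Simplification.} Substituting these moments gives
\[
\Delta_{\mathrm{avg}}=\frac{\mu_1-r}{p}+\frac{\mu_2}{2\mu_1}+\frac{r}{p}=\frac{\mu_1}{p}+\frac{\mu_2}{2\mu_1},
\]
which is exactly the claimed expression.

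The main obstacle is the correlation between the action lengths and the success events: an update carried with $n_1$ is more likely to fail. This correlation affects both the distribution of $A$ and the cross term in $\mathbb{E}[Y^2]$. The point to verify carefully is that the $r$ terms from these two places cancel. We also need to justify that $A$ and $Y$ are independent, which holds because fresh i.i.d. actions are drawn after each renewal.
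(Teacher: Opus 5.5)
Your proof is correct and follows essentially the same route as the paper. Both use a renewal-reward decomposition with per-interval area $N_{\mathrm{prev}}Y + Y^2/2$, independence of the previous blocklength and the interval length, the success-conditioned law of $N_{\mathrm{prev}}$, and first-step recursions for $\mathbb{E}[Y]$ and $\mathbb{E}[Y^2]$, and your moment expressions match the paper's. Your explicit cancellation of the $r$ terms spells out the final simplification, which the paper only summarizes as ``inserting everything and simplifying.''
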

The proof can be found in Appendix \ref{app:OptimalStationaryAoI}.
The optimal policy can be found by searching for the optimal parameter $\xi$. Based
on the previous lemma, we can derive a quadratic equation for the optimal $\xi$:

\begin{lemma}\label{lem:OptimalBetaAoI}
	Let $\xi_{\mathrm{avg}}^*$ be the optimal parameter for the stationary randomized policy that minimizes the average \gls{aoi}.
	Let $\tilde{\xi}_{1,2}$ denote the solutions of the following quadratic equation:
	\begin{align}
		 & 2\bigl(n_1\bar{\epsilon}_2-n_2\bar{\epsilon}_1\bigr)\bigl(\xi n_1+\bar{\xi}n_2\bigr)^2 \notag \\
		 & \!+\!
		n_1n_2(n_1-n_2)\bigl(\xi \bar{\epsilon}_1+\bar{\xi}\bar{\epsilon}_2\bigr)^2 \!\!=\! 0.
	\end{align}
	Then the optimal parameter $\xi_{\mathrm{avg}}^*$ can be found by calculating the average \gls{aoi}
	at $\xi = 0$ and $\xi = 1$ and at the solutions $\tilde{\xi}_{1,2}$ that lie in the interval $[0,1]$
	and choosing the $\xi$ that results in the lowest average \gls{aoi}.
\end{lemma}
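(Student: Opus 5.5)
The plan is to treat the closed-form expression of Lemma~\ref{lem:OptimalStationaryAoI} as a smooth function of $\xi$ on the compact interval $[0,1]$ and apply first-order optimality.

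\textbf{Setup.} Introduce the affine functions
\begin{align}
A(\xi) = \xi n_1 + \bar{\xi} n_2, \quad B(\xi) = \xi\bar{\epsilon}_1 + \bar{\xi}\bar{\epsilon}_2, \quad C(\xi) = \xi n_1^2 + \bar{\xi} n_2^2 .
\end{align}
Their derivatives are constant: $A' = n_1-n_2$, $B' = \bar{\epsilon}_1-\bar{\epsilon}_2$ and $C' = n_1^2-n_2^2$. Since $n_1,n_2>0$ and $\epsilon_1,\epsilon_2<1$, we have $A>0$ and $B>0$ on $[0,1]$. Hence $\Delta_{\mathrm{avg}}(\xi) = A/B + C/(2A)$ is continuously differentiable there. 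By the extreme value theorem its minimum over $[0,1]$ is attained, either at an endpoint $\xi\in\{0,1\}$ or at an interior point with vanishing derivative. So it suffices to show that the interior stationary points are exactly the roots of the stated quadratic.

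\textbf{Derivative of the first term.} By the quotient rule, the derivative of $A/B$ is $(A'B-AB')/B^2$. I will expand the numerator and collect the coefficients of $\xi$ and $\bar{\xi}$:
\begin{itemize}
\item the coefficient of $\xi$ is $(n_1-n_2)\bar{\epsilon}_1 - n_1(\bar{\epsilon}_1-\bar{\epsilon}_2) = n_1\bar{\epsilon}_2 - n_2\bar{\epsilon}_1$;
\item the coefficient of $\bar{\xi}$ is $(n_1-n_2)\bar{\epsilon}_2 - n_2(\bar{\epsilon}_1-\bar{\epsilon}_2) = n_1\bar{\epsilon}_2 - n_2\bar{\epsilon}_1$.
\end{itemize}
Since $\xi+\bar{\xi}=1$, the numerator is the constant $n_1\bar{\epsilon}_2 - n_2\bar{\epsilon}_1$.

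\textbf{Derivative of the second term.} The derivative of $C/(2A)$ is $(C'A - CA')/(2A^2)$. The numerator factors as $(n_1-n_2)\bigl[(n_1+n_2)A - C\bigr]$, and
\begin{align}
(n_1+n_2)(\xi n_1 + \bar{\xi} n_2) - \xi n_1^2 - \bar{\xi} n_2^2 = (\xi+\bar{\xi})\, n_1 n_2 = n_1 n_2 .
\end{align}
So the numerator is the constant $n_1n_2(n_1-n_2)$.

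\textbf{Stationarity condition.} Combining the two terms gives
\begin{align}
\frac{d\Delta_{\mathrm{avg}}}{d\xi} = \frac{n_1\bar{\epsilon}_2 - n_2\bar{\epsilon}_1}{B^2} + \frac{n_1n_2(n_1-n_2)}{2A^2}.
\end{align}
Multiplying by the strictly positive factor $2A^2B^2$ shows that the derivative vanishes if and only if
\begin{align}
2(n_1\bar{\epsilon}_2 - n_2\bar{\epsilon}_1)A^2 + n_1n_2(n_1-n_2)B^2 = 0,
\end{align}
which is exactly the stated equation. Any interior minimizer must therefore be among the roots $\tilde{\xi}_{1,2}$ lying in $[0,1]$. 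Comparing $\Delta_{\mathrm{avg}}$ at these roots and at the endpoints $\xi=0$ and $\xi=1$ then yields $\xi_{\mathrm{avg}}^*$.

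\textbf{Main obstacle.} The only genuine difficulty is the algebra above: recognising that both numerators collapse to constants, which is what makes the condition quadratic in $\xi$. A minor further point is the degenerate case in which the $\xi^2$-coefficient vanishes, so the equation becomes linear or identically zero. In that case the candidate set is the single root, or the derivative vanishes everywhere, and the comparison still covers every possible minimizer.
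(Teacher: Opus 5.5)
Your proposal is correct and follows the paper's proof essentially step by step. It uses the same three affine functions, the same observation that both quotient-rule numerators collapse to the constants $n_1\bar{\epsilon}_2-n_2\bar{\epsilon}_1$ and $n_1n_2(n_1-n_2)$, the same multiplication by $2A^2B^2$, and the same extreme-value argument on $[0,1]$. Your extra remark on a degenerate leading coefficient is a harmless refinement that the paper omits; note that the equation can never vanish identically, since that would force $n_1/\bar{\epsilon}_1=n_2/\bar{\epsilon}_2$, and in that case the derivative equals $n_1n_2(n_1-n_2)/(2A^2)<0$.
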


The proof can be found in Appendix \ref{app:OptimalBetaAoI}.
So far we have only derived a condition under which constant
schedules are suboptimal within the class of
stationary randomized policies.
Next, we show that this also implies suboptimality
within the class of deterministic schedules $\Pi_D$.
The basic step here is to show that
for every stationary randomized policy,
there exists at least one deterministic schedule
with the same or a lower average \gls{aoi}.
\begin{theorem}\label{lem:ConstantSupoptimalityCertificate}
	If the optimal parameter $\xi_{\mathrm{avg}}^*$ defined in Lemma \ref{lem:OptimalBetaAoI} satisfies $\xi_{\mathrm{avg}}^* \in (0,1)$ and
	if the average age at this parameter is strictly smaller than the average age at $\xi = 0$ and $\xi = 1$, then
	the two constant schedules that always choose $n_1$ or $n_2$ are suboptimal within the class of deterministic schedules $\Pi_D$.
\end{theorem}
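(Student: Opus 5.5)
Fix the optimal stationary randomized parameter $\xi^*=\xi_{\mathrm{avg}}^*\in(0,1)$. By hypothesis,
\begin{align}
\Delta_{\mathrm{avg}}(\xi^*)<\min\{J_1,J_2\},
\end{align}
where $J_e$ is the average age of the constant schedule with $n_e$, as in the remark after Theorem~\ref{thm:chooseN2OptimalUnified}, and $\Delta_{\mathrm{avg}}(\xi)$ is the expression of Lemma~\ref{lem:OptimalStationaryAoI}. It then suffices to exhibit a deterministic schedule $\pi\in\Pi_D$ with $\Delta_{\mathrm{avg}}(\pi)\le\Delta_{\mathrm{avg}}(\xi^*)$. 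Such a $\pi$ is necessarily non-constant, and it beats both constant schedules.

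\emph{Averaging step.} A stationary randomized policy draws an i.i.d.\ action sequence $\mathbf{A}=(A_0,A_1,\ldots)$ independently of the channel errors. Conditioned on $\mathbf{A}=\mathbf{a}$, the age process is exactly the one induced by the deterministic schedule $\mathbf{a}$. Write $\mathcal{A}_T(\mathbf{a})=\sum_{i=0}^T \tilde c(b_i,a_i)$ for the expected area and $S_T(\mathbf{a})=\sum_{i=0}^T a_i$ for the elapsed time under this schedule.

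\emph{Step 1: pathwise limit.} I would show that for almost every realization $\mathbf{a}$ of $\mathbf{A}$, the ratio $\mathcal{A}_T(\mathbf{a})/S_T(\mathbf{a})$ converges to $\Delta_{\mathrm{avg}}(\xi^*)$. I would use the belief mean recursion $m_{i+1}=f_{a_i}(m_i)$, with $f_1,f_2$ from (\ref{eq:fAoIMaps}), and $\tilde c(b_i,a_i)=a_i m_i+a_i^2/2$. Under i.i.d.\ actions, $(m_i,a_i)$ is a Markov chain driven by random contractions, since $\epsilon_e<1$. Such iterated function systems have a unique stationary law, and the chain forgets its initial condition geometrically fast. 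The ergodic theorem for this chain, or equivalently the strong law applied to the stationary version plus a vanishing coupling error, then gives
\begin{align}
\frac{1}{T}\sum_{i=0}^T \Bigl(a_i m_i+\tfrac{a_i^2}{2}\Bigr)&\to \mathbb{E}\Bigl[A m_\infty+\tfrac{A^2}{2}\Bigr], &
\frac{1}{T}S_T&\to \mathbb{E}[A]
\end{align}
almost surely. Here $A$ is independent of $m_\infty$, so $\mathbb{E}[Am_\infty]=\mathbb{E}[A]\,\mathbb{E}[m_\infty]$. Taking the expectation of the recursion gives
\begin{align}
\mathbb{E}[m_\infty]=\frac{\mathbb{E}[A]}{1-\mathbb{E}[\epsilon(A)]}.
\end{align}
Consequently the ratio of the two limits is
\begin{align}
\frac{\mathbb{E}[A]}{\mathbb{E}[\bar\epsilon(A)]}+\frac{\mathbb{E}[A^2]}{2\mathbb{E}[A]},
\end{align}
which is exactly the formula of Lemma~\ref{lem:OptimalStationaryAoI}. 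This re-derives that lemma pathwise. I also need that the ratio of limits of conditional expected costs equals the quantity in (\ref{eq:DeterministicAoIMDP}) for the fixed schedule $\mathbf{a}$. That holds because, by the reformulation leading to (\ref{eq:DeterministicAoIMDP}), the conditional objective is precisely this ratio.

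\emph{Step 2: conclusion.} Since the almost-sure event in Step 1 is nonempty, some fixed sequence $\mathbf{a}$ satisfies
\begin{align}
\limsup_{T\to\infty}\frac{\mathcal{A}_T(\mathbf{a})}{S_T(\mathbf{a})}=\Delta_{\mathrm{avg}}(\xi^*).
\end{align}
This $\mathbf{a}$ is a deterministic schedule whose value is $\Delta_{\mathrm{avg}}(\xi^*)<\min\{J_1,J_2\}$. Hence $\Delta^*_{\mathrm{avg}}<\min\{J_1,J_2\}$, and both constant schedules are suboptimal in $\Pi_D$.

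A more constructive alternative is to take periodic schedules with empirical frequency of $n_1$ tending to $\xi^*$. One would then show via Theorem~\ref{theo:AoICyclicClosedForm} that their values approach $\Delta_{\mathrm{avg}}(\xi^*)$. However, the order of actions matters, so I prefer the probabilistic route.

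\emph{Main obstacle.} The hard step is the ergodic limit in Step 1. The state $m_i$ is continuous and unbounded a priori, so a careful argument is needed. Lemma~\ref{lem:boundedcostLemmaAoI} confines $m_i$ to $[m_L-\delta,m_U]$ eventually. I would then couple two chains driven by the same actions and started from different $m_0$. Their distance contracts by the factor $\prod_i\epsilon(a_i)\to0$, which reduces the problem to a stationary ergodic sequence where Birkhoff's theorem applies directly.
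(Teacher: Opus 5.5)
Your proof is correct, but it takes a different route from the paper.

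\textbf{The paper's route.} The paper works with realized sample paths under the joint randomness of actions and channel errors. It applies the renewal-reward theorem to the randomized policy to get almost-sure convergence of the realized time-average age. It then uses Fubini--Tonelli to pick an action sequence $\bm{a}^*$ for which this convergence holds almost surely over the errors. Finally it must interchange limit and expectation over $\bm{Z}$ by dominated convergence. That last step is where most of the effort goes: a coupling with an i.i.d.\ Bernoulli$(\bar{\epsilon}_1)$ success process, and a fourth-moment bound showing that $\sup_N \frac{1}{N}\sum_{i=1}^N\zeta_i^2$ is integrable.

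\textbf{Your route.} You integrate out the channel errors exactly before taking any limit. Since $\tilde{c}(b_i,a_i)=a_im_i+a_i^2/2$ depends on the belief only through $m_i$, and $m_{i+1}=f_{a_i}(m_i)$ is deterministic given the actions, the value of a fixed schedule is a deterministic functional of $\bm{a}$. The only remaining randomness is in the i.i.d.\ actions, which geometric forgetting plus an ergodic theorem handle.

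\textbf{What each approach buys.} Your approach removes the domination step altogether. It also yields the formula of Lemma~\ref{lem:OptimalStationaryAoI} as a by-product rather than using it as an input. It carries over almost verbatim to Theorem~\ref{lem:ConstantSupoptimalityCertificateExpAoI}. There the stage cost is $K_{a_i}z_i$ with $z_{i+1}=\beta(a_i)+\rho(a_i)z_i$, and the limit is $(\xi K_1+\bar{\xi}K_2)\,A(\xi)/\bigl(n_{\mathrm{avg}}(\xi)(1-B(\xi))\bigr)$. This equals the expression of Lemma~\ref{lem:StationaryRandomizedExpAoI} because $\alpha(\xi K_1+\bar{\xi}K_2)=A(\xi)+B(\xi)-1$; the paper instead needs a separate martingale argument for that case. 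The paper's template, in return, does not rely on the stage cost being affine in a scalar belief statistic, so it is a generic derandomization argument.

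\textbf{One tightening.} The sequence $(a_i m_i^{\mathrm{st}})$ is stationary but not i.i.d., so you need Birkhoff's theorem rather than the strong law, and you need ergodicity. Both follow from the backward representation $m_i^{\mathrm{st}}=\sum_{k\ge 1}a_{i-k}\prod_{j=1}^{k-1}\epsilon(a_{i-j})$ on a two-sided i.i.d.\ action sequence. It exhibits the stationary process as a factor of a Bernoulli shift, hence ergodic, and makes $a_i$ independent of $m_i^{\mathrm{st}}$.
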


The proof can be found in Appendix \ref{app:ConstantSupoptimalityCertificate}.

In the next section, similar results for the exponential \gls{aoi} metric are given.
\subsection{Stationary Randomized Policies for Exponential \gls{aoi}}

\begin{lemma}\label{lem:StationaryRandomizedExpAoI}
	Let $\pi \in \Pi_{SR}$ be a stationary randomized policy that chooses $n_1$ with probability $\xi$ and
	$n_2$ with probability $\bar{\xi}$. Further, define
	\begin{align}
		A(\xi)                & \triangleq \xi \bar{\epsilon}_1e^{\alpha n_1} + \bar{\xi}\bar{\epsilon}_2e^{\alpha n_2}, \\
		B(\xi)                & \triangleq \xi \epsilon_1 e^{\alpha n_1} + \bar{\xi}\epsilon_2 e^{\alpha n_2},           \\
		n_{\mathrm{avg}}(\xi) & \triangleq \xi n_1 + \bar{\xi}n_2.
	\end{align}
	If $B(\xi)<1$, then the average exponential age of $\pi$ is given by
	\begin{align}
		\Delta_{\exp,\mathrm{avg}}(\xi)
		=
		\frac{A(\xi)}{\alpha n_{\mathrm{avg}}(\xi)}
		\left(
		\frac{A(\xi)}{1-B(\xi)}-1
		\right).
	\end{align}
\end{lemma}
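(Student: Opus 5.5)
The plan is to work per transmission rather than per renewal cycle, following the reformulation used for the deterministic case in (\ref{eq:DeterministicAoIMDP}). Let $X_i$ be the age at the start of the $i$-th transmission and $a_i$ the blocklength drawn for it. Under $\pi\in\Pi_{SR}$, $a_i$ is i.i.d. with $\mathbb{P}(a_i=n_1)=\xi$, and it is independent of $X_i$, because $X_i$ depends only on $a_0,\dots,a_{i-1}$ and on the past channel outcomes. The area under $e^{\alpha\Delta(t)}$ during transmission $i$ is $c_{\exp}(X_i,a_i)=\frac{1}{\alpha}e^{\alpha X_i}(e^{\alpha a_i}-1)$. By independence,
\begin{align}
\mathbb{E}[c_{\exp}(X_i,a_i)] = \frac{1}{\alpha}\,y_i\,\bigl(\xi e^{\alpha n_1}+\bar{\xi}e^{\alpha n_2}-1\bigr), \qquad y_i\triangleq \mathbb{E}[e^{\alpha X_i}].
\end{align}
I would then note the algebraic identity $\xi e^{\alpha n_1}+\bar{\xi}e^{\alpha n_2}=A(\xi)+B(\xi)$, so the last factor equals $A+B-1$.

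Next I derive a recursion for $y_i$. From the transition kernel, $X_{i+1}=a_i$ with probability $\bar{\epsilon}(a_i)$ and $X_{i+1}=X_i+a_i$ with probability $\epsilon(a_i)$. Conditioning on $a_i$ and using independence gives
\begin{align}
y_{i+1}=A(\xi)+B(\xi)\,y_i .
\end{align}
Since $B<1$, this affine map is a contraction, so $y_i\to y_\infty = A/(1-B)$ geometrically fast from the finite initial value $y_0=e^{\alpha n_1}$. Consequently $\mathbb{E}[c_{\exp}(X_i,a_i)]\to \frac{1}{\alpha}\frac{A(A+B-1)}{1-B}=\frac{A}{\alpha}\bigl(\frac{A}{1-B}-1\bigr)$, and by Cesàro averaging the average of these expectations over $i=0,\dots,T$ converges to the same limit. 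Meanwhile $\frac{1}{T+1}\sum_{i=0}^T\mathbb{E}[a_i]=n_{\mathrm{avg}}(\xi)$ exactly. Dividing gives the claimed expression for the ratio-of-expected-sums form of the objective.

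The main obstacle is justifying that $\Delta_{\exp,\mathrm{avg}}$, which is defined with a deterministic horizon $T'$, equals this per-transmission ratio even though the number of transmissions completed by time $T'$ is random. I would handle this in two steps. First, let $N(T')$ be the number of transmissions started before $T'$; since the $a_i$ are i.i.d. and bounded, $N(T')/T'\to 1/n_{\mathrm{avg}}$ almost surely and in $L^1$. Second, I would sandwich $\int_0^{T'}e^{\alpha\Delta}\,dt$ between $\sum_{i<N(T')}c_{\exp}$ and $\sum_{i\le N(T')}c_{\exp}$ and apply Wald-type arguments, using that $\{N(T')\ge i\}$ depends only on $a_0,\dots,a_{i-1}$ and so is independent of $a_i$ and of the channel outcome at $i$. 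The single boundary term then contributes $o(T')$ because $y_i$ is uniformly bounded.

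Alternatively, I could use the renewal-reward theorem with renewals at successful receptions, where cycles are i.i.d. because actions and errors are i.i.d. The first-moment recursion above is still the cleanest way to evaluate the cycle reward. Either way, the condition $B<1$ is exactly what makes the per-cycle expected reward finite.
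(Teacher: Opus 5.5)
Your route differs from the paper's. The paper applies the renewal-reward theorem with renewals at successful receptions. It uses the independence of $N_{\mathrm{prev}}$ and $Y$ and computes $\mathbb{E}[e^{\alpha N_{\mathrm{prev}}}]=A/(\xi\bar{\epsilon}_1+\bar{\xi}\bar{\epsilon}_2)$, $\mathbb{E}[e^{\alpha Y}]=A/(1-B)$ and $\mathbb{E}[Y]=n_{\mathrm{avg}}/(\xi\bar{\epsilon}_1+\bar{\xi}\bar{\epsilon}_2)$. You instead average the per-transmission recursion of Lemma~\ref{lem:BoundsOnZi} over the random action, which gives $y_{i+1}=A+By_i$ with fixed point $A/(1-B)$. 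Your algebra is correct: $\mathbb{E}[e^{\alpha a_i}]=A+B$ and $\frac{A(A+B-1)}{1-B}=A\bigl(\frac{A}{1-B}-1\bigr)$. So the per-transmission ratio is right, and you avoid computing the law of $N_{\mathrm{prev}}$ and $\mathbb{E}[Y]$. What renewal-reward buys the paper is the passage to the time-indexed definition of $\Delta_{\exp,\mathrm{avg}}$. Your substitute for that passage does not hold as written.

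\textbf{Where the conversion fails.} The independence you cite yields only $\mathbb{E}[c_{\exp}(X_i,a_i)\mathbf{1}\{N(T')>i\}]=\frac{A+B-1}{\alpha}\mathbb{E}[e^{\alpha X_i}\mathbf{1}\{N(T')>i\}]$. It does not give $\mathbb{E}[e^{\alpha X_i}\mathbf{1}\{N(T')>i\}]=y_i\,\mathbb{P}(N(T')>i)$. The reason is that $X_i$ and $\{N(T')>i\}=\{\sum_{j<i}a_j<T'\}$ are both functions of $a_0,\ldots,a_{i-1}$. For example, $a_{i-1}=n_2$ forces $X_i\ge n_2$ and also makes $\{N(T')>i\}$ less likely. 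Hence Wald's identity does not apply.

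The boundary term has the same problem. It involves $\mathbb{E}[e^{\alpha X_{N(T')}}]$ at a random index, and $\sup_i y_i<\infty$ alone only gives $\sum_i\mathbb{E}[e^{\alpha X_i}\mathbf{1}\{N(T')=i\}]=O(T')$, not $o(T')$. This matters because only $B<1$ is assumed. So $\epsilon_1e^{\alpha n_1}\ge 1$ is allowed, and then $\mathbb{E}[e^{\alpha X_i}\mid a_0,\ldots,a_{i-1}]$ is not uniformly bounded.

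\textbf{How to repair it.} The missing idea is a martingale version of Wald's identity. Let $\mathcal{H}_i$ be the $\sigma$-field of the first $i$ actions and channel outcomes. Then $e^{\alpha X_{i+1}}-A-Be^{\alpha X_i}$ is an $\mathcal{H}$-martingale difference, and $N(T')$ is a bounded $\mathcal{H}$-stopping time. Optional stopping gives
\begin{align}
(1-B)\,\mathbb{E}\Bigl[\sum_{i<N(T')}e^{\alpha X_i}\Bigr]=A\,\mathbb{E}[N(T')]+e^{\alpha n_1}-\mathbb{E}\bigl[e^{\alpha X_{N(T')}}\bigr],
\end{align}
which already yields the upper bound. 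For the lower bound, pick $q>1$ with $\xi\epsilon_1e^{q\alpha n_1}+\bar{\xi}\epsilon_2e^{q\alpha n_2}<1$; this is possible by continuity since $B<1$. Then $\sup_i\mathbb{E}[e^{q\alpha X_i}]<\infty$, so $\mathbb{E}[\max_{i\le T'/n_1+1}e^{\alpha X_i}]=O(T'^{1/q})=o(T')$. This controls both $e^{\alpha X_{N(T')}}$ and the cost of the last, partially elapsed transmission. Alternatively, your fallback option of renewal reward over inter-reception cycles is exactly the paper's proof.
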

The proof can be found in Appendix \ref{app:StationaryRandomizedExpAoI}.

\begin{lemma}\label{lem:OptimalBetaExpAoI}
	Let $\xi_{\exp}^*$ be an optimal parameter for a stationary randomized policy minimizing the average exponential age penalty, and let
	$A(\xi)$, $B(\xi)$, and $n_{\mathrm{avg}}(\xi)$ be defined as in Lemma \ref{lem:StationaryRandomizedExpAoI}.
	Then every interior optimizer $\xi_{\exp}^* \in (0,1)$ satisfies
	\begin{align} \label{eq:stationaryConditionCompactExp}
		 & \Bigl(A'(\xi)\bigl(A(\xi)+B(\xi)-1\bigr)+A(\xi)\bigl(A'(\xi)+B'(\xi)\bigr)\Bigr)\, \notag \\
		 & n_{\mathrm{avg}}(\xi)\,(1-B(\xi))
		-
		A(\xi)\bigl(A(\xi)+B(\xi)-1\bigr) \notag                                                     \\
		 & \Bigl(n_{\mathrm{avg}}'(\xi)(1-B(\xi))-n_{\mathrm{avg}}(\xi)B'(\xi)\Bigr)
		=0,
	\end{align}
	which is equivalent to
	a quadratic equation in $\xi$.
	Consequently, $\xi_{\exp}^*$ can be found by
	evaluating $\Delta_{\exp,\mathrm{avg}}(\xi)$ at $\xi=0$, $\xi=1$, and at
	the roots of the equation above
	that lie in $[0,1]$ and satisfy $B(\xi)<1$, and
	then selecting the parameter with the smallest
	average exponential \gls{aoi}.

\end{lemma}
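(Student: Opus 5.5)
The plan is to treat $\Delta_{\exp,\mathrm{avg}}(\xi)$ from Lemma~\ref{lem:StationaryRandomizedExpAoI} as a rational function of $\xi$ and apply first-order calculus on the feasible set. First I rewrite the closed form as
\begin{align}
	\Delta_{\exp,\mathrm{avg}}(\xi)=\frac{1}{\alpha}\,\frac{N(\xi)}{D(\xi)},\qquad
	N(\xi)\triangleq A(\xi)\bigl(A(\xi)+B(\xi)-1\bigr),\quad
	D(\xi)\triangleq n_{\mathrm{avg}}(\xi)\bigl(1-B(\xi)\bigr).
\end{align}
Since $A$, $B$ and $n_{\mathrm{avg}}$ are affine in $\xi$, the set $\{\xi\in[0,1]:B(\xi)<1\}$ is a (relatively open at the $B=1$ end) subinterval of $[0,1]$. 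On it $D>0$, because $n_{\mathrm{avg}}>0$ and $1-B>0$. Hence $\Delta_{\exp,\mathrm{avg}}$ is smooth there.

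At an interior minimizer the derivative must vanish, so $N'D-ND'=0$. Here $N'=A'(A+B-1)+A(A'+B')$ by the product rule, and $D'=n_{\mathrm{avg}}'(1-B)-n_{\mathrm{avg}}B'$. Substituting these expressions gives exactly \eqref{eq:stationaryConditionCompactExp}.

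For the claim that this is a quadratic equation, I note that $N$ and $D$ are each polynomials of degree at most two, so $N'D-ND'$ has degree at most three. Writing $N=N_2\xi^2+N_1\xi+N_0$ and $D=D_2\xi^2+D_1\xi+D_0$, the $\xi^3$ coefficient is $2N_2D_2-2N_2D_2=0$. So the equation has degree at most two, and its roots are available in closed form. If it degenerates to degree one or to the zero polynomial, this is handled trivially.

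For the final step, I argue that the minimum over the feasible set is attained either at an admissible endpoint or at an interior critical point. The key observation is that $A(\xi)+B(\xi)=\xi e^{\alpha n_1}+\bar{\xi}e^{\alpha n_2}\ge e^{\alpha n_1}>1$, so $N(\xi)>0$, with $A$ bounded away from zero. Consequently, if the feasible interval is cut off at a point where $B\to1$, then $\Delta_{\exp,\mathrm{avg}}\to\infty$ there. The infimum is therefore attained on a compact subinterval, at $\xi=0$, at $\xi=1$ (when feasible), or at a stationary point.

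The main obstacle is this boundary behaviour near $B=1$, which I settle with the positivity $A+B>1$ above; the remaining steps are routine algebra. Comparing the finitely many candidates then yields $\xi_{\exp}^*$.
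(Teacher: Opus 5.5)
Your proof is correct and follows the same route as the paper. You apply the quotient rule to $N/D$ to get the first-order condition, reduce it to a quadratic, and compare finitely many candidates. You also go beyond the paper in two places. You verify the cancellation of the $\xi^3$ coefficient explicitly. You also show $\Delta_{\exp,\mathrm{avg}}\to\infty$ as $B(\xi)\uparrow 1$ (using $A+B>1$), which is genuinely needed: the feasible set can be half-open, and the paper's appeal to continuity alone does not guarantee a minimizer. One small correction: the feasibility caveat applies to $\xi=0$ as well, since $B(0)=\epsilon_2e^{\alpha n_2}$ may exceed $1$.
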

The proof can be found in Appendix \ref{app:OptimalBetaExpAoI}.
We now state a corresponding version of Theorem \ref{lem:ConstantSupoptimalityCertificate} for the exponential \gls{aoi} case.
\begin{theorem}\label{lem:ConstantSupoptimalityCertificateExpAoI}
	Assume that $\rho_1 < 1$ and $\rho_2 < 1$.
	If the optimal parameter $\xi_{\exp}^*$ defined in Lemma \ref{lem:OptimalBetaExpAoI}
	satisfies $\xi_{\exp}^* \in (0,1)$ and
	if the average exponential age at this parameter is strictly smaller than
	the average exponential age at $\xi = 0$ and $\xi = 1$, then
	the two constant schedules that always choose $n_1$ or $n_2$ are suboptimal within the
	class of deterministic schedules $\Pi_D$.
\end{theorem}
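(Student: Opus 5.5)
The proof follows the same two-step pattern as Theorem~\ref{lem:ConstantSupoptimalityCertificate}. First, the stationary randomized policy with parameter $\xi_{\exp}^*$ strictly beats both constant schedules; this is the hypothesis. Second, for every stationary randomized policy with $B(\xi)<1$ there is a deterministic schedule whose average exponential age is no larger. Chaining the two gives a $\pi\in\Pi_D$ with $\Delta_{\exp,\mathrm{avg}}(\pi)<\min\{\Delta_{\exp,\mathrm{avg}}(\xi=0),\Delta_{\exp,\mathrm{avg}}(\xi=1)\}$. So both constant schedules are suboptimal in $\Pi_D$.

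The assumption $\rho_1,\rho_2<1$ matters in two places. It makes both constant schedules stable: condition (\ref{eq:ExpoAgeExistenceCondition}) with $P=1$ reads $\rho_e<1$. It also gives $B(\xi)=\xi\rho_1+\bar\xi\rho_2<1$ for every $\xi\in[0,1]$, so Lemma~\ref{lem:StationaryRandomizedExpAoI} applies at $\xi_{\exp}^*$ and all quantities compared are finite.

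For the second step I would use the belief-MDP reformulation of Section~\ref{sec:OptimalityConditions}. In the exponential case the natural state is $z_i=\sum_x b_i(x)e^{\alpha x}$. It evolves linearly: choosing $n_e$ gives $z_{i+1}=\beta_e+\rho_e z_i$, and the stage cost $\tilde c_{\exp}=K_e z_i$ is linear in $z_i$.

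Under the randomized policy, conditioning on the realized action sequence $\mathbf a=(a_0,a_1,\dots)$ produces, for each realization, exactly the deterministic schedule $\mathbf a$ with its own $z$-trajectory. Hence
\begin{align}
\mathbb{E}\Bigl[\sum_{i=0}^T \tilde c_{\exp}(b_i,a_i)\Bigr]=\mathbb{E}_{\mathbf a}\Bigl[C_T(\mathbf a)\Bigr],
\end{align}
where $C_T(\mathbf a)$ is the deterministic cumulative cost of schedule $\mathbf a$. The denominator $\sum_i a_i$ concentrates at $T\,n_{\mathrm{avg}}(\xi)$ almost surely by the strong law of large numbers.

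I would then argue that $C_T(\mathbf a)/\sum_{i\le T}a_i$ converges almost surely to $\Delta_{\exp,\mathrm{avg}}(\xi)$. The route is that $z_i$ is a random affine iteration $z_{i+1}=\beta_{a_i}+\rho_{a_i}z_i$ with contraction factors $\rho_e<1$. It is therefore uniformly bounded by $\max_e \beta_e/\bar\rho_e$ plus a geometrically decaying term, and it forgets its initial condition. Stationarity and ergodicity of the i.i.d.\ action sequence, via Birkhoff's theorem or a direct renewal argument, then gives $\frac1T\sum_i K_{a_i}z_i\to \mathbb{E}[K_{a}z_\infty]$ almost surely. This limit must coincide with the ratio formula of Lemma~\ref{lem:StationaryRandomizedExpAoI}. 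Because the ratio limit is a constant holding almost surely, there exists at least one realization $\mathbf a$ whose $\limsup$ ratio equals $\Delta_{\exp,\mathrm{avg}}(\xi)$. That realization is a deterministic schedule in $\Pi_D$ achieving the value. Bounded convergence, justified by the uniform bound on $z_i$, also links the almost-sure statement to the expectation-based definition.

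The main obstacle is the almost-sure ergodic step. I must verify that the $\limsup$ of the ratio for a typical sequence equals the expected-value-based quantity in Lemma~\ref{lem:StationaryRandomizedExpAoI}, not merely that the averages agree in expectation. The uniform boundedness of $z_i$, which uses $\rho_e<1$, is what makes this work. If the ergodic argument becomes delicate, my fallback is weaker but sufficient. By Fatou's lemma applied to the bounded ratios, $\mathbb{E}_{\mathbf a}[\limsup \text{ratio}]\ge\Delta_{\exp,\mathrm{avg}}(\xi)$ fails to yield the needed direction, so I would instead use reverse Fatou: $\mathbb{E}_{\mathbf a}[\liminf]\le\Delta_{\exp,\mathrm{avg}}(\xi)$. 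Combined with a.s.\ convergence of the ratio via the ergodic theorem for the Markov chain $(z_i,a_i)$, this again gives the existence of a good realization.
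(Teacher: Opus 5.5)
Your argument is correct, but it takes a genuinely different route from the paper's.

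The paper works pathwise, with both the actions and the transmission errors random. The renewal-reward theorem gives almost-sure convergence of the realized time average, and Fubini--Tonelli selects an action sequence $\bm{a}^*$ along which this holds almost surely over the errors. Most of the proof is then spent exchanging limit and expectation over the errors. That step needs integrability of $\sup_{N}\frac{1}{N}\sum_{i}e^{\alpha Y_i}$, which the paper obtains from a martingale decomposition, von Bahr--Esseen, Doob's $L^q$ maximal inequality and a dyadic union bound.

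You average out the errors first. By linearity, the error-averaged cost of a fixed schedule up to transmission $T$ is exactly $\sum_{i\le T}K_{a_i}z_i$. Moreover $z_0=e^{\alpha n_1}\le\beta_1/\bar{\rho}_1$, and each map $z\mapsto\beta_e+\rho_e z$ pulls $z$ toward its fixed point $\beta_e/\bar{\rho}_e$ by the factor $\rho_e<1$. Hence $z_i\le z_U$ for every schedule. So there is no limit--expectation exchange over the errors at all, and only the i.i.d.\ action sequence remains random.

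Your route buys three things. It is much shorter, it shows exactly where $\rho_1,\rho_2<1$ enters, and it transfers verbatim to Theorem~\ref{lem:ConstantSupoptimalityCertificate}, using $m_i$, the maps $f_e$ and contraction factors $\epsilon_e$. The paper's route, in exchange, gives a pathwise statement for the selected schedule and lands directly on the renewal-reward formula.

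Three points would make your version complete.

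First, make the ergodic step explicit through the backward iterate $z_i^{\mathrm{st}}=\sum_{k\ge 0}\beta_{a_{i-1-k}}\prod_{j=0}^{k-1}\rho_{a_{i-1-j}}$, defined on a two-sided i.i.d.\ extension of the actions. Birkhoff's theorem applies to the stationary ergodic sequence $K_{a_i}z_i^{\mathrm{st}}$, and $|z_i-z_i^{\mathrm{st}}|\le(\max_e\rho_e)^i z_U$. Do not rely on a Harris-chain ergodic theorem for $(z_i,a_i)$: from a fixed start the chain visits only countably many points, while its invariant law is in general non-atomic.

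Second, carry out the identification you only assert. Since $z_i^{\mathrm{st}}$ depends only on actions before index $i$, the almost-sure limit of the ratio is $\mathbb{E}[K_{a_0}]\,\mathbb{E}[z_0^{\mathrm{st}}]/n_{\mathrm{avg}}(\xi)$. Stationarity gives $\mathbb{E}[z_0^{\mathrm{st}}]=A(\xi)/(1-B(\xi))$. The identity $\beta_e+\rho_e=e^{\alpha n_e}$ gives $\mathbb{E}[K_{a_0}]=(A(\xi)+B(\xi)-1)/\alpha$. The resulting product is exactly the expression in Lemma~\ref{lem:StationaryRandomizedExpAoI}.

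Third, drop the Fatou fallback. The names of Fatou's lemma and its reverse are swapped there. More importantly, a bound on $\mathbb{E}[\liminf]$ cannot certify a schedule whose value is a $\limsup$, and once you have almost-sure convergence the fallback is unnecessary.
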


The proof can be found in Appendix \ref{app:ConstantSupoptimalityCertificateExpAoI}.

\section{Numerical Results}\label{sec:numericalResults}

In this section, we present numerical
evaluations of the results
to gain insight into age-optimal transmission.

\subsection{Visualization of the Optimality Regions}
We start with evaluations of the optimality regions given by
Theorems \ref{thm:chooseN1OptimalUnified}, \ref{thm:chooseN2OptimalUnified},
\ref{thm:chooseN1OptimalExpUnified}, and \ref{thm:chooseN2OptimalExpUnified}.
In addition, the regions given by Theorem~\ref{lem:ConstantSupoptimalityCertificate}
and \ref{lem:ConstantSupoptimalityCertificateExpAoI} where
a constant schedule is not optimal are also marked in the figures.
The regions are calculated
by numerically evaluating the conditions
in Theorems \ref{thm:chooseN1OptimalUnified}, \ref{thm:chooseN2OptimalUnified},
\ref{thm:chooseN1OptimalExpUnified}, and \ref{thm:chooseN2OptimalExpUnified} for
each parameter tuple with $\epsilon_1 > \epsilon_2$.
The results for
the average age are shown in Fig. \ref{fig:optimalityRegionsAoI}.

\begin{figure*}[!t]
	\centering

	\subfloat{%
		\includegraphics[width=0.45\textwidth]{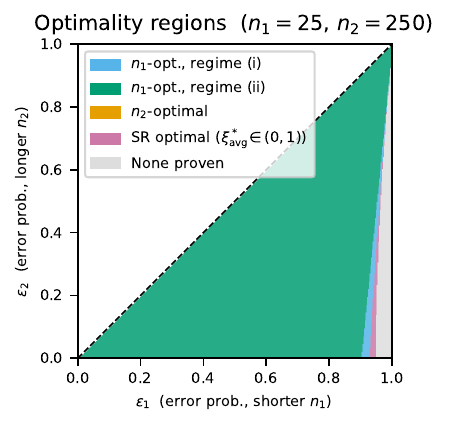}
		\label{fig:1a}
	}\hfil
	\subfloat{%
		\includegraphics[width=0.45\textwidth]{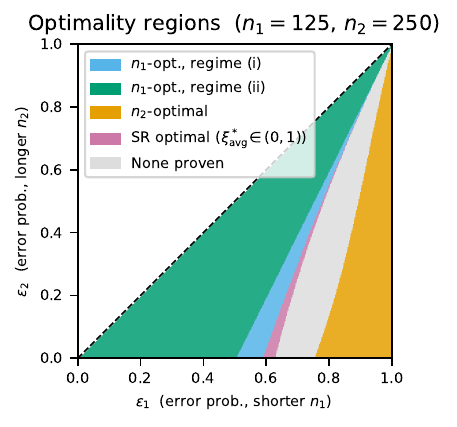}
		\label{fig:1b}
	}

	\medskip

	\subfloat{%
		\includegraphics[width=0.45\textwidth]{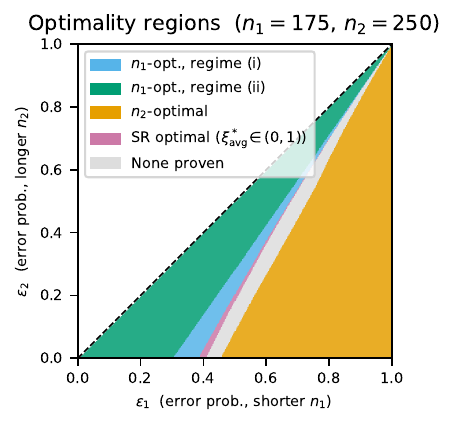}
		\label{fig:1c}
	}\hfil
	\subfloat{%
		\includegraphics[width=0.45\textwidth]{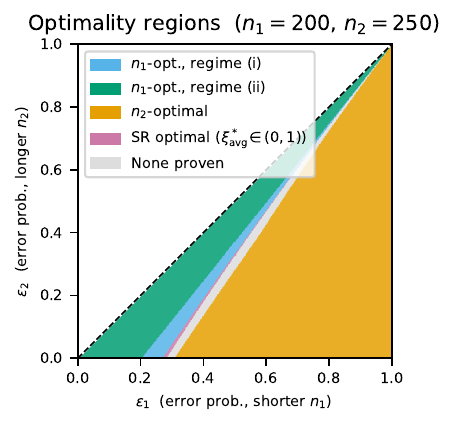}
		\label{fig:1d}
	}

	\caption{Optimality regions for different values of $n_1$ and $n_2$.}
	\label{fig:optimalityRegionsAoI}
\end{figure*}
For the evaluations, we fix the longer blocklength $n_2 = 250$ and vary the shorter blocklength $n_1$.
The theorems cover a large fraction of the parameter space,
although a region
remains where it is not clear if a constant schedule is optimal or not.
Moreover, the region in which
Theorem~\ref{lem:ConstantSupoptimalityCertificate} certifies the suboptimality of constant
schedules is relatively small.
Next, we compare these optimality regions to the corresponding regions for the exponential age criterion.
The results are shown in Fig. \ref{fig:optimalityRegionsExpAoI}.
\begin{figure*}[!t]
	\centering

	\subfloat{%
		\includegraphics[width=0.48\textwidth]{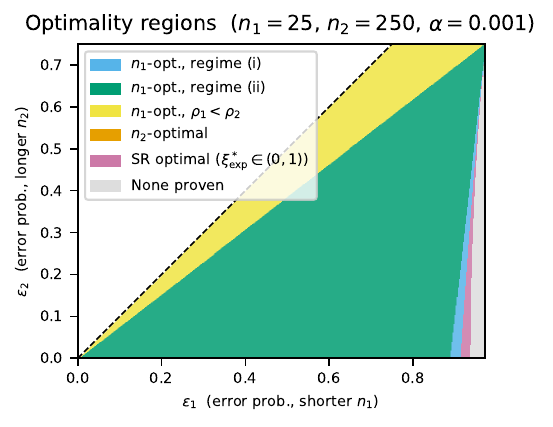}
		\label{fig:2a}
	}\hfil
	\subfloat{%
		\includegraphics[width=0.48\textwidth]{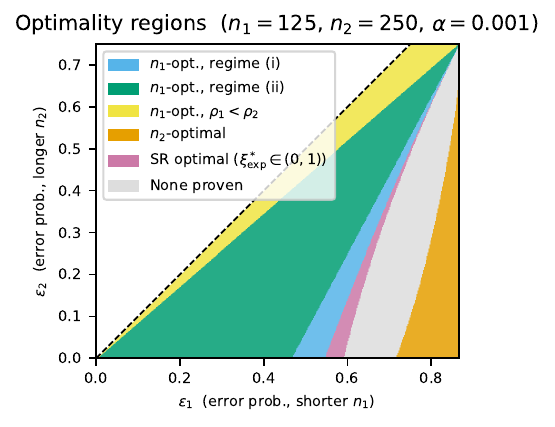}
		\label{fig:2b}
	}

	\medskip

	\subfloat{%
		\includegraphics[width=0.48\textwidth]{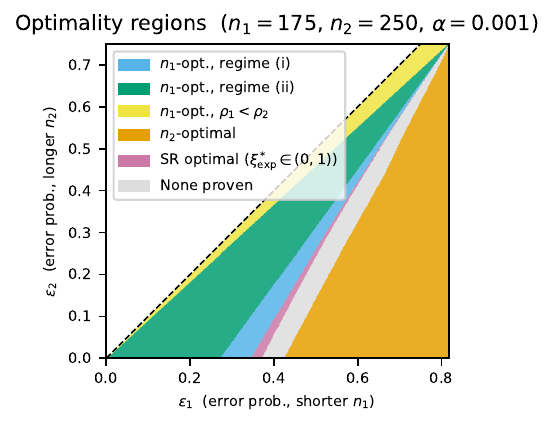}
		\label{fig:2c}
	}\hfil
	\subfloat{%
		\includegraphics[width=0.48\textwidth]{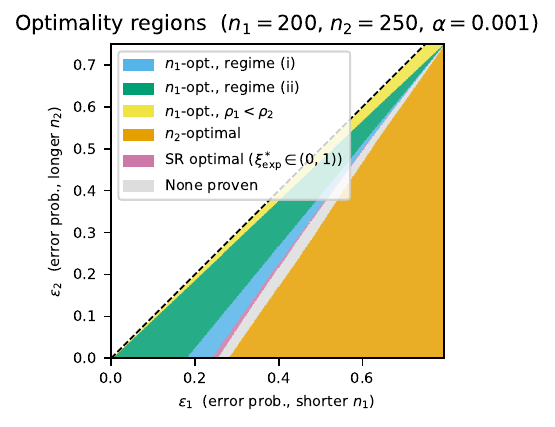}
		\label{fig:2d}
	}

	\caption{Optimality regions for the exponential age criterion for
		different values of $n_1$ and $n_2$. The view is restricted to the region
		where $\rho_1 < 1$ and $\rho_2 < 1$.}
	\label{fig:optimalityRegionsExpAoI}
\end{figure*}
The general behavior of the regions is similar to that in the average \gls{aoi} case.
Interestingly, for both metrics, the region where none of the theorems apply is the largest
for the blocklengths $n_1 = 125, n_2 = 250$.
A notable observation is that the boundary of the certified $n_1$-optimality region nearly coincides
with the boundary of the region in which the constant $n_1$ schedule
is provably outperformed by non-constant schedules.
This suggests that the sufficient conditions in
Theorems \ref{thm:chooseN1OptimalUnified} and \ref{thm:chooseN1OptimalExpUnified}
may also be necessary conditions
for the optimality of a constant schedule with the shorter blocklength.
To see this, note that the pink region appears to separate the gray region
from the region in which the constant $n_1$ schedule is optimal.
Moreover, moving to the right in the plot corresponds to increasing $\epsilon_1$.
Since a larger $\epsilon_1$ makes transmissions with blocklength $n_1$ less
reliable, the constant $n_1$ schedule cannot become optimal again further to the right.
The two-blocklength results illustrate the structure of the optimality regions,
but in practical applications, there are often more than two blocklengths available.
Therefore, in the next subsection, we give a multi-blocklength example based on the
extension described in Section~\ref{sec:MultiBlocklengthOptimalityConditions}.

\subsection{Example: Coding over a \gls{bsc}}
We consider the problem of transmitting a fixed payload
of $k=200$ bits over a \gls{bsc} with
a raw bit error probability of $p=0.05$ so that
the average age is minimized.
We use the physical layer module of the Sionna library \cite{sionna} to
simulate a practical system.
The block error probability was estimated for different blocklengths of the 5GLDPCCode.
The 5GLDPCCode from Sionna is consistent with the 5G LDPC code defined in 3GPP TS 38.212.
The number of iterations for the decoder was set to 20, and the number
of transmissions per blocklength to estimate the error probabilities was set to $200 000$. We simulated the
blocklengths $\{325, 350, 375, 400, 425\}$.
The corresponding error probabilities as well as the application of the method described in
Sec. \ref{sec:MultiBlocklengthOptimalityConditions} are summarized in Table \ref{tab:multi_blocklength_optimality}.
\begin{table*}[!t]
	\centering
	\renewcommand{\arraystretch}{1.25}
	\setlength{\tabcolsep}{4pt}

	\caption{Multi-blocklength optimality check. Global bounds:
		$L = 533.3333$, $U = 551.3441$. Globally optimal blocklength: $350$.}
	\label{tab:multi_blocklength_optimality}

	\scriptsize
	\begin{tabular}{
		|>{\centering\arraybackslash}m{2.3cm}
		|>{\centering\arraybackslash}m{1.55cm}
		|>{\centering\arraybackslash}m{1.55cm}
		|>{\centering\arraybackslash}m{1.55cm}
		|>{\centering\arraybackslash}m{1.55cm}
		|>{\centering\arraybackslash}m{1.55cm}
		|>{\centering\arraybackslash}m{1.55cm}|}
		\hline
		 &
		\makecell{$n_1 = 325$     \\ $\epsilon = 0.22$}
		 &
		\makecell{$n_2 = 350$     \\ $\epsilon = 0.07$}
		 &
		\makecell{$n_3 = 375$     \\ $\epsilon = 0.016$}
		 &
		\makecell{$n_4 = 400$     \\ $\epsilon = 0.004$}
		 &
		\makecell{$n_5 = 425$     \\ $\epsilon = 0.001$}
		 &
		\makecell{\textbf{Global} \\ \textbf{optimal?}}
		\\
		\hline

		\makecell{$n^* = 325$,    \\ $\epsilon^* = 0.22$}
		 &
		\cellcolor{graycell}--
		 &
		\cellcolor{redcell}
		 &
		\cellcolor{redcell}
		 &
		\cellcolor{bluecell}(i)
		 &
		\cellcolor{greencell}(ii)
		 &
		\cellcolor{graycell}
		\\
		\hline

		\makecell{$n^* = 350$,    \\ $\epsilon^* = 0.07$}
		 &
		\cellcolor{goldcell}Thm.4
		 &
		\cellcolor{graycell}--
		 &
		\cellcolor{greencell}(ii)
		 &
		\cellcolor{greencell}(ii)
		 &
		\cellcolor{greencell}(ii)
		 &
		\cellcolor{lightgreen}{$\star$ opt.}
		\\
		\hline

		\makecell{$n^* = 375$,    \\ $\epsilon^* = 0.016$}
		 &
		\cellcolor{goldcell}Thm.4
		 &
		\cellcolor{redcell}
		 &
		\cellcolor{graycell}--
		 &
		\cellcolor{greencell}(ii)
		 &
		\cellcolor{greencell}(ii)
		 &
		\cellcolor{graycell}
		\\
		\hline

		\makecell{$n^* = 400$,    \\ $\epsilon^* = 0.004$}
		 &
		\cellcolor{redcell}
		 &
		\cellcolor{redcell}
		 &
		\cellcolor{redcell}
		 &
		\cellcolor{graycell}--
		 &
		\cellcolor{greencell}(ii)
		 &
		\cellcolor{graycell}
		\\
		\hline

		\makecell{$n^* = 425$,    \\ $\epsilon^* = 0.001$}
		 &
		\cellcolor{redcell}
		 &
		\cellcolor{redcell}
		 &
		\cellcolor{redcell}
		 &
		\cellcolor{redcell}
		 &
		\cellcolor{graycell}--
		 &
		\cellcolor{graycell}
		\\
		\hline
	\end{tabular}
\end{table*}
In each row, the blocklength on the left is compared with all the other blocklengths.
The bounds $L$ and $U$ are computed as described in Section \ref{sec:MultiBlocklengthOptimalityConditions}.
The red cells indicate that for the corresponding pairwise comparison
no strictly positive age gap can be guaranteed. The orange, blue, and green cells indicate
which theorem and parameter regime is applied to guarantee a strictly positive age gap.
The overall result is that for this set of possible blocklengths with the corresponding
error probabilities, it is optimal to always transmit with the blocklength $n_2 = 350$ with respect
to the time-average age criterion.
A similar evaluation can be performed for the exponential age criterion, as shown
in Tab. \ref{tab:multi_blocklength_optimality_exp_aoi}.
\begin{table*}[!t]
	\centering
	\renewcommand{\arraystretch}{1.25}
	\setlength{\tabcolsep}{4pt}

	\caption{Multi-blocklength optimality check for exponential AoI.
		$\alpha = 0.0050$, $L_{\exp} = 17.0180$, $U_{\exp} = 21.0937$.
		Globally optimal blocklength: $375$.}
	\label{tab:multi_blocklength_optimality_exp_aoi}

	\scriptsize
	\begin{tabular}{
		|>{\centering\arraybackslash}m{2.3cm}
		|>{\centering\arraybackslash}m{1.65cm}
		|>{\centering\arraybackslash}m{1.65cm}
		|>{\centering\arraybackslash}m{1.65cm}
		|>{\centering\arraybackslash}m{1.65cm}
		|>{\centering\arraybackslash}m{1.65cm}|}
		\hline
		 &
		\makecell{$n_1 = 350$     \\ $\epsilon = 0.07$}
		 &
		\makecell{$n_2 = 375$     \\ $\epsilon = 0.016$}
		 &
		\makecell{$n_3 = 400$     \\ $\epsilon = 0.004$}
		 &
		\makecell{$n_4 = 425$     \\ $\epsilon = 0.001$}
		 &
		\makecell{\textbf{Global} \\ \textbf{optimal?}}
		\\
		\hline

		\makecell{$n^* = 350$,    \\ $\epsilon^* = 0.07$}
		 &
		\cellcolor{graycell}--
		 &
		\cellcolor{redcell}
		 &
		\cellcolor{redcell}
		 &
		\cellcolor{redcell}
		 &
		\cellcolor{graycell}
		\\
		\hline

		\makecell{$n^* = 375$,    \\ $\epsilon^* = 0.016$}
		 &
		\cellcolor{goldcell}Thm.6
		 &
		\cellcolor{graycell}--
		 &
		\cellcolor{greencell}(ii)
		 &
		\cellcolor{greencell}(ii)
		 &
		\cellcolor{lightgreen}{$\star$ opt.}
		\\
		\hline

		\makecell{$n^* = 400$,    \\ $\epsilon^* = 0.004$}
		 &
		\cellcolor{redcell}
		 &
		\cellcolor{redcell}
		 &
		\cellcolor{graycell}--
		 &
		\cellcolor{greencell}(ii)
		 &
		\cellcolor{graycell}
		\\
		\hline

		\makecell{$n^* = 425$,    \\ $\epsilon^* = 0.001$}
		 &
		\cellcolor{redcell}
		 &
		\cellcolor{redcell}
		 &
		\cellcolor{redcell}
		 &
		\cellcolor{graycell}--
		 &
		\cellcolor{graycell}
		\\
		\hline
	\end{tabular}
\end{table*}
In this case, however,
we observed that the applicability of the theorems to guarantee the optimal transmission scheme
is reduced. For example, when we apply the method to the set of blocklengths
and error probabilities in Tab.~\ref{tab:multi_blocklength_optimality}, but with
the exponential age criterion with varying parameters $\alpha$,
no blocklength can be guaranteed to be optimal.
In general, our numerical evaluations indicate that as
the number of tuples $(n_i, \epsilon_i)$ increases, the
portion of the parameter space where the method from Section \ref{sec:MultiBlocklengthOptimalityConditions}
can be applied to guarantee
the optimality of a constant schedule decreases. This is expected from
the theorems
because the number of pairwise comparison conditions that need to be satisfied
increases.

\section{Conclusion}\label{sec:conclusion}

In this work, we investigated status updating without
feedback when the
transmitter can choose from a discrete
set of blocklength error probability pairs.
We first derived closed-form expressions
for the average and average exponential \gls{aoi} for
deterministic periodic schedules.
Interestingly, our numerical evaluations of these expressions
revealed that it is generally not optimal to use
a constant blocklength and that the gains
achievable by more sophisticated schedules
with different blocklengths can be significant
for the average exponential \gls{aoi} metric.
Motivated by these observations,
our main contribution is the derivation of
easily checkable sufficient conditions
under which a constant schedule is optimal.
Our second main contribution is the identification
of parameter regimes where a non-constant
deterministic schedule
outperforms the best constant schedule.
Numerical evaluations of these conditions
show that, for a large part of the parameter space,
the constant schedules are optimal.
The results also indicate that the optimality conditions
for a constant schedule with the shorter blocklength are not
only sufficient but might also be necessary.
Extensions of the optimality conditions for the case
of $K > 2$ blocklengths were also given.
A practical example of provable \gls{aoi} optimal transmission
over a \gls{bsc} with different blocklengths of a 5G LDPC code was also provided,
which demonstrated the applicability of the results.
The practical implications of this work are twofold.
First, for a large part of the parameter space,
it is optimal to use a constant blocklength.
Moreover, for the \gls{aoi} metric,
the achievable gains in the regions where
a constant schedule is not optimal are quite small.

Nevertheless, this work also demonstrates that
this is not the case for other age-related
metrics, such as the average exponential
\gls{aoi}. The second insight is
therefore that, for this metric, the parameter
regions where a constant schedule is not
optimal are also quite small, but
the achievable gains in these regions can be significant.

\appendices
\section{Proof of Theorem \ref{theo:AoICyclicClosedForm}}\label{app:AoICyclicClosedForm}

\begin{proof}
	First, note that the age process has the form of a Markov renewal process.
	Define state $i$ as the state in which the next transmission
	after a successful reception uses blocklength $s_i$.
	Let $Y_i$ denote the corresponding inter-reception
	interval until the next successful transmission.
	By the Markov renewal reward theorem, the average \gls{aoi} is given by
	% A good reference for this could be "Markov Renewal Theory: A Survey"
	\begin{align}\label{eq:MarkovRenewalReward}
		\Delta_{\mathrm{avg}}
		=
		\frac{\sum_{i=0}^{P-1} \nu_i\mathbb{E}[Q_i]}{\sum_{i=0}^{P-1} \nu_i\mathbb{E}[Y_i]},
	\end{align}
	where $Q_i$ is the accumulated area under the
	\gls{aoi} curve during interval $Y_i$ and
	$\nu_i$ is the stationary probability of being in state $i$.
	All schedule-position indices in this proof are interpreted modulo $P$.
	If an inter-reception interval starts in state $i$,
	then the AoI at the beginning of the interval equals the blocklength
	of the previous transmission in the schedule.
	The area $Q_i$ accumulated during $Y_i$ that starts in state $i$ and
	its expected value are given by
	\begin{align}
		Q_i
		  & =
		\int_0^{Y_i}
		\left(s_{i-1}+t\right)\,dt
		=
		s_{i-1}Y_i+\frac{Y_i^2}{2}, \\
		\mathbb{E}[Q_i]
		= &
		s_{i-1}\mathbb{E}[Y_i]
		+
		\frac{\mathbb{E}[Y_i^2]}{2}.
		\label{eq:EQi_general}
	\end{align}
	Next, we need expressions for $\mathbb{E}[Y_i]$ and $\mathbb{E}[Y_i^2]$.
	Conditioning on the first transmission outcome gives
	\begin{align}
		\mathbb{E}[Y_i]
		 & =
		p_i s_i
		+
		q_i\left(s_i+\mathbb{E}[Y_{i+1}]\right) \notag \\
		 & =
		s_i+q_i\mathbb{E}[Y_{i+1}],
		\qquad i=0,\ldots,P-1.
		\label{eq:EYi_recursion_average}
	\end{align}
	Iterating~\eqref{eq:EYi_recursion_average} over one full period yields
	\begin{align}
		 & \mathbb{E}[Y_i]
		=
		s_i
		+
		q_i s_{i+1}
		+
		q_i q_{i+1} s_{i+2}
		+
		\cdots \notag      \\
		 & \quad
		+
		\left(\prod_{r=0}^{P-2}q_{i+r}\right)s_{i+P-1}
		+
		\left(\prod_{r=0}^{P-1}q_{i+r}\right)
		\mathbb{E}[Y_i].
	\end{align}
	Solving for $\mathbb{E}[Y_i]$ gives (\ref{eq:EY}) in the theorem,
	where the empty product $\prod_{r=0}^{-1}(\cdot)$ is defined as $1$.
	This expression is finite whenever $\prod_{j=0}^{P-1} q_j < 1$.
	For $\mathbb{E}[Y_i^2]$, we again condition on the first transmission outcome to get
	\begin{align}
		 & \mathbb{E}[Y_i^2]
		=
		p_i \bigl(s_i\bigr)^2
		+
		q_i\mathbb{E}\!\left[
			\left(s_i+Y_{i+1}\right)^2
		\right] \notag       \\
		 & =
		p_i \bigl(s_i\bigr)^2
		+
		q_i
		\left(
		\bigl(s_i\bigr)^2
		+
		2s_i\mathbb{E}[Y_{i+1}]
		+
		\mathbb{E}[Y_{i+1}^2]
		\right) \notag       \\
		 & =
		\bigl(s_i\bigr)^2
		+
		q_i
		\left(
		2s_i\mathbb{E}[Y_{i+1}]
		+
		\mathbb{E}[Y_{i+1}^2]
		\right),
		\label{eq:EYisq_recursion_average}
	\end{align}
	for $i=0,\ldots,P-1$.
	Iterating~\eqref{eq:EYisq_recursion_average} over one full period gives
	\begin{align}
		\mathbb{E}[Y_i^2]
		 & =
		\sum_{k=0}^{P-1}
		\left(\prod_{r=0}^{k-1}q_{i+r}\right)
		\bigl(s_{i+k}\bigr)^2 \notag \\
		 & \quad
		+
		2
		\sum_{k=0}^{P-1}
		\left(\prod_{r=0}^{k}q_{i+r}\right)
		s_{i+k}
		\mathbb{E}[Y_{i+k+1}] \notag \\
		 & \quad
		+
		\left(\prod_{r=0}^{P-1}q_{i+r}\right)
		\mathbb{E}[Y_i^2].
	\end{align}
	Rearranging yields (\ref{eq:EYSquared}) in the theorem.
	Again, finiteness is ensured whenever $\prod_{j=0}^{P-1} q_j<1$.

	The only missing part is the stationary probabilities
	$\nu_i$. These states form a finite, aperiodic,
	irreducible Markov chain.
	By calculating the transition probabilities
	and solving the corresponding balance equations,
	the stationary distribution is given as
	$\nu_i
		=
		\frac{p_{i-1}}{\sum_{j=0}^{P-1} p_j},
		\text{ for } i=0,\ldots,P-1$.
	Inserting everything into~\eqref{eq:MarkovRenewalReward}
	gives the closed-form expression for the average \gls{aoi}.
\end{proof}

\section{Proof of Theorem \ref{theo:ExponentialAoICyclicClosedForm}}\label{app:ExponentialAoICyclicClosedForm}

\begin{proof}
	The proof follows the same steps as the proof
	of Theorem \ref{theo:AoICyclicClosedForm}.
	The only difference is the calculation
	of the area under the exponential \gls{aoi} curve
	during one inter-reception interval $Y_i$.
	This area $Q_i^{\exp}$ and its expected
	value can be
	calculated as follows:
	\begin{align}
		Q_i^{\exp}
		= & \int_0^{Y_i} e^{\alpha(s_{i-1}+t)}\,dt
		=\frac{e^{\alpha s_{i-1}}}{\alpha}\left(e^{\alpha Y_i}-1\right), \\
		\mathbb{E}[Q_i^{\exp}]
		= &
		\frac{e^{\alpha s_{i-1}}}{\alpha}\left(M_i-1\right),
		\quad\!\! \text{with}\quad\!\!
		M_i  \triangleq \mathbb{E}\!\left[e^{\alpha Y_i}\right].
	\end{align}
	The derivation for $\mathbb{E}[Y_i]$ is the same as in the proof of Theorem \ref{theo:AoICyclicClosedForm}.
	For the $M_i$ terms, conditioning on the outcome of the first transmission in state $i$ yields
	\begin{align}
		M_i
		 & = p_i e^{\alpha s_i} + q_i e^{\alpha s_i}\, M_{i+1},
		\quad i=0,\ldots,P-1.
		\label{eq:Mi_recursion_general}
	\end{align}
	With $\rho_i \triangleq q_i e^{\alpha s_i}$ and $\gamma_i \triangleq p_i e^{\alpha s_i}$,
	\eqref{eq:Mi_recursion_general}
	can be written as
	\begin{align}
		M_i = \gamma_i + \rho_i M_{i+1}.
		\label{eq:Mi_affine}
	\end{align}
	Iterating~\eqref{eq:Mi_affine} over one full period gives
	\begin{align}
		M_i
		 & = \gamma_i + \rho_i \gamma_{i+1} + \rho_i \rho_{i+1} \gamma_{i+2} + \cdots \\
		 & + \left(\prod_{r=0}^{P-2} \rho_{i+r}\right)\gamma_{i+P-1}
		+ \left(\prod_{r=0}^{P-1} \rho_{i+r}\right) M_i.
	\end{align}
	Rearranging yields the closed form
	\begin{align}
		M_i
		=
		\frac{
			\displaystyle \sum_{k=0}^{P-1}
			\left(\prod_{r=0}^{k-1} \rho_{i+r}\right)\, \gamma_{i+k}
		}{
			\displaystyle 1-\prod_{r=0}^{P-1} \rho_{i+r}
		},
		\qquad i=0,\ldots,P-1,
		\label{eq:Mi_closed_form_general}
	\end{align}
	where the empty product $\prod_{r=0}^{-1}(\cdot)$ is defined as $1$.
	Since $\rho_i\ge 0$, finiteness of $M_i$ is ensured whenever
	\begin{align}
		\prod_{j=0}^{P-1} \rho_j
		=
		e^{\alpha\sum_{j=0}^{P-1} s_j}\prod_{j=0}^{P-1} q_j
		<1.
		\label{eq:Mi_finite_condition}
	\end{align}

	The stationary distribution is the same as in the
	proof of Theorem \ref{theo:AoICyclicClosedForm}.
	Combining everything gives the closed-form
	expression.
\end{proof}

\section{Proof of Lemma \ref{lem:DinkelbachTrafo}}\label{app:DinkelbachTrafo}

\begin{proof}
	The original objective $\Delta_{\mathrm{avg}}(\pi)$ in (\ref{eq:gObjective}) can be expressed as
	\begin{align}
		 & \Delta_{\mathrm{avg}}(\pi) = \limsup_{T\to\infty} \frac{C_T^{\pi}}{A_T^{\pi}}, \\
		 & \text{where } C_T^{\pi}
		= \sum_{i=0}^{T} \tilde{c}(b_i,a_i),\quad A_T^{\pi} = \sum_{i=0}^{T} a_i.
	\end{align}
	Now, consider the parametrized objective
	\begin{align}
		D_T^{\pi}(\lambda) = C_T^{\pi} - \lambda A_T^{\pi} = \sum_{i=0}^{T} (\tilde{c}(b_i,a_i) - \lambda a_i).
	\end{align}
	Note that the following holds for all $\lambda$:
	\begin{align}
		\frac{C_T^{\pi}}{A_T^{\pi}} - \lambda = \frac{D_T^{\pi}(\lambda)}{A_T^{\pi}}.
	\end{align}
	Taking the $\limsup$ on both sides gives
	\begin{align}
		\Delta_{\mathrm{avg}}(\pi) \leq \lambda \Leftrightarrow \limsup_{T\to\infty} \frac{D_T^{\pi}(\lambda)}{A_T^{\pi}} \leq 0
	\end{align}
	Using this, we can express our original optimization problem as follows:
	\begin{align}
		 & \inf_{\pi\in \Pi_D} \limsup_{T\to\infty} \frac{\sum_{i=0}^{T}\tilde{c}(b_i,a_i)}{\sum_{i=0}^{T}a_i}                  \\
		 & = \inf \left\{ \lambda: \inf_{\pi\in \Pi_D} \limsup_{T\to\infty} \frac{D_T^{\pi}(\lambda)}{A_T^{\pi}} \leq 0\right\} \\
		 & = \inf \left\{ \lambda: \inf_{\pi\in \Pi_D} \limsup_{T\to\infty} \frac{1}{T} \sum_{i=0}^{T} (\tilde{c}(b_i,a_i)
		- \lambda a_i) \leq 0\right\},\label{eq:NormalAverageCostTransform}
	\end{align}
	where the last equality holds because for every schedule $\pi \in \Pi_D$ and every $T \geq 0$,
	$	A_T^\pi = \sum_{i=0}^T a_i$,
	with $a_i \in \{n_1,n_2\}$, and hence
	\begin{align}
		(T+1)n_1 \leq A_T^\pi \leq (T+1)n_2.
	\end{align}
	We now write the argument of the limsup as
	\begin{align}
		\frac{D_T^\pi(\lambda)}{A_T^\pi}
		=
		\frac{D_T^\pi(\lambda)}{T+1}\cdot \frac{T+1}{A_T^\pi}.
	\end{align}
	Since the second factor on the RHS
	is uniformly bounded above and
	below by positive constants as
	$\frac{1}{n_2} \leq \frac{T+1}{A_T^\pi} \leq \frac{1}{n_1}$,
	it follows by standard properties of the $\limsup$
	that
	\begin{align}
		\limsup_{T\to\infty}\frac{D_T^\pi(\lambda)}{A_T^\pi}\leq 0
		\quad \Longleftrightarrow \quad
		\limsup_{T\to\infty}\frac{D_T^\pi(\lambda)}{T}\leq 0.
	\end{align}
\end{proof}

\section{Proof of Lemma \ref{lem:boundedcostLemmaAoI}} \label{app:boundedcostLemmaAoI}

\begin{proof}

	We start with the expression for the
	expected age $m_i$, which is given by
	\begin{align}
		m_i = \sum_{x\in \mathcal{S}} b_i(x)x = \mathbb{E}[X_i],
	\end{align}
	where $X_i$ denotes the age at the beginning of the $i$-th transmission.
	Next, we derive a recursion for $m_i$. Depending on action $a_i$, we have
	\begin{align}
		X_{i+1} =
		\begin{cases}
			a_i,       & \text{with probability } \bar{\epsilon}(a_i), \\
			X_i + a_i, & \text{with probability } \epsilon(a_i).
		\end{cases}
	\end{align}
	Taking expectations yields
	\begin{align}\label{eq:RecursionForMk}
		m_{i+1}
		 & = \bar{\epsilon}(a_i)a_i
		+ \epsilon(a_i)\big(m_i + a_i\big) \\
		 & = a_i + \epsilon(a_i)m_i.
	\end{align}
	Define the map $f_e(m) = n_e + \epsilon_e m$ for $e \in \{1,2\}$.
	Note that both maps have a unique fixed point, which is given by $n_e/\bar{\epsilon}_e$.
	Since $m_0=n_1\leq m_U$ and
	$f_e(m_U)\leq m_U$ for both actions, induction yields $m_i\leq m_U$ for all $i$.
	Moreover,
	\begin{align}
		\max\{m_L\!-m_{i+1},0\}
		\!\leq \epsilon(a_i)\max\{m_L-m_i,0\}.
	\end{align}
	Thus, with $\epsilon_{\max}\triangleq\max\{\epsilon_1,\epsilon_2\}<1$,
	the distance below $m_L$ is at most
	$\epsilon_{\max}^i\max\{m_L-m_0,0\}$ and hence converges to zero uniformly over
	all action sequences. This proves the lower bound $m_i\geq m_L-\delta$
	for all sufficiently large $i$.
\end{proof}

\section{Proof of Lemma \ref{lem:AoIBounds}}\label{app:AoIBounds}
\begin{proof}
	The upper bound is obtained by taking the minimum of the two constant schedules that always
	choose \(n_1\) or always choose \(n_2\).
	The lower bound is the combination of two bounds: The first bound is the average
	\gls{aoi} of a blocklength error probability tuple that is given by
	the shortest blocklength and lowest error probability available in $\mathcal{C}$.
	We now prove the second
	lower bound, which is inspired
	by \cite{kadotaOptimizingAgeInformation2018}.
	Let $Y_0, Y_1, Y_2, \ldots$ be the time intervals between successful transmissions.
	By the initialization in Section~\ref{sec:system_model}, an update of blocklength
	$n_1$ is successfully received at time $t=0$, and the initial age is $n_1$.
	Let $T'$ be a fixed time horizon.
	Over interval \(i\), let \(a_{i-}\) denote the blocklength of the update
	received at the beginning of the interval.
	In particular, $a_{0-}=n_1$.
	The area under the AoI curve during the interval
	$Y_i$ is
	\begin{align}
		Q_i=a_{i-}Y_i+\frac{Y_i^2}{2}.
	\end{align}
	Therefore, the expected average \gls{aoi} under
	any admissible schedule can be lower-bounded as
	\begin{align}
		\Delta_{\mathrm{avg}}(\pi)
		  & \geq
		\liminf_{T'\to\infty} \Bigg(
		\frac{1}{T'} \mathbb{E}\left[\sum_{i=0}^{K_{T'}-1}
		a_{i-}Y_i + a_{K_{T'}}R_{T'}\right]                                                              \\
		+ & \frac{1}{T'} \mathbb{E}\left[\sum_{i=0}^{K_{T'}-1}\frac{ Y_i^2}{2}+\frac{R_{T'}^2}{2}\right]
		\Bigg) \coloneq \liminf_{T'\to\infty} Z_{T'},
	\end{align}
	where $R_{T'}$ denotes the remaining time interval of $T'$
	after the last successful transmission,
	$K_{T'}$ denotes the number of successful transmissions
	and $a_{K_{T'}}$ denotes the blocklength used at
	the beginning of the remainder interval $R_{T'}$.
	Because $Z_{T'}$ is bounded, we can choose a discrete
	subsequence $T'_m$, so that
	the $\liminf$ of $Z_{T'}$ is attained:
	\begin{align}
		Z_{T'_m}\to \liminf_{T'\to\infty}Z_{T'}.
	\end{align}
	Now, we apply the Bolzano-Weierstrass theorem to extract a further subsequence for which the
	expected success rates, which we denote by $\theta_i$, converge:
	\begin{align}
		\lim_{T'_m \to\infty} \mathbb{E}\left[\frac{S_i(T'_m)}{T'_m}\right] = \theta_i, \qquad i\in\{1,2\},
	\end{align}
	where $S_i(T'_m)$ denotes the number of successful transmissions
	with blocklength $n_i$ up to time $T'_m$.
	This is valid because the expected success rates are
	bounded in $[0,1/n_1]$.
	Note that we have not relabeled the subsequence to ease the notation.
	Now, we consider the terms of $Z_{T'_m}$ separately. For the first part, we have
	\begin{align}
		 & \frac{1}{T'_m} \mathbb{E}\left[\sum_{i=0}^{K_{T'_m}-1} a_{i-}Y_i + a_{K_{T'_m}}R_{T'_m}\right]                   \\
		 & \geq n_1 + (n_2-n_1) \frac{1}{T'_m}
		\mathbb{E}\left[\sum_{i=0}^{K_{T'_m}-1}\mathbf{1}_{\{a_{i-}=n_2\}}Y_i\right]                                        \\
		 & \geq n_1 + (n_2-n_1)n_1\frac{1}{T'_m}\mathbb{E}\left[ \sum_{i=0}^{K_{T'_m}-1}\mathbf{1}_{\{a_{i-}=n_2\}} \right] \\
		 & \xrightarrow{T'_m\to\infty} n_1 + (n_2-n_1)n_1 \theta_2.
	\end{align}
	Now, we consider the second part of $Z_{T'_m}$ and lower-bound it as follows:
	\begin{align}
		 & \frac{1}{T'_m} \mathbb{E}\left[\sum_{i=0}^{K_{T'_m}-1}\frac{ Y_i^2}{2}+\frac{R_{T'_m}^2}{2}\right]                                      \\
		 & = \mathbb{E}\left[ \frac{K_{T'_m}+1}{T'_m} \frac{1}{K_{T'_m}+1}\frac{1}{2}\left(\sum_{i=0}^{K_{T'_m}-1} Y_i^2 +R_{T'_m}^2\right)\right] \\
		 & \geq \mathbb{E}\left[ \frac{1}{2} \frac{K_{T'_m}+1}{T'_m} \left(\frac{T'_m}{K_{T'_m}+1}\right)^2\right]                                 \\
		 & \geq \frac{1}{2}\frac{T'_m}{\mathbb{E}\left[K_{T'_m}+1\right]}
		\xrightarrow{T'_m \to \infty} \frac{1}{2}\frac{1}{\theta_1+\theta_2}.
	\end{align}
	Here we have used Jensen's inequality twice, first to pull the square out of the sample average and
	then to pull the expectation value inside the function $1/(1+x)$, which is convex in $x$.
	The only missing step is now to derive a condition for the rate pair $(\theta_1,\theta_2)$ to be feasible.
	Note that for an arbitrary schedule $\pi$ we have
	\begin{align}
		n_1 N_1(T'_m) + n_2 N_2(T'_m) \leq T'_m,
	\end{align}
	where $N_i(T'_m)$ denotes the number of completed transmissions with blocklength
	$n_i$ up to time $T'_m$.
	With this, we can write
	\begin{align}
		 & \frac{n_1}{\bar{\epsilon}_1}\frac{\bar{\epsilon}_1N_1(T'_m)}{T'_m} +
		\frac{n_2}{\bar{\epsilon}_2}\frac{\bar{\epsilon}_2N_2(T'_m)}{T'_m}
		\leq 1.
	\end{align}
	Note that the second factors in the two products are the
	success rates of the two blocklengths.
	Taking expectation and the limit
	$T'_m \to \infty$ yields the rate pair condition in the theorem.
\end{proof}

\section{Proof of Theorem \ref{thm:chooseN1OptimalUnified}}\label{app:chooseN1OptimalUnified}
\begin{proof}

	We start with the objective formulated in (\ref{eq:DeterministicAoIMDP}).
	First, we show that every finite prefix can be ignored for the average age criterion,
	which allows us to use the bounds in Lemma \ref{lem:boundedcostLemmaAoI} with
	an arbitrarily small $\delta > 0$.
	Let $t^* > 0$ be an arbitrary transmission index. The objective $\Delta_{\mathrm{avg}}(\pi)$ can be written as follows:
	\begin{align}
		 & \Delta_{\mathrm{avg}}(\pi)  = \limsup_{T\to\infty} \left( \frac{\sum_{i=0}^{T}\tilde{c}(b_i,a_i)}{\sum_{i=0}^{T}a_i} \right)                                       \\
		 & = \limsup_{T\to\infty} \left(\frac{\sum_{i=0}^{t^*}\tilde{c}(b_i,a_i)}{\sum_{i=0}^{T}a_i} +\frac{ \sum_{i=t^*+1}^{T}\tilde{c}(b_i,a_i)}{ \sum_{i=0}^{T}a_i}\right) \\
		 & = \limsup_{T\to\infty} \left(\frac{ \sum_{i=t^*+1}^{T}\tilde{c}(b_i,a_i)}{ \sum_{i=0}^{T}a_i} \right)\label{eq:limsupPlusConstant}                                 \\
		 & = \limsup_{T\to\infty} \left( \frac{ \sum_{i=t^*+1}^{T}\tilde{c}(b_i,a_i)}{ \sum_{i=t^*+1}^{T}a_i}\right).\label{eq:anDividedcPlusbn}
	\end{align}
	In \eqref{eq:limsupPlusConstant} we have used that the first summand converges to zero,
	and in \eqref{eq:anDividedcPlusbn} we have used
	that $\limsup_{n\to\infty}\left( \frac{x_n}{c+y_n} \right) = \limsup_{n\to\infty}\left( \frac{x_n}{y_n}\cdot \frac{1}{1+c/y_n}\right) =
		\limsup_{n\to\infty}\left( \frac{x_n}{y_n}\right) \cdot 1$.
	Now, we show that for $\lambda^*$ bounded as in Lemma
	\ref{lem:AoIBounds} and if the conditions
	of the theorem are satisfied, the constant schedule
	that always chooses $n_1$ achieves the minimal
	average cost on the tree in Fig. \ref{fig:TransientMDPChain}.
	To show this, we use the following setup: Let $\pi^{(1)}$ denote the schedule
	that always chooses $n_1$ and let $\tilde{\pi} \in \Pi_D$ denote an arbitrary different schedule.
	Fix a transmission horizon $T$ and a cutoff transmission index
	$T_{\mathrm{c}}\coloneqq T-M<T$ for some fixed integer $M>0$.
	The tail section from $T_{\mathrm{c}}$ to $T$ is of length $M$ and allows us to use geometric bounds
	on the future cost variation caused by changing a single action in a schedule.
	Note that the finiteness of $M$ introduces an error term that we denote
	by $\delta_M > 0$ in the later calculations.
	Let the time indices where the two schedules differ be denoted by
	$\tau_1 < \tau_2 < \ldots < \tau_{N_T}$.
	Since \(\pi^{(1)}\) always chooses \(n_1\), we have
	\begin{align}
		a^{\pi^{(1)}}_{\tau_j} = n_1,
		\qquad
		a^{\tilde{\pi}}_{\tau_j} = n_2,
		\qquad j=1,\ldots,N_{T}.
	\end{align}
	Next, we define a sequence of schedules $\eta^{(0)},\eta^{(1)},\ldots,\eta^{(N_{T})}$
	as follows:
	let $\eta^{(0)}=\tilde{\pi}$, and for $j=1,\ldots,N_T$, let $\eta^{(j)}$
	be obtained from $\eta^{(j-1)}$ by replacing the action at
	time $\tau_{N_T-j+1}$ from $n_2$ to $n_1$. In particular,
	$\eta^{(N_T)}=\pi^{(1)}$.
	Equivalently, for \(j=1,2,\ldots,N_{T}\),
	\begin{align}
		a_i^{\eta^{(j)}} =
		\begin{cases}
			a_i^{\pi^{(1)}},   & \tau_{N_{T}-j+1} \leq i \leq T, \\
			a_i^{\tilde{\pi}}, & i < \tau_{N_{T}-j+1}.
		\end{cases}
	\end{align}

	By construction, for each \(j=1,\ldots,N_{T}\), the two consecutive schedules
	\(\eta^{(j-1)}\) and \(\eta^{(j)}\) differ at exactly one time instant
	and they coincide at all other times.
	The difference of the sums of the \gls{dpsc}s admits the telescoping decomposition
	\begin{align}
		D_{T}^{\tilde{\pi}}(\lambda)-D_{T}^{\pi^{(1)}}(\lambda)
		=
		\sum_{j=1}^{N_{T}}
		\left(
		D_{T}^{\eta^{(j-1)}}(\lambda)-D_{T}^{\eta^{(j)}}(\lambda)
		\right).
		\label{eq:telescopingHybridDecomposition}
	\end{align}
	Next, we want to show that under the conditions of the theorem
	we have
	\begin{align}
		\limsup_{T \to \infty} \frac{1}{T} D_T^{\tilde{\pi}}
		\geq \limsup_{T \to \infty} \frac{1}{T} D_T^{\pi^{(1)}}.
	\end{align}
	A sufficient condition for this is that the liminf
	of the difference of the two sequences is non-negative.
	We write this difference as follows and identify
	some $o(1)$ terms:
	\begin{align}
		 & \liminf_{T\to\infty} \frac{D_{T}^{\tilde{\pi}}(\lambda)-D_{T}^{\pi^{(1)}}(\lambda)}{T} \\
		 & =
		\liminf_{T\to\infty} \frac{1}{T}\sum_{j=1}^{N_{T}}
		\left(
		D_{T}^{\eta^{(j-1)}}(\lambda)-D_{T}^{\eta^{(j)}}(\lambda)
		\right)                                                                                   \\
		 & =
		\liminf_{T\to\infty} \Bigg\{ \underbrace{\frac{1}{T}\!\!\sum_{j=1}^{N_T-N_{T_{\mathrm{c}}}}
			\!\!\left(
			D_{T}^{\eta^{(j-1)}}(\lambda)-D_{T}^{\eta^{(j)}}(\lambda)
		\right)}_{o(1)}                                                                           \\
		 & +
		\frac{1}{T}\sum_{j=N_T-N_{T_{\mathrm{c}}}+1}^{N_{T}}
		\left(
		D_{T}^{\eta^{(j-1)}}(\lambda)-D_{T}^{\eta^{(j)}}(\lambda)
		\right) \Bigg\}
		\label{eq:limsupTelescopingDecomposition}
	\end{align}

	Note that the $o(1)$ term vanishes as $T \to \infty$, and therefore
	we can restrict our attention to (\ref{eq:limsupTelescopingDecomposition}).
	What is left to show is that each summand in (\ref{eq:limsupTelescopingDecomposition})
	is strictly positive under the conditions of the theorem.
	If this holds, the average age of an arbitrary schedule is equal to
	or larger than the average age of the constant schedule that always chooses $n_1$.
	Next, fix $j\in\{N_T - N_{T_{\mathrm{c}}}+1,\ldots,N_{T}\}$ and consider the pair
	\(\eta^{(j-1)}\) and \(\eta^{(j)}\). To simplify the notation,
	we define
	\begin{align}
		\eta^+  \triangleq \eta^{(j)},\quad
		\eta^-  \triangleq \eta^{(j-1)},\quad
		t       \triangleq \tau_{N_{T}-j+1}.
	\end{align}
	Let \(m_i^+\) and \(m_i^-\) denote the expected
	age at the beginning of transmission $i$ under
	\(\eta^+\) and \(\eta^-\), respectively.
	Since the two schedules coincide before time \(t\), we write
	\(m_t\triangleq m_t^+=m_t^-\).
	Define
	\begin{align}
		 & f_1(m)  =n_1+\epsilon_1 m,
		\qquad
		f_2(m)=n_2+\epsilon_2 m,                                    \\
		 & \Delta_k  \triangleq m^-_{t+k}-m^+_{t+k}, \qquad k\ge 0,
	\end{align}
	so that \(\Delta_0=0\). The one-step difference in
	the expected age progression is
	\begin{align}\label{eq:UnifiedDelta1}
		\Delta_1
		= f_2(m_t)-f_1(m_t)
		= \delta_n-\delta_{\epsilon}\, m_t.
	\end{align}
	For \(k\ge 1\), both schedules use the same action $a_{t+k} = n_1$, so that we have the recursion
	\begin{align}\label{eq:UnifiedGapRecursion}
		\Delta_{k+1} = \epsilon_1 \Delta_k.
	\end{align}
	The \gls{dpsc} for an action \(e\in\{1,2\}\) and expected age \(m\) is given by
	\begin{align}
		d_e(m) \triangleq n_e m + \frac{n_e^2}{2} - \lambda n_e.
	\end{align}
	Therefore, the immediate difference at time \(t\) is
	\begin{align}\label{eq:UnifiedImmediateDifference}
		d_2(m_t)-d_1(m_t)
		=
		\delta_n\,m_t + \frac{n_2^2-n_1^2}{2} - \lambda \delta_n.
	\end{align}
	We consider two regimes. In the first regime, we know
	that, independent of the specific action sequence before time $t$, for time indices $t$ large enough,
	the action $n_1$ increases the next expected age, so that we have $m_{t+1}>m_t$.
	In the second regime, we have the opposite, so that action $n_1$ decreases the next expected age.

	\noindent\textbf{(i) Above-threshold regime.}
	Assume $m_L>m^*=\frac{\delta_n}{\delta_{\epsilon}}$.
	By the lower bound in Lemma \ref{lem:boundedcostLemmaAoI}, there exists a finite time \(T^*\) such that along every trajectory,
	\begin{align}
		m_i>m^* \qquad \text{for all } i>T^*.
	\end{align}
	A finite prefix does not affect the average-cost
	criterion, so we may restrict attention to times  \(t\ge T^*\).
	For \(m_t>m^*\), \eqref{eq:UnifiedDelta1} gives
	\(\Delta_1<0\), i.e., choosing \(n_2\) produces a smaller
	next-state expected age than choosing \(n_1\).
	From \eqref{eq:UnifiedGapRecursion} we get
	\begin{align}
		\Delta_k = \epsilon_1^{k-1}\Delta_1,\qquad k\ge 1.
	\end{align}
	For \(k\ge 1\), since both schedules use the same action,
	\begin{align}
		\left|d_{a_{t+k}}(m^+_{t+k})-d_{a_{t+k}}(m^-_{t+k})\right|
		= n_1 |\Delta_k|.
	\end{align}
	Hence, the future advantage of \(\eta^-\) can be upper-bonded as
	\begin{align}
		\sum_{k\ge 1}\left(d_{a_{t+k}}(m^+_{t+k})\!-d_{a_{t+k}}(m^-_{t+k})\right)
		\leq \frac{n_1}{\bar{\epsilon}_1}\bigl(f_1(m_t)\!-f_2(m_t)\bigr).
	\end{align}
	Therefore, the total accumulated cost increase caused by choosing \(n_2\) instead of \(n_1\) at time \(t\) is bounded from below by
	\begin{align}
		\Psi_\lambda(m_t)
		 & \triangleq
		\bigl(d_2(m_t)-d_1(m_t)\bigr)
		-
		\frac{n_1}{\bar{\epsilon}_1}\bigl(f_1(m_t)-f_2(m_t)\bigr) \\
		 & =
		\delta_n\,m_t \!+\! \frac{n_2^2-n_1^2}{2} - \lambda \delta_n
		-\frac{n_1}{\bar{\epsilon}_1}\bigl(-\delta_n+\delta_{\epsilon}\,m_t\bigr).
		\label{eq:Psi_n1_Optimality}
	\end{align}
	Since \(\Psi_\lambda\) is decreasing in \(\lambda\), it is sufficient to require
	$\Psi_U(m_t)>0$,
	where \(U\) is the upper bound on \(\lambda^*\) in Lemma \ref{lem:AoIBounds}.
	Moreover, in this regime \(m_t\in\left(m^*,\,m_U\right)\), and \(\Psi_U(m)\) is affine in \(m\) with negative slope
	\begin{align}
		\delta_n-\frac{n_1\delta_{\epsilon}}{\bar{\epsilon}_1} = \delta_{\epsilon}\left( m^*-\frac{n_1}{\bar{\epsilon}_1}\right)<0.
	\end{align}
	Hence, its minimum over the reachable interval is attained at \(m=m_U\). Evaluating there yields the sufficient condition
	\begin{align}
		U<\frac{n_1+n_2}{2}+\frac{n_1}{\bar{\epsilon}_1} + m_U\left( 1-\frac{n_1\delta_{\epsilon}}{\bar{\epsilon}_1\delta_n}\right).
	\end{align}
	Under this condition, every occurrence of \(n_2\) after time \(T^*\) can be replaced by
	\(n_1\) with a strictly positive reduction in accumulated \gls{dpsc}.
	\medskip

	\noindent\textbf{(ii) Below-threshold regime.}
	Assume
	$m_U<m^*=\frac{\delta_n}{\delta_{\epsilon}}$.
	Then, similarly to the above-threshold regime, for
	all $i > T^*$ with $T^*$ large enough, we have
	$m_i \le m_U<m^*$,
	so \eqref{eq:UnifiedDelta1} implies \(\Delta_1>0\).
	Thus, choosing \(n_2\) produces a larger next-state
	expected age than choosing \(n_1\).
	Following the same steps as in the above-threshold regime,
	we get the following lower bound for the total cost increase
	caused by choosing $n_2$ instead of $n_1$ at time $t$:
	\begin{align}
		\Psi_\lambda(m_t) & =
		\delta_n m_t + \frac{n_2^2-n_1^2}{2} - \lambda \delta_n                      \\
		                  & -n_1\left( \frac{1}{\bar{\epsilon}_1} - \delta_M \right)
		(-\delta_n + \delta_{\epsilon} m_t),
	\end{align}
	with $\delta_M = \frac{\epsilon_1^M}{\bar{\epsilon}_1}$. Note that
	$\delta_M$ can be made arbitrarily small by choosing the tail length
	$M$ sufficiently large. Therefore, we can omit the term in the following.
	Again \(\Psi_\lambda\) is decreasing in \(\lambda\),
	so it is sufficient to require
	$\Psi_U(m_t)>0$.
	The achievable expected ages satisfy
	\begin{align}
		m_t \in \left[m_L-\delta,\;m_U\right]
	\end{align}
	for an arbitrarily small \(\delta>0\), for $t$ sufficiently large.
	Since \(\Psi_U(m)\) is affine, its minimum is attained at one of the endpoints. Its slope is
	\begin{align}
		s_\Psi\triangleq\delta_n-\frac{n_1\delta_{\epsilon}}{\bar{\epsilon}_1} = \delta_{\epsilon}\left( m^*-\frac{n_1}{\bar{\epsilon}_1}\right).
	\end{align}
	Under the conditions of the theorem, we have \(s_\Psi\ge 0\), and therefore
	the minimum is attained at \(m=m_L\). Letting \(\delta\downarrow 0\) yields the sufficient condition
	\begin{align}
		U<\frac{n_1+n_2}{2}+\frac{n_1}{\bar{\epsilon}_1} + m_L\left( 1-\frac{n_1\delta_{\epsilon}}{\bar{\epsilon}_1\delta_n}\right).
	\end{align}
	Under these conditions, each occurrence of \(n_2\) can be replaced by \(n_1\) with a strictly positive reduction in accumulated Dinkelbach cost. Hence, always choosing \(n_1\) is optimal.
\end{proof}

\section{Proof of Theorem \ref{thm:chooseN2OptimalUnified}}\label{app:chooseN2OptimalUnified}
\begin{proof}
	We begin with the same setup as in the proof of Theorem \ref{thm:chooseN1OptimalUnified},
	but the schedules $\eta^{(j)}$ now replace the actions of an arbitrary
	different schedule by \(n_2\) instead of \(n_1\). For the local comparison,
	let \(\mu^-\) choose \(n_1\) at time \(t\) and let \(\mu^+\) choose \(n_2\)
	at time \(t\), with both schedules identical afterward and before. Let \(m_i^-\) and
	\(m_i^+\) denote the expected age at the beginning of transmission $i$ under \(\mu^-\) and \(\mu^+\),
	respectively, and write \(m_t\triangleq m_t^-=m_t^+\).
	Let $f_1$ and $f_2$ be defined as in the proof of Theorem~\ref{thm:chooseN1OptimalUnified}.
	We write
	\begin{align}
		\Delta_k \triangleq m^-_{t+k}-m^+_{t+k}, \qquad k\ge 0,
	\end{align}
	so that \(\Delta_0=0\). The one-step difference is
	\begin{align}\label{eq:ChooseN2Delta1}
		\Delta_1
		= f_1(m_t)-f_2(m_t)
		= -\delta_n+\delta_{\epsilon}\, m_t.
	\end{align}
	Assume $m_L>m^*=\frac{\delta_n}{\delta_{\epsilon}}$.
	Then, by Lemma \ref{lem:boundedcostLemmaAoI}, there exists a finite time \(T^*\) such that along every trajectory,
	\begin{align}
		m_i>m^* \qquad \text{for all } i>T^*.
	\end{align}
	A finite prefix does not affect the average-cost criterion, so
	we may restrict attention to times \(t\ge T^*\).
	In particular, \eqref{eq:ChooseN2Delta1} implies that $\Delta_1>0$.
	For \(k\ge 1\), both schedules use the same action \(a_{t+k} = n_2\), hence
	$\Delta_{k+1}=\epsilon_2\Delta_k$.
	%Since \(\epsilon(a_{t+k})\ge \epsilon_2\), we obtain
	%\begin{align}
	%		\Delta_k \ge \epsilon_2^{k-1}\Delta_1,\qquad k\ge 1.
	%\end{align}
	Let the \gls{dpsc} be denoted by
	\begin{align}
		d_e(m)\triangleq n_e m+\frac{n_e^2}{2}-\lambda n_e,\qquad e\in\{1,2\}.
	\end{align}
	The immediate difference at time \(t\) is
	\begin{align}
		d_1(m_t)-d_2(m_t)
		=
		-\delta_n\,m_t-\frac{n_2^2-n_1^2}{2}+\lambda\delta_n.
	\end{align}
	For \(k\ge 1\), since both schedules use the same action,
	\begin{align}
		d_{a_{t+k}}(m^-_{t+k})-d_{a_{t+k}}(m^+_{t+k})
		= n_2\Delta_k.
	\end{align}
	Therefore, the total accumulated cost decrease caused by choosing \(n_2\) instead of \(n_1\) at time \(t\) is bounded from below by
	\begin{align}
		\Psi_\lambda(m_t)
		 & \triangleq
		\bigl(d_1(m_t)-d_2(m_t)\bigr)
		+
		\left(\frac{1}{\bar{\epsilon}_2}-\delta_M \right)n_2\Delta_1                                         \\
		 & =
		-\delta_n\,m_t-\frac{n_2^2-n_1^2}{2}+\lambda\delta_n                                                 \\
		 & +\left(\frac{1}{\bar{\epsilon}_2}-\delta_M\right)n_2\bigl(-\delta_n+\delta_{\epsilon}\,m_t\bigr),
	\end{align}
	where $\delta_M = \frac{\epsilon_2^M}{\bar{\epsilon}_2}$ can be made arbitrarily
	small by choosing $M$ sufficiently large.
	Since \(\Psi_\lambda\) is increasing in \(\lambda\), it is sufficient to require
	$\Psi_L(m_t)>0$.
	Moreover, for all sufficiently large \(t\), we have
	\begin{align}
		m_t \in \left[m_L-\delta,\;m_U \right],
	\end{align}
	for an arbitrarily small \(\delta>0\).
	The function \(\Psi_L(m)\) is affine with slope
	$\frac{n_2\delta_{\epsilon}}{\bar{\epsilon}_2}-\delta_n$.
	Under the assumption of the theorem, we have
	\begin{align}
		\frac{n_2}{\bar{\epsilon}_2}> m_L > \frac{\delta_n}{\delta_{\epsilon}} = m^*,
	\end{align}
	and therefore this slope is positive.
	Hence, the minimum over the reachable interval
	is attained at the left endpoint.
	Letting \(\delta\downarrow 0\),
	it is therefore sufficient that
	$\Psi_L\!\left(m_L\right)>0$.
	After simplification, this condition is equivalent to
	\begin{align}
		L> \frac{n_1+n_2}{2} + \frac{n_2}{\bar{\epsilon}_2} + m_L \left( 1-\frac{n_2\delta_{\epsilon}}{\bar{\epsilon}_2\delta_n}\right).
	\end{align}
	Under these conditions, each occurrence of
	\(n_1\) can be replaced by \(n_2\) with a strictly positive reduction in accumulated \gls{dpsc}.
	Hence, always choosing \(n_2\) is optimal.

\end{proof}

\section{Proof of Theorem \ref{thm:chooseN1OptimalExpUnified}}\label{app:chooseN1OptimalExpUnified}
\begin{proof}
	The proof is quite similar to the
	proof of Theorem \ref{thm:chooseN1OptimalUnified}.
	Therefore, we first state analogues of
	Lemma \ref{lem:boundedcostLemmaAoI}
	and Lemma \ref{lem:AoIBounds}
	for the expected exponential age.

	\begin{lemma}\label{lem:BoundsOnZi}
		The
		expected exponential age
		\begin{align}
			z_i \triangleq \sum_{x\in\mathcal S} b_i(x)e^{\alpha x}
		\end{align}
		satisfies
		$z_L -\delta
			\le z_i \le z_U$,
		where $\delta > 0$ is arbitrary and $i$ is sufficiently large, and
		\begin{align}
			z_L = \min \left\{ \frac{\beta_1}{\bar{\rho}_1}
			, \frac{\beta_2}{\bar{\rho}_2}\right\}, \quad
			z_U =\max \left\{ \frac{\beta_1}{\bar{\rho}_1}
			, \frac{\beta_2}{\bar{\rho}_2}\right\}
		\end{align}
	\end{lemma}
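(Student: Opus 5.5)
The plan mirrors the proof of Lemma~\ref{lem:boundedcostLemmaAoI}, with the expected age $m_i$ replaced by the expected exponential age $z_i=\mathbb{E}[e^{\alpha X_i}]$. Throughout I work under the standing assumption of Theorem~\ref{thm:chooseN1OptimalExpUnified} that $\rho_1<1$ and $\rho_2<1$; without it the maps below have no attracting fixed point and $z_i$ may diverge.

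\textbf{Step 1: a recursion for $z_i$.} Using the transition kernel, with probability $\bar{\epsilon}(a_i)$ we have $X_{i+1}=a_i$, and with probability $\epsilon(a_i)$ we have $X_{i+1}=X_i+a_i$. Taking expectations of $e^{\alpha X_{i+1}}$ gives
\begin{align}
z_{i+1}=\bar{\epsilon}(a_i)e^{\alpha a_i}+\epsilon(a_i)e^{\alpha a_i}z_i .
\end{align}
So $z_{i+1}=g_e(z_i)$ with $g_e(z)\triangleq\beta_e+\rho_e z$ when $a_i=n_e$. Each $g_e$ is increasing and affine. Its unique fixed point is $z_e^\circ=\beta_e/\bar{\rho}_e$, and it can be written as
\begin{align}
g_e(z)=z_e^\circ+\rho_e\,(z-z_e^\circ).
\end{align}

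\textbf{Step 2: upper bound by induction.} First I check the initial state. Since $z_0=e^{\alpha n_1}$ and $e^{\alpha n_1}\ge 1$, we have $1-\epsilon_1e^{\alpha n_1}\le \bar{\epsilon}_1$. Dividing gives
\begin{align}
z_1^\circ=\frac{\bar{\epsilon}_1 e^{\alpha n_1}}{1-\epsilon_1e^{\alpha n_1}}\ge e^{\alpha n_1}=z_0,
\end{align}
hence $z_0\le z_U$. Next, $g_e(z_U)-z_U=\bar{\rho}_e(z_e^\circ-z_U)\le 0$ for both $e$. Combined with monotonicity of $g_e$, the induction step $z_i\le z_U\Rightarrow z_{i+1}=g_e(z_i)\le g_e(z_U)\le z_U$ follows. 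So $z_i\le z_U$ for all $i$ and every action sequence.

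\textbf{Step 3: asymptotic lower bound.} The key inequality is
\begin{align}
\max\{z_L-z_{i+1},0\}\le \rho_{\max}\max\{z_L-z_i,0\},\qquad \rho_{\max}\triangleq\max\{\rho_1,\rho_2\}<1 .
\end{align}
\begin{itemize}
\item If $z_i\ge z_L$, then $g_e(z_i)\ge g_e(z_L)\ge z_L$, because $z_L\le z_e^\circ$. So both sides vanish.
\item If $z_i<z_L$, write
\begin{align}
z_L-g_e(z_i)=\bar{\rho}_e(z_L-z_e^\circ)+\rho_e(z_L-z_i)\le \rho_e(z_L-z_i).
\end{align}
The inequality holds because the first term is nonpositive.
\end{itemize}
Iterating, the deficit below $z_L$ is at most $\rho_{\max}^i\max\{z_L-z_0,0\}$. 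This bound is uniform over all action sequences, so for any $\delta>0$ there is an $i_0$ with $z_i\ge z_L-\delta$ for all $i\ge i_0$.

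\textbf{Main obstacle.} The delicate points are the initial-condition check $z_0\le z_U$ and the role of $\rho_e<1$. Unlike the linear-age case, $z_0\le z_U$ is not immediate and requires the small inequality $\bar{\epsilon}_1\ge 1-\epsilon_1e^{\alpha n_1}$. Positivity of $\bar{\rho}_e$ is needed both for the fixed points to be finite and positive and for the contraction in Step 3. I would state these explicitly as the hypotheses under which the lemma is used.
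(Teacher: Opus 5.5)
Your proposal is correct and takes the paper's route: the affine recursion $z_{i+1}=\beta(a_i)+\rho(a_i)z_i$ with fixed points $\beta_e/\bar{\rho}_e$, then the induction and contraction argument of Lemma~\ref{lem:boundedcostLemmaAoI}, which the paper only invokes by reference. Your explicit check that $z_0=e^{\alpha n_1}\le \beta_1/\bar{\rho}_1$, and your statement that $\rho_e<1$ is needed, fill in details the paper leaves implicit.
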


	\begin{proof}

		We start with the expression for
		the expected exponential age $z_i$,
		which is given as
		\begin{align}
			z_i = \sum_{x\in\mathcal S} b_i(x)e^{\alpha x}
			= \mathbb E[e^{\alpha X_i}],
		\end{align}
		where \(X_i\) denotes the age at the beginning of the \(i\)-th transmission.
		Next, we derive a recursion for \(z_i\). Conditioned on action \(a_i\),
		\begin{align}
			X_{i+1} =
			\begin{cases}
				a_i,       & \text{with probability } \bar{\epsilon}(a_i), \\[1mm]
				X_i + a_i, & \text{with probability } \epsilon(a_i).
			\end{cases}
		\end{align}
		Therefore,
		\begin{align}
			e^{\alpha X_{i+1}} =
			\begin{cases}
				e^{\alpha a_i}, & \text{with probability } \bar{\epsilon}(a_i), \\
				e^{\alpha a_i}e^{\alpha X_i},
				                & \text{with probability } \epsilon(a_i).
			\end{cases}
		\end{align}
		Taking expectations yields
		\begin{align}\label{eq:RecursionForZi}
			z_{i+1}
			 & = \bar{\epsilon}(a_i)e^{\alpha a_i}
			+ \epsilon(a_i)e^{\alpha a_i} z_i      \\
			 & = \beta(a_i) + \rho(a_i) z_i,
		\end{align}
		with
		$\beta(a_i)\triangleq\bar{\epsilon}(a_i)e^{\alpha a_i}$,
		and
		$\rho(a_i)\triangleq\epsilon(a_i)e^{\alpha a_i}.$
		Now, we define the two maps $h_e(z) \triangleq \beta_e + \rho_e z$ for $e\in\{1,2\}$. Note that these
		maps have unique fixed points at $\frac{\beta_e}{\bar{\rho}_e}$
		for $z \geq 0$ if $\rho_e < 1$.
		By the same
		arguments as in the proof of Lemma~\ref{lem:boundedcostLemmaAoI},
		the theorem follows.
	\end{proof}

	Next, we derive lower and upper bounds on the average exponential age:
	\begin{lemma}\label{lem:ExponentialAoIBounds}
		Let $\Delta_{\exp,\mathrm{avg}}^*$ be the minimal average exponential \gls{aoi}
		achievable by any schedule $\pi \in \Pi_D$. Then the following bounds hold:
		\begin{align}
			 & \max\Bigg\{
			\frac{\bar{\epsilon}_2e^{\alpha n_1}
				\left(e^{\alpha n_1}-1\right)}
			{\alpha n_1\left(1-\epsilon_2 e^{\alpha n_1}\right)}, \nonumber \\
			 & \min_{\substack{\theta_1,\theta_2\ge 0                       \\
					\frac{n_1}{\bar{\epsilon}_1}\theta_1+
					\frac{n_2}{\bar{\epsilon}_2}\theta_2\le 1}}
			\Bigg\{
			\frac{e^{\alpha n_1}}{\alpha}
			(\theta_1+\theta_2)
			\left(
			e^{\frac{\alpha}{\theta_1+\theta_2}}-1
			\right) \nonumber                                               \\
			 &
			+
			\frac{e^{\alpha n_2}-e^{\alpha n_1}}{\alpha}
			\left(e^{\alpha n_1}-1\right)\theta_2
			\Bigg\}
			\Bigg\}                                                         \\
			 & \le \Delta_{\exp,\mathrm{avg}}^* \nonumber
			\le
			\min_{i\in \{1,2\}}
			\frac{\bar{\epsilon}_ie^{\alpha n_i}
				\left(e^{\alpha n_i}-1\right)}
			{\alpha n_i\left(1-\epsilon_i e^{\alpha n_i}\right)}
			\label{eq:expPenaltyBounds}
		\end{align}
	\end{lemma}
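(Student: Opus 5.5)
The bounds follow the proof of Lemma~\ref{lem:AoIBounds}, with the quadratic area replaced by the exponential area.

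\textbf{Upper bound.} We evaluate the two constant schedules. For a constant schedule with blocklength $n_i$, Lemma~\ref{lem:StationaryRandomizedExpAoI} applies with $\xi\in\{0,1\}$ (equivalently, Theorem~\ref{theo:ExponentialAoICyclicClosedForm} with $P=1$). Then $A=\bar{\epsilon}_ie^{\alpha n_i}$ and $B=\rho_i<1$, so $A+B=e^{\alpha n_i}$ and
\begin{align}
\frac{A}{\alpha n_i}\left(\frac{A}{1-B}-1\right)=\frac{\bar{\epsilon}_ie^{\alpha n_i}(e^{\alpha n_i}-1)}{\alpha n_i(1-\epsilon_ie^{\alpha n_i})}.
\end{align}
Taking the minimum over $i$ gives the upper bound.

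\textbf{First lower bound.} We compare with a fictitious tuple that has the shortest blocklength $n_1$ and the smallest error probability $\epsilon_2$. Pathwise, every realized age trajectory of an arbitrary schedule dominates the one obtained by a coupling in which each transmission lasts only $n_1$ and succeeds at least as often. The cleanest route is through the recursion of Lemma~\ref{lem:BoundsOnZi}. Every real action satisfies $\beta(a)\ge\bar\epsilon_2 e^{\alpha n_1}$ and $\rho(a)\ge \epsilon_2 e^{\alpha n_1}$, while the time spent per transmission is at least $n_1$. Since the stage cost $\frac{e^{\alpha a}-1}{\alpha}z$ per unit time is increasing in both $a$ and $z$, a monotone comparison shows that the time-averaged cost is at least that of the constant fictitious schedule. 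That cost is exactly the displayed expression with $(n_1,\epsilon_2)$, which is finite when $\epsilon_2e^{\alpha n_1}<1$ and trivial otherwise. The delicate point is the normalization by $\sum a_i$ rather than by the count of transmissions. I would handle it by writing the cost per unit time as an average of $(e^{\alpha a}-1)/(\alpha a)\cdot z$, which is increasing in $a$.

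\textbf{Second lower bound.} We follow the renewal argument of Lemma~\ref{lem:AoIBounds} step by step. During an inter-reception interval $Y_i$ starting at age $a_{i-}$, the area is $\frac{e^{\alpha a_{i-}}}{\alpha}(e^{\alpha Y_i}-1)$. We split this as
\begin{align}
\frac{e^{\alpha n_1}}{\alpha}(e^{\alpha Y_i}-1)+\frac{e^{\alpha a_{i-}}-e^{\alpha n_1}}{\alpha}(e^{\alpha Y_i}-1),
\end{align}
and bound each part separately.
\begin{itemize}
\item For the second part, we use $Y_i\ge n_1$ together with the indicator $\mathbf 1\{a_{i-}=n_2\}$. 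This yields in the limit $\frac{e^{\alpha n_2}-e^{\alpha n_1}}{\alpha}(e^{\alpha n_1}-1)\theta_2$.
\item For the first part, we apply Jensen twice, as before, now using convexity of $y\mapsto e^{\alpha y}$. The sample average of $e^{\alpha Y_i}-1$ over $K+1$ intervals is at least $e^{\alpha T'/(K+1)}-1$. Multiplied by $(K+1)/T'$, this is the convex function $g(r)=r(e^{\alpha/r}-1)$ of the rate $r=(K+1)/T'$. Since $g$ is convex (its perspective form) and decreasing, we obtain $\mathbb E[g(r)]\ge g(\mathbb E r)\ge g(\theta_1+\theta_2)$ after the limit.
\end{itemize}

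To justify the limits, we again extract a subsequence attaining the $\liminf$ and apply Bolzano--Weierstrass to the success rates, exactly as in Lemma~\ref{lem:AoIBounds}. The feasibility constraint $\frac{n_1}{\bar\epsilon_1}\theta_1+\frac{n_2}{\bar\epsilon_2}\theta_2\le1$ carries over unchanged. Minimizing over feasible $(\theta_1,\theta_2)$ gives the second bound.

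\textbf{Main obstacle.} The main obstacle is making the Jensen steps rigorous, because the interchange of expectation and the nonlinear $g$ requires $g$ to be convex and monotone in the right direction. The boundedness of $Z_{T'}$ also needs justification, which holds by the upper bound when the optimum is finite. If the optimum is infinite, the statement is trivial.
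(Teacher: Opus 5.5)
Your upper bound and your second lower bound follow the paper's proof. The paper also evaluates the two constant schedules. It splits $e^{\alpha a_{i-}}$ into $e^{\alpha n_1}$ plus an $\mathbf{1}\{a_{i-}=n_2\}$ correction bounded via $Y_i\ge n_1$. It applies Jensen twice with the convex function $x(e^{\alpha/x}-1)$, and it uses the same subsequence and Bolzano--Weierstrass extraction with the unchanged rate constraint. Two of your worries there are unnecessary. Monotonicity of $g$ is not needed, because $\mathbb{E}[(K_{T'_m}+1)/T'_m]\to\theta_1+\theta_2$ exactly and $g$ is continuous. Boundedness of $Z_{T'}$ is not needed either, because a $\liminf$ is always attained along some subsequence and the infinite case is trivial.

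The problem is in your first lower bound. The paper only asserts this bound, so arguing through the $z$-recursion is a good idea, but your justification is false. You claim every action satisfies $\beta(a)\ge\bar\epsilon_2e^{\alpha n_1}$. However, $\beta(n_1)=\bar\epsilon_1e^{\alpha n_1}<\bar\epsilon_2e^{\alpha n_1}$, because $\epsilon_1>\epsilon_2$. The real short blocklength succeeds \emph{less} often than the fictitious tuple, so the affine maps $h_a(z)=\beta(a)+\rho(a)z$ are not dominated coefficient by coefficient. The monotone comparison therefore does not follow from what you wrote. Your pathwise-coupling sentence does not rescue it: real and fictitious transmission epochs do not align in time, and you never actually use it.

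The comparison does hold on the reachable range, for a different reason. Since $\beta(a)+\rho(a)=e^{\alpha a}$,
\begin{equation*}
h_a(z)=e^{\alpha a}\bigl(1+\epsilon(a)(z-1)\bigr)\ \ge\ e^{\alpha n_1}\bigl(1+\epsilon_2(z-1)\bigr)\triangleq\tilde h(z)\qquad\text{for } z\ge 1 .
\end{equation*}
Every trajectory has $z_i\ge e^{\alpha n_1}>1$, because the age at the start of a transmission is never below $n_1$. Since $\tilde h$ is nondecreasing, induction from $z_0=\tilde z_0=e^{\alpha n_1}$ gives $z_{i+1}\ge\tilde h(z_i)\ge\tilde h(\tilde z_i)=\tilde z_{i+1}$.

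Your normalization idea then finishes the argument. The expected area up to transmission $T$, divided by $\sum_{i\le T}a_i$, is a weighted average of $\frac{e^{\alpha a_i}-1}{\alpha a_i}z_i$ with weights $a_i/\sum_j a_j$. Each term satisfies $\frac{e^{\alpha a_i}-1}{\alpha a_i}z_i\ge\frac{e^{\alpha n_1}-1}{\alpha n_1}\tilde z_i$. When $\epsilon_2e^{\alpha n_1}<1$, $\tilde z_i$ converges geometrically to $\bar\epsilon_2e^{\alpha n_1}/(1-\epsilon_2e^{\alpha n_1})$. Since $a_i\in[n_1,n_2]$, the weighted average converges to the claimed constant. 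The continuous-time $\limsup$ dominates the $\limsup$ along transmission boundaries, which gives the bound. With this repair your proof is complete, and on this point it is more explicit than the paper's.
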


	\begin{proof}
		The upper bound is obtained by taking the minimum of the two constant schedules that always
		choose \(n_1\) or always choose \(n_2\).
		The lower bound is the combination of two bounds.
		The first bound is the average exponential
		\gls{aoi} of a blocklength error probability tuple that is given by
		the shortest blocklength and lowest error probability available in $\mathcal{C}$.
		We now prove the second lower bound.
		Let $Y_0,Y_1,Y_2,\ldots$, the time horizon
		$T'$ and the actions $a_{i-}$ be defined
		as in the proof of Lemma~\ref{lem:AoIBounds}.
		The area $Q_i^{\exp}$ under the exponential \gls{aoi} curve
		during the interval $Y_i$ is
		\begin{align}
			Q_i^{\exp}
			 & =
			\int_0^{Y_i} e^{\alpha(a_{i-}+t)}\,dt \\
			 & =
			\frac{e^{\alpha a_{i-}}}{\alpha}
			\left(e^{\alpha Y_i}-1\right).\label{eq:QExponential}
		\end{align}
		Therefore, the expected average exponential
		\gls{aoi} under
		any admissible schedule can be lower-bounded as
		\begin{align}
			\Delta_{\exp,\mathrm{avg}}(\pi)
			 & \geq
			\liminf_{T'\to\infty} \frac{1}{T'}
			\mathbb{E}\Bigg[
				\sum_{i=0}^{K_{T'}-1}
				\frac{e^{\alpha a_{i-}}}{\alpha}
			\left(e^{\alpha Y_i}-1\right) \\
			 & +
				\frac{e^{\alpha a_{K_{T'}}}}{\alpha}
				\left(e^{\alpha R_{T'}}-1\right)
				\Bigg]
			\coloneq \liminf_{T'\to\infty} Z_{T'},
		\end{align}
		where $R_{T'}$ denotes the remaining time interval of $T'$
		after the last successful transmission,
		$K_{T'}$ denotes the number of successful transmissions
		and $a_{K_{T'}}$ denotes the used blocklength at
		the beginning of the remainder interval $R_{T'}$.
		Since $a_{i-}\in\{n_1,n_2\}$, we can write
		\begin{align}
			e^{\alpha a_{i-}}
			 & =
			e^{\alpha n_1}
			+
			\left(e^{\alpha n_2}-e^{\alpha n_1}\right)
			\mathbf{1}_{\{a_{i-}=n_2\}} .
		\end{align}
		Therefore,
		\begin{align}
			Z_{T'}
			 & \ge
			\frac{e^{\alpha n_1}}{\alpha T'}
			\mathbb{E}\left[
				\sum_{i=0}^{K_{T'}-1}
				\left(e^{\alpha Y_i}-1\right)
				+
				\left(e^{\alpha R_{T'}}-1\right)
				\right]
			\nonumber \\
			 & \quad+
			\frac{e^{\alpha n_2}-e^{\alpha n_1}}{\alpha T'}
			\mathbb{E}\left[
				\sum_{i=0}^{K_{T'}-1}
				\mathbf{1}_{\{a_{i-}=n_2\}}
				\left(e^{\alpha Y_i}-1\right)
				\right].
			\label{eq:expDecompositionFiniteHorizon}
		\end{align}

		If the average exponential \gls{aoi}
		is finite, we can choose a discrete
		sequence \(T'_m\to\infty\), so that
		the \(\liminf\) of \(Z_{T'}\)
		is attained:
		\begin{align}
			Z_{T'_m}\to \liminf_{T'\to\infty} Z_{T'} .
		\end{align}
		Now, we apply the Bolzano-Weierstrass theorem to extract a further subsequence for which the
		expected success rates, denoted by $\theta_i$, converge:
		\begin{align}
			\lim_{T'_m \to\infty} \mathbb{E}\left[\frac{S_i(T'_m)}{T'_m}\right] = \theta_i, \qquad i\in\{1,2\},
		\end{align}
		where $S_i(T'_m)$ denotes the number of successful transmissions
		with blocklength $n_i$ up to time $T'_m$.
		This is valid because the expected success rates are
		bounded in $[0,1/n_1]$.
		Note that we have not relabeled the subsequence to ease the notation.
		Now, we take a look at the terms in
		\eqref{eq:expDecompositionFiniteHorizon} separately.
		For the first term, Jensen's inequality applied to the sample average gives
		\begin{align}
			\frac{1}{K_{T'_m}+1}
			\left(
			\sum_{i=0}^{K_{T'_m}-1} e^{\alpha Y_i}
			+
			e^{\alpha R_{T'_m}}
			\right)
			\ge
			e^{\alpha T'_m/(K_{T'_m}+1)} .
		\end{align}
		Therefore,
		\begin{align}
			 & \frac{1}{T'_m}
			\mathbb{E}\left[
				\sum_{i=0}^{K_{T'_m}-1}
				\left(e^{\alpha Y_i}-1\right)
				+
				\left(e^{\alpha R_{T'_m}}-1\right)
				\right]
			\nonumber         \\
			 & \ge
			\mathbb{E}\left[
				\frac{K_{T'_m}+1}{T'_m}
				\left(
				e^{\alpha T'_m/(K_{T'_m}+1)}-1
				\right)
				\right].
		\end{align}
		Define $\phi(x)=x\left(e^{\alpha/x}-1\right),\quad x>0$.
		Since $\phi''(x)
			=
			\frac{\alpha^2}{x^3}e^{\alpha/x}
			>0$,
		the function $\phi$ is convex. Hence, by Jensen's inequality again,
		\begin{align}
			 & \mathbb{E}\left[
				\frac{K_{T'_m}+1}{T'_m}
				\left(
				e^{\alpha T'_m/(K_{T'_m}+1)}-1
				\right)
				\right]
			\nonumber           \\
			 & \ge
			\mathbb{E}\left[\frac{K_{T'_m}+1}{T'_m}\right]
			\left(
			e^{
				\alpha/
				\mathbb{E}\left[(K_{T'_m}+1)/T'_m\right]}
			-1
			\right).
		\end{align}
		Since $K_{T'_m}=S_1(T'_m)+S_2(T'_m)$, we obtain
		\begin{align}
			\mathbb{E}\left[\frac{K_{T'_m}+1}{T'_m}\right]
			\to
			\theta_1+\theta_2 .
		\end{align}
		Consequently,
		\begin{align}
			 & \liminf_{m\to\infty}
			\frac{e^{\alpha n_1}}{\alpha T'_m}
			\mathbb{E}\left[
				\sum_{i=0}^{K_{T'_m}-1}
				\left(e^{\alpha Y_i}-1\right)
				+
				\left(e^{\alpha R_{T'_m}}-1\right)
				\right]
			\nonumber               \\
			 & \ge
			\frac{e^{\alpha n_1}}{\alpha}
			(\theta_1+\theta_2)
			\left(
			e^{\frac{\alpha}{\theta_1+\theta_2}}-1
			\right),
		\end{align}

		For the second term in
		\eqref{eq:expDecompositionFiniteHorizon}, every inter-success interval has
		length at least $n_1$ and hence,
		$e^{\alpha Y_i}-1
			\ge
			e^{\alpha n_1}-1$.
		Therefore,
		\begin{align}
			 & \liminf_{m\to\infty}
			\frac{1}{T'_m}
			\mathbb{E}\left[
				\sum_{i=0}^{K_{T'_m}-1}
				\mathbf{1}_{\{a_{i-}=n_2\}}
				\left(e^{\alpha Y_i}-1\right)
				\right]
			\nonumber               \\
			 & \ge
			\liminf_{m\to\infty}
			\left(e^{\alpha n_1}-1\right)
			\frac{1}{T'_m}
			\mathbb{E}\left[
				\sum_{i=0}^{K_{T'_m}-1}
				\mathbf{1}_{\{a_{i-}=n_2\}}
				\right]
			\nonumber               \\
			 & =
			\liminf_{m\to\infty}
			\left(e^{\alpha n_1}-1\right)
			\mathbb{E}\left[\frac{S_2(T'_m)}{T'_m}\right]
			\nonumber               \\
			 & =
			\left(e^{\alpha n_1}-1\right)\theta_2 .
		\end{align}

		Combining the two terms gives, for every admissible schedule,
		\begin{align}
			\Delta_{\exp,\mathrm{avg}}(\pi)
			 & \ge
			\frac{e^{\alpha n_1}}{\alpha}
			(\theta_1+\theta_2)
			\left(
			e^{\frac{\alpha}{\theta_1+\theta_2}}-1
			\right)
			\nonumber \\
			 & \quad+
			\frac{e^{\alpha n_2}-e^{\alpha n_1}}{\alpha}
			\left(e^{\alpha n_1}-1\right)\theta_2 .
			\label{eq:expLowerForPairRewritten}
		\end{align}
		Optimizing over the same set of
		admissible rate pairs as in Lemma
		\ref{lem:AoIBounds} finishes the bound.
	\end{proof}

	Now, we use the same replacement argument as in the proof of Theorem
	\ref{thm:chooseN1OptimalUnified}. For the local comparison, let \(\mu^+\)
	choose \(n_1\) at time \(t\) and let \(\mu^-\) choose \(n_2\) at time
	\(t\), with both schedules identical before and after time \(t\).
	Let \(z_i^+\) and \(z_i^-\) denote the corresponding
	expected exponential ages, and write \(z_t\triangleq z_t^+=z_t^-\).
	Define the two maps
	\begin{align}
		h_1(z)=\beta_1+\rho_1 z,
		\qquad
		h_2(z)=\beta_2+\rho_2 z.
	\end{align}
	Using (\ref{eq:QExponential}) and the definition
	of the expected exponential ages in Lemma \ref{lem:BoundsOnZi},
	the \gls{dpsc}
	at state \(z_i\) is given by
	\begin{align}
		d_e(z_i)=K_e z_i-\lambda n_e, \qquad e\in\{1,2\}.
	\end{align}
	Let
	$\Delta_k \triangleq z^-_{t+k}-z^+_{t+k}$, with
	$k\ge 0$,
	so that $\Delta_0=0$. The one-step difference is
	\begin{align}\label{eq:ExpUnifiedDelta1}
		\Delta_1
		= h_2(z_t)-h_1(z_t)
		= (\beta_2-\beta_1)-(\rho_1-\rho_2)z_t.
	\end{align}
	For $k\ge 1$, both schedules use the same future actions.
	Hence, we have
	\begin{align}\label{eq:ExpUnifiedGapRecursion}
		\Delta_{k+1}=\rho(a_{t+k})\Delta_k.
	\end{align}

	At time $t$, the immediate difference in the \gls{dpsc} is
	\begin{align}\label{eq:ExpUnifiedImmediateDiff}
		d_2(z_t)-d_1(z_t)
		=
		(K_2-K_1)z_t-\lambda\delta_n.
	\end{align}
	We now prove the statement for the case $\rho_2 < \rho_1 < 1$
	and analyze the same two regimes as
	in the proof of
	Theorem~\ref{thm:chooseN1OptimalUnified}.
	The case $\rho_1 < \rho_2 < 1$ is treated
	at the end of the proof.

	\noindent\textbf{(i) Above-threshold regime.}
	Assume $z_L>z^*$.
	Since $h_2(z)-h_1(z)=(\beta_2-\beta_1)-(\rho_1-\rho_2)z$ and $\rho_1>\rho_2$, the two maps cross exactly once at $z=z^*$.
	By Lemma \ref{lem:BoundsOnZi}, the lower bound on $z_i$ converges to $z_L$, so there exists a finite time $T^*$ such that
	\begin{align}
		z_i>z^* \qquad \text{for all } i>T^*.
	\end{align}
	A finite prefix does not affect the average-cost criterion, so we may
	restrict our attention to $t\ge T^*$.
	For $z_t>z^*$, \eqref{eq:ExpUnifiedDelta1} implies
	\(\Delta_1<0\), i.e., choosing $n_2$
	produces a smaller expected exponential age than choosing $n_1$.
	Because both schedules use the same action $n_1$ for $k\ge 1$,
	from \eqref{eq:ExpUnifiedGapRecursion} we get
	\begin{align}
		 & |\Delta_k| = \rho_1^{k-1}|\Delta_1|,\qquad k\ge 1,         \\
		 & \left|d_{a_{t+k}}(z^+_{t+k})-d_{a_{t+k}}(z^-_{t+k})\right|
		= K_1 |\Delta_k|.
	\end{align}
	Therefore, the future advantage of schedule \(\mu^-\) can be upper bounded as
	\begin{align}
		 & \sum_{k\ge 1}\Bigl(d_{a_{t+k}}(z^+_{t+k})-d_{a_{t+k}}(z^-_{t+k})\Bigr) \\
		 & \leq
		\frac{K_1}{\bar{\rho}_1}\bigl(h_1(z_t)-h_2(z_t)\bigr).
	\end{align}
	Hence, the total accumulated cost increase caused by choosing $n_2$ instead of $n_1$ at time $t$
	is bounded from below by
	\begin{align}
		 & \Psi_\lambda(z_t)
		\triangleq
		\bigl(d_2(z_t)-d_1(z_t)\bigr)
		-
		\frac{K_1}{\bar{\rho}_1}\bigl(h_1(z_t)-h_2(z_t)\bigr) \\
		 & =
		(K_2-K_1)z_t-\lambda\delta_n                          \\
		 & -
		\frac{K_1}{\bar{\rho}_1}
		\Bigl((\beta_1-\beta_2)+(\rho_1-\rho_2)z_t\Bigr).
	\end{align}
	Since $\Psi_\lambda$ is decreasing in $\lambda$,
	it is sufficient to require
	$\Psi_{U_{\exp}}(z_t)>0$.
	Moreover, in this regime we have
	\begin{align}
		z_t \in [z_L-\delta,\,z_U]
	\end{align}
	for an arbitrarily small $\delta>0$ and all
	sufficiently large $t$. Since $\Psi_{U_{\exp}}(z)$ is affine in $z$,
	its minimum over the achievable interval is attained at the right or left endpoint. Letting $\delta\downarrow 0$, a sufficient condition is
	\begin{align}
		\min_{z \in [z_L, z_U]} \Psi_{U_{\exp}}(z)>0.
	\end{align}
	This is exactly
	\begin{align}
		U_{\exp}
		< &
		\frac{1}{\delta_n} \min_{z \in [z_L, z_U]} \Bigg\{ (K_2-K_1)z                     \\
		  & - \frac{K_1}{\bar{\rho}_1}\Big(\beta_1-\beta_2+(\rho_1-\rho_2)z\Big) \Bigg\}.
	\end{align}
	Under this condition, every occurrence of
	$n_2$ after time $T^*$ can be replaced by
	$n_1$ with a strictly positive reduction in
	the accumulated \gls{dpsc}. Hence, always choosing $n_1$ is optimal.

	\medskip
	\noindent\textbf{(ii) Below-threshold regime.}

	Assume $z_U <z^*$.
	Then Lemma \ref{lem:BoundsOnZi} implies $z_i <z^*$
	for all reachable states, and therefore \eqref{eq:ExpUnifiedDelta1} gives $\Delta_1>0$.
	Following the same lines
	as in the above-threshold regime, we
	get the lower bound
	\begin{align}
		 & \Psi_{U_{\exp}}(z_t)
		=
		(K_2-K_1)z_t-U_{\exp}\delta_n \\
		 & +
		K_1 \left( \frac{1}{\bar{\rho}_1} - \delta_M \right)
		\Bigl((\beta_2-\beta_1)-(\rho_1-\rho_2)z_t\Bigr),\label{eq:ExpAoITotalCostRegime2}
	\end{align}
	with $\delta_M = \frac{\rho_1^M}{\bar{\rho}_1}$, which can be made
	arbitrarily small by choosing $M$ sufficiently large. Continuing with the bounding
	as in the above-threshold regime gives the second part in the theorem.

	\textbf{(iii) Case $\rho_1 < \rho_2 < 1$.}

	For the case $\rho_1 < \rho_2 < 1$, it remains to show
	that (\ref{eq:ExpAoITotalCostRegime2}) is strictly positive
	in this regime. The term $\delta_M$ again can be made arbitrarily small for
	$M$ sufficiently large and can be omitted in the following.
	Note that the upper bound $U_{\exp}$ is given by
	$U_{\exp}
		=
		\frac{K_1\beta_1}{n_1\bar{\rho}_1}$,
	and \eqref{eq:ExpAoITotalCostRegime2} can be rewritten as
	\begin{align}
		\Psi_{U_{\exp}}(z_t)
		 & =
		\left(
		K_2-K_1+\frac{K_1(\rho_2-\rho_1)}{\bar{\rho}_1}
		\right)z_t \\
		 & -
		U_{\exp}\delta_n
		+
		\frac{K_1(\beta_2-\beta_1)}{\bar{\rho}_1}.
	\end{align}
	Since the second summand in the first bracket is positive, this can be bounded from below by
	\begin{align}
		\Psi_{U_{\exp}}(z_t)
		 & \ge
		(K_2-K_1)z_t
		-
		\frac{K_1\beta_1}{n_1\bar{\rho}_1}\delta_n .
	\end{align}
	Using the lower bound $z_t\ge z_L -\delta =  \frac{\beta_1}{\bar{\rho}_1} - \delta$ with
	an arbitrarily small $\delta>0$ for all sufficiently
	large $t$, we get
	\begin{align}
		 & \Psi_{U_{\exp}}(z_t)
		\ge
		(K_2-K_1)\!\left(\frac{\beta_1}{\bar{\rho}_1}\!-\!\delta\right)
		\!-\!
		\frac{K_1\beta_1}{n_1\bar{\rho}_1}\delta_n \\
		 & =
		\frac{\beta_1}{\bar{\rho}_1}
		\left(
		K_2-K_1-\frac{K_1}{n_1}\delta_n
		\right)
		-\delta(K_2-K_1)                           \\
		 & =
		\frac{\beta_1}{\bar{\rho}_1}
		\left(
		K_2-\frac{n_2}{n_1}K_1
		\right)
		-\delta(K_2-K_1).
	\end{align}
	Since \(n_2>n_1\) implies \(K_2/n_2>K_1/n_1\), the first term is strictly
	positive. Choosing \(\delta>0\) sufficiently
	small yields
	$\Psi_{U_{\exp}}(z_t)>0$.
\end{proof}

\section{Proof of Theorem \ref{thm:chooseN2OptimalExpUnified}}\label{app:chooseN2OptimalExpUnified}
\begin{proof}
	We use the replacement argument from the proof of Theorem
	\ref{thm:chooseN1OptimalUnified}. For the local comparison, let \(\mu^-\)
	choose \(n_1\) at time \(t\) and let \(\mu^+\) choose \(n_2\) at time
	\(t\), with both schedules identical before and after time \(t\).
	Let \(z_i^-\) and \(z_i^+\) denote the corresponding
	expected exponential ages, and write \(z_t\triangleq z_t^-=z_t^+\).
	Let the two maps $h_1(z)$, $h_2(z)$ and the per-stage
	Dinkelbach cost $d_e(z)$ be defined as
	in the proof of Theorem \ref{thm:chooseN1OptimalExpUnified}.
	We now define
	\begin{align} \Delta_k\triangleq z^-_{t+k}-z^+_{t+k},\qquad k\ge 0,
	\end{align} so that $\Delta_0=0$ and
	\begin{align} \Delta_1 = h_1(z_t)-h_2(z_t)
		= (\beta_1-\beta_2)+(\rho_1-\rho_2)z_t.
	\end{align}
	By the assumption $z_L>z^*$ and the lower bound
	from Lemma~\ref{lem:BoundsOnZi}, there exists a
	finite time $T^*$ such that
	\begin{align} z_i>z^* \qquad \text{for all } i>T^*.
	\end{align}
	Hence, for $t\ge T^*$, we have $\Delta_1>0$.
	As usual, a finite prefix does
	not affect average-cost optimality, so it is enough to consider
	such times.
	For $k\ge 1$, both schedules use the same future actions.
	Hence, we have $\Delta_{k+1}=\rho(a_{t+k})\Delta_k$.
	Since $\rho(a_{t+k}) = \rho_2$, we obtain
	\begin{align} \Delta_k = \rho_2^{k-1}\Delta_1,\qquad k\ge 1.
	\end{align}
	At time $t$, the immediate cost difference is
	\begin{align} d_1(z_t)-d_2(z_t)=\lambda\delta_n-(K_2-K_1)z_t.
	\end{align}
	For $k\ge 1$, the terms $-\lambda a_{t+k}$ cancel, and
	\begin{align}
		d_{a_{t+k}}(z^-_{t+k})-d_{a_{t+k}}(z^+_{t+k}) = K_2\Delta_k.
	\end{align}
	Therefore, the total accumulated cost decrease caused by
	choosing $n_2$ instead of $n_1$ at time $t$
	is lower-bounded by
	\begin{align}
		 & \Phi_\lambda(z_t) \triangleq \bigl(d_1(z_t)-d_2(z_t)\bigr)
		+ K_2 \left(\frac{1}{\bar{\rho}_2} - \delta_M\right) \Delta_1 \\
		 & = \lambda\delta_n-(K_2-K_1)z_t                             \\
		 & +  K_2 \left(\frac{1}{\bar{\rho}_2} - \delta_M\right)
		\Bigl((\beta_1-\beta_2)+(\rho_1-\rho_2)z_t\Bigr),
	\end{align}
	where $\delta_M = \frac{\rho_2^M}{\bar{\rho}_2}$ can be made arbitrarily
	small by choosing $M$ sufficiently large.
	Since $\Phi_\lambda$ is increasing in $\lambda$,
	it is sufficient to require
	\begin{align}
		\Psi_L(z_t)\triangleq\Phi_{L_{\exp}}(z_t)>0.
	\end{align}
	Moreover, for every $\delta>0$ and all sufficiently large $t$,
	\begin{align}
		z_t\in [z_L - \delta,\,z_U].
	\end{align}
	The function $\Psi_L(z)$ is affine in $z$, so its
	minimum over the reachable interval is
	attained at one of the endpoints. Therefore, the sufficient
	condition becomes
	\begin{align}
		L_{\exp} > & \frac{1}{\delta_n} \max_{z \in [z_L, z_U]}
		\Bigg\{\frac{K_2(\beta_2-\beta_1)}{\bar{\rho}_2}        \\
		           & + z\Bigg( (K_2-K_1)
		-\frac{K_2(\rho_1-\rho_2)}{\bar{\rho}_2}\Bigg) \Bigg\}.
	\end{align}
	Under these conditions, replacing any occurrence of
	$n_1$ by $n_2$ at sufficiently large times strictly
	decreases the accumulated Dinkelbach cost.
	Hence, always choosing $n_2$ is optimal.
\end{proof}

\section{Proof of Lemma \ref{lem:OptimalStationaryAoI}}\label{app:OptimalStationaryAoI}
\begin{proof}
	When stationary policies are used, the \gls{aoi} process is a renewal reward process.
	Therefore, the average age can be written as
	\begin{align}
		\Delta_{\mathrm{avg}}(\xi) = \frac{\mathbb{E}[Q]}{\mathbb{E}[Y]},
	\end{align}
	where $Q$ and $Y$ denote the area under the age curve between two receptions and the duration of that interval, respectively.
	The area $Q$ can be written as
	\begin{align}
		Q = \int_{0}^{Y} (N_{prev}+t)dt = N_{prev}Y + \frac{Y^2}{2},
	\end{align}
	where $N_{prev}$ denotes the blocklength of the last successful update.
	Taking expectations yields
	\begin{align}
		\mathbb{E}[Q] = \mathbb{E}[N_{prev}]\cdot\mathbb{E}[Y] + \frac{\mathbb{E}[Y^2]}{2},
	\end{align}
	which follows from the independence of the chosen blocklength $N_{prev}$ and the transmission duration $Y$.
	The distribution of $N_{prev}$ can be calculated by conditioning the event of a successful transmission
	with blocklength $n_1$ ($n_2$, respectively) on the event of an overall successful transmission.
	This gives
	\begin{align}
		\mathbb{E}[N_{prev}] & = \frac{\xi\bar{\epsilon}_1n_1 + \bar{\xi}\bar{\epsilon}_2n_2}{\xi\bar{\epsilon}_1+\bar{\xi}\bar{\epsilon}_2}.
	\end{align}
	The expected value of $Y$ can be calculated via the recursion
	\begin{align}
		\mathbb{E}[Y] & = \xi(\bar{\epsilon}_1n_1 + \epsilon_1(n_1+\mathbb{E}[Y])) \notag \\
		              & + \bar{\xi}(\bar{\epsilon}_2n_2 + \epsilon_2(n_2+\mathbb{E}[Y])).
	\end{align}
	Solving for $\mathbb{E}[Y]$ gives
	\begin{align}
		\mathbb{E}[Y] = \frac{\xi n_1 + \bar{\xi}n_2}{\xi\bar{\epsilon}_1+\bar{\xi}\bar{\epsilon}_2}.
	\end{align}
	Similarly, for the expected value $\mathbb{E}[Y^2]$, we have the following recursion:
	\begin{align}
		\mathbb{E}[Y^2] & = \xi(\bar{\epsilon}_1n_1^2 + \epsilon_1\mathbb{E}[(n_1+Y)^2]) \notag \\
		                & + \bar{\xi}(\bar{\epsilon}_2n_2^2 + \epsilon_2\mathbb{E}[(n_2+Y)^2]).
	\end{align}
	Solving for $\mathbb{E}[Y^2]$ gives
	\begin{align}
		\mathbb{E}[Y^2]  = \frac{(\xi n_1^2 + \bar{\xi}n_2^2)}{\xi\bar{\epsilon}_1+\bar{\xi}\bar{\epsilon}_2} \notag
		+ \frac{2(\xi \epsilon_1 n_1 + \bar{\xi}\epsilon_2 n_2)(\xi n_1 + \bar{\xi}n_2)}{(\xi\bar{\epsilon}_1+\bar{\xi}\bar{\epsilon}_2)^2}.
	\end{align}
	Inserting everything and simplifying gives the result.
\end{proof}

\section{Proof of Lemma \ref{lem:OptimalBetaAoI}}\label{app:OptimalBetaAoI}

\begin{proof}
	Define the terms
	\begin{align}
		F(\xi) & \triangleq \xi n_1 + \bar{\xi}n_2,                          \\
		S(\xi) & \triangleq \xi\bar{\epsilon}_1 + \bar{\xi}\bar{\epsilon}_2, \\
		K(\xi) & \triangleq \xi n_1^2 + \bar{\xi}n_2^2.
	\end{align}
	With this, the average age expression from
	Lemma~\ref{lem:OptimalStationaryAoI} can be written as
	\begin{align}
		\Delta(\xi) = \frac{F(\xi)}{S(\xi)} + \frac{K(\xi)}{2F(\xi)}.
	\end{align}
	Differentiating with respect to $\xi$ gives
	\begin{align}
		\Delta'(\xi)
		 & =
		\frac{F'(\xi)S(\xi)-F(\xi)S'(\xi)}{S(\xi)^2}\notag \\
		 & +
		\frac{K'(\xi)F(\xi)-K(\xi)F'(\xi)}{2F(\xi)^2}.
	\end{align}
	Since
	\begin{align}
		F'(\xi) & =n_1-n_2,                           \\
		S'(\xi) & =\bar{\epsilon}_1-\bar{\epsilon}_2, \\
		K'(\xi) & =n_1^2-n_2^2,
	\end{align}
	we obtain after simplification
	\begin{align}
		F'(\xi)S(\xi)-F(\xi)S'(\xi)
		 & =
		n_1\bar{\epsilon}_2-n_2\bar{\epsilon}_1,
	\end{align}
	and
	\begin{align}
		K'(\xi)F(\xi)-K(\xi)F'(\xi)
		=
		n_1n_2(n_1-n_2).
	\end{align}
	Hence,
	\begin{align}
		\Delta'(\xi)
		=
		\frac{n_1\bar{\epsilon}_2-n_2\bar{\epsilon}_1}{S(\xi)^2}
		+
		\frac{n_1n_2(n_1-n_2)}{2F(\xi)^2}.
	\end{align}
	Any interior optimum $\xi \in (0,1)$ must satisfy the first-order condition $\Delta'(\xi)=0$.
	Multiplying by $2F(\xi)^2S(\xi)^2$ yields the quadratic equation
	\begin{align}
		 & 2\bigl(n_1\bar{\epsilon}_2-n_2\bar{\epsilon}_1\bigr)\bigl(\xi n_1+\bar{\xi}n_2\bigr)^2 \notag \\
		 & +
		n_1n_2(n_1-n_2)\bigl(\xi \bar{\epsilon}_1+\bar{\xi}\bar{\epsilon}_2\bigr)^2 = 0.
	\end{align}
	Therefore, every interior optimizer must be one of the roots $\tilde{\xi}_1,\tilde{\xi}_2$ of this quadratic equation.

	Since $\Delta(\xi)$ is continuous on the closed interval $[0,1]$,
	a global minimizer exists and must be attained either at a boundary point $\xi\in\{0,1\}$ or at an interior stationary point.
	Consequently, the optimal parameter $\xi_{\mathrm{avg}}^*$ is obtained by
	evaluating $\Delta(\xi)$ at $\xi=0$, $\xi=1$, and at
	the roots $\tilde{\xi}_1,\tilde{\xi}_2$
	that lie in $[0,1]$, and selecting the value of $\xi$ that gives the smallest average \gls{aoi}.
\end{proof}

\section{Proof of Theorem \ref{lem:ConstantSupoptimalityCertificate}}\label{app:ConstantSupoptimalityCertificate}
\begin{proof}
	Fix the parameter $\xi = \xi_{\mathrm{avg}}^*$ and let $\Delta_{\mathrm{avg}}(\xi_{\mathrm{avg}}^*)$ denote the average \gls{aoi} of the corresponding
	stationary randomized policy.
	By the renewal reward theorem \cite{gallagerStochasticProcessesTheory2013}, we have
	\begin{align}
		\lim_{T \to \infty} \Delta_{\mathrm{avg},T}(\bm{A}, \bm{Z}) = \Delta_{\mathrm{avg}}(\xi_{\mathrm{avg}}^*) \quad \text{a.s.},
	\end{align}
	where $\Delta_{\mathrm{avg},T}(\bm{A}, \bm{Z})$ denotes the average age over a sample path of
	actions $\bm{A}$ and transmission errors $\bm{Z}$ up to transmission index $T$.
	We now define the event
	\begin{align}
		G_{\xi_{\mathrm{avg}}^*} = \left \{ (\bm{a},\bm{z}): \lim_{T \to \infty} \Delta_{\mathrm{avg},T}(\bm{a}, \bm{z})
		= \Delta_{\mathrm{avg}}(\xi_{\mathrm{avg}}^*)\right \},
	\end{align}
	so that the renewal reward theorem is equivalent to
	\begin{align}\label{eq:RenewalRewardProb1}
		P(G_{\xi_{\mathrm{avg}}^*}) = 1.
	\end{align}
	Next, fix an action sequence $\bm{a} = (a_0, a_1, \ldots)$ and define the probability
	\begin{align}
		h(\bm{a}) = P_{\bm{Z}\vert \bm{a}} \left( \lim_{T \to \infty}  \Delta_{\mathrm{avg},T}(\bm{a}, \bm{Z})
		= \Delta_{\mathrm{avg}}(\xi_{\mathrm{avg}}^*) \right),
	\end{align}
	with the convention that the probability
	is zero if the limit does not exist.
	By applying the Fubini-Tonelli theorem to
	(\ref{eq:RenewalRewardProb1}), we get
	\begin{align}
		1 = P((\bm{A},\bm{Z}) \in G_{\xi_{\mathrm{avg}}^*}) = \int h(\bm{a})dP_{\bm{A}}(\bm{a}).
	\end{align}
	Because $0 \leq h(\bm{a}) \leq 1$, it follows that $h(\bm{a}) = 1$ on a set with probability one
	with respect to the measure $P_{\bm{A}}$ of the action sequence $\bm{A}$.
	Therefore, we can pick one deterministic action sequence $\bm{a}^*$ such that $h(\bm{a}^*) = 1$ and
	be guaranteed that
	\begin{align}
		P_{\bm{Z}\vert \bm{a}^*} \left(   \lim_{T \to \infty} \Delta_{\mathrm{avg},T}(\bm{a}^*, \bm{Z})
		= \Delta_{\mathrm{avg}}(\xi_{\mathrm{avg}}^*) \right) = 1.
	\end{align}
	Finally, by applying the dominated convergence theorem, we obtain
	\begin{align}
		 & \lim_{T \to \infty} \mathbb{E}_{\bm{Z}}\left[   \Delta_{\mathrm{avg},T}(\bm{a}^*, \bm{Z}) \right] \\
		 & = \mathbb{E}_{\bm{Z}}\left[ \lim_{T \to \infty} \Delta_{\mathrm{avg},T}(\bm{a}^*, \bm{Z}) \right]
		= \Delta_{\mathrm{avg}}(\xi_{\mathrm{avg}}^*).
	\end{align}
	To establish dominated convergence, we have to
	construct a single integrable random variable that
	dominates the finite-horizon average age
	uniformly over all sufficiently large $T$.
	After fixing the deterministic action sequence $\bm{a}^*$,
	let $p_t = \bar{\epsilon}(a_t^*)$.
	For the construction of the dominating random variable,
	we introduce the following probability space:
	\begin{align}\label{eq:CouplingProbabilitySpace}
		(\Omega,\mathcal{F},\mathbb{P})
		=
		\left(
		[0,1]^{\mathbb{N}_0},
		\mathcal{B}([0,1])^{\otimes\mathbb{N}_0},
		\operatorname{Leb}^{\otimes\mathbb{N}_0}
		\right),
	\end{align}
	where $\mathcal{B}([0,1])$ is the Borel $\sigma$-algebra on $[0,1]$ and $\operatorname{Leb}$ is the Lebesgue measure on $[0,1]$.
	Let $U_t(\omega)$ be the $t$-th coordinate map, so that the random variables
	$(U_t)_{t\geq 0}$ are independent and uniformly distributed on $[0,1]$.
	On this probability space, define the successful transmission
	indicators of the original process
	and of a coupled process, respectively, by
	\begin{align}
		S_t(\omega)
		 & = \mathbf{1}\{U_t(\omega)\leq p_t\},              \\
		\widetilde{S}_t(\omega)
		 & = \mathbf{1}\{U_t(\omega)\leq \bar{\epsilon}_1\}.
	\end{align}
	The process $(S_t)_{t\geq 0}$ realizes the
	process of successful transmissions associated with the
	conditional error distribution $P_{\bm{Z}\vert\bm{a}^*}$. We denote the
	corresponding error realization by $\bm{Z}(\omega)$.
	In contrast,
	$(\widetilde{S}_t)_{t\geq 0}$ is an i.i.d.\ Bernoulli process with success
	probability $\bar{\epsilon}_1$.
	Since $\bar{\epsilon}_1 \leq p_t$ for every $t$, the coupling satisfies
	\begin{align}
		\widetilde{S}_t(\omega)\leq S_t(\omega)
		\qquad
		\text{for every }t\geq 0\text{ and }\omega\in\Omega.
	\end{align}
	Thus, every successful transmission of the coupled
	process is also a successful transmission of the original
	process. In the following, the finite horizon is indexed by the integer $T$,
	where $T \in \{0,1,2,\ldots\}$, so that the corresponding time horizon
	is given by the sum of the blocklengths of the deterministic action sequence
	$\bm{a}^*$ up to index $T$, i.e., $\sum_{i=0}^{T} a_i^*$.
	Let $K(T)$ and $\widetilde{K}(T)$ denote the numbers of successful transmissions
	up to transmission index $T$ in the original and coupled processes, respectively. Let
	$a_{i-}^*$ be the blocklength of the update received at the beginning
	of the $i$-th inter-reception interval of the original process, and let $Y_i$ be the duration of its $i$-th
	inter-reception interval. Furthermore, let $\zeta_i$ be the number of transmission
	attempts in the $i$-th inter-success interval of the coupled process. The random
	variables $(\zeta_i)$ are i.i.d.\ geometric random variables with success
	probability $\bar{\epsilon}_1$.
	Let the term $R_T$ denote the residual age accumulated
	after the last successful transmission before
	the time horizon ends.
	Using the quantities introduced above, for every $T\geq 1$ and every
	$\omega\in\Omega$, we can upper bound the finite-horizon average age as follows:
	\begin{align}
		 & \Delta_{\mathrm{avg},T}(\bm{a}^*\!\!,\bm{Z}(\omega))                     \\
		 & = \frac{1}{\sum_{i=0}^{T}a_i^*}\left( \sum_{i=0}^{K(T)-1}\!\!
		\!\left( a_{i-}^*Y_i +
		\frac{Y_i^2}{2} \right)\! + \! R_T\right)                                   \\
		 & \overset{(a)}{\leq} n_2 + \frac{1}{(T+1) n_1}
		\sum_{i=0}^{\widetilde{K}(T)}\frac{n_2^2 \zeta_i^2}{2}                      \\
		 & \overset{(b)}{\leq} n_2 + \frac{n_2^2}{n_1}\frac{1}{T+1}\sum_{i=0}^{T+1}
		\frac{\zeta_i^2}{2}                                                         \\
		 & {\leq} n_2 + \frac{n_2^2}{2n_1}
		\sup_{T \geq 1} \frac{1}{T+1} \sum_{i=0}^{T+1} \zeta_i^2                    \\
		 & \overset{(c)}{\leq} \underbrace{n_2 + \frac{n_2^2}{2n_1}
			\left(\zeta_0^2+\sup_{N \geq 1}
			\frac{1}{N}\sum_{i=1}^{N}\zeta_i^2\right)}_{\triangleq D(\omega)}.\label{eq:DOmegaDefinition}
	\end{align}

	We now justify the steps in the chain of inequalities:
	For (a), first note that $a_{i-}^*\leq n_2$
	and $\sum_{i=0}^{T}a_i^*\geq (T+1)n_1$. Hence,
	the term $a_{i-}^*Y_i$ is bounded
	by $n_2Y_i$, and dividing by the total time horizon
	gives a contribution of at most
	$n_2$ to the average age.
	To upper-bound the quadratic terms,
	note that
	each successful reception of the coupled process is also
	a successful reception of the original process.
	The term $R_T$ vanishes, because
	the summation is taken up to $(\widetilde{K}(T))$,
	which includes the last inter-reception
	interval that started before the horizon at transmission index $T$ ended.
	Step (b) follows from $\widetilde{K}(T)\leq T+1$
	and $\zeta_i > 0$.
	Finally, step (c) follows from
	\begin{align}
		\frac{1}{M+1}\sum_{i=0}^{M+1}\zeta_i^2
		\leq \zeta_0^2+\frac{1}{M+1}
		\sum_{i=1}^{M+1}\zeta_i^2
	\end{align}
	which holds for every $M\geq 1$. This gives
	the dominating random variable $D(\omega)$ defined
	in \eqref{eq:DOmegaDefinition}.
	To show that $D(\omega)$ is integrable, it remains
	to show that $\sup_{N \geq 1} \frac{1}{N} \sum_{i=1}^{N} \zeta_i^2$ is integrable, because
	the term $\zeta_0^2$ is a squared geometric random variable and therefore integrable.
	In the following, we denote by $W_i = \zeta_i^2$ the i.i.d.\ squared geometric random variables
	with success probability $\bar{\epsilon}_1$.
	First, note that all the moments of a geometric random variable are finite.
	Let $\widehat{W}_i = W_i - \mathbb{E}[W_i]$ and $x > 2\mathbb{E}[W_i]$.
	Then, by applying the union bound, we get
	\begin{align}
		 & P\left( \sup_{N \geq 1} \frac{1}{N} \sum_{i=1}^{N} \zeta_i^2 > x \right)
		= P \left( \exists m: \frac{1}{m} \sum_{i=1}^{m} \zeta_i^2 > x \right)                                        \\
		 & \leq \sum_{m=1}^{\infty} P\left(  \sum_{i=1}^{m} \widehat{W}_i > m\left(x - \mathbb{E}[W_i]\right) \right) \\
		 & \leq \sum_{m=1}^{\infty} P\left( \sum_{i=1}^{m} \widehat{W}_i > mx/2 \right).
	\end{align}
	Now, we apply Markov's inequality and use
	$P(Y > x) \leq P(\vert Y \vert^4 > x^4)$ for $x \geq 0$ to get
	\begin{align}
		\sum_{m=1}^{\infty}\!\! P \! \left( \sum_{i=1}^{m}\! \widehat{W}_i > mx/2 \right)
		\!\leq \frac{2^4}{x^4}\! \sum_{m=1}^{\infty} \frac{1}{m^4}
		\mathbb{E}\!\left[ \left\vert \sum_{i=1}^{m} \widehat{W}_i \right\vert^4 \right]\!. \label{eq:MarkovBoundPY}
	\end{align}
	The last expectation can be upper-bounded as follows:
	\begin{align}
		 & \mathbb{E}\left[ \left\vert \sum_{i=1}^{m} \widehat{W}_i \right\vert^4 \right]
		= \sum_{i=1}^{m}\mathbb{E}[\widehat{W}_i^4] + 6 \sum_{1 \leq i < j \leq m} \mathbb{E}[\widehat{W}_i^2]\mathbb{E}[\widehat{W}_j^2]\notag \\
		 & =  m \mathbb{E}[\widehat{W}_1^4] + 3m(m-1)\mathbb{E}[\widehat{W}_1^2]^2 \leq C m^2,
	\end{align}
	with $0 \leq C < \infty$ being a constant that
	depends on the distribution of $W_i$.
	Here we used that the $\widehat{W}_i$ are i.i.d. and centered.
	Thus, by independence, every mixed term in
	which at least one index appears only once
	has expectation zero.
	Consequently, only terms of the
	form $\mathbb{E}[\widehat{W}_i^4]$ and
	$\mathbb{E}[\widehat{W}_i^2]\mathbb{E}[\widehat{W}_j^2]$, $i\neq j$,
	remain. The factor $6$ accounts
	for the number of permutations of the
	multiset $\{i,i,j,j\}$.
	Inserting this into (\ref{eq:MarkovBoundPY}) and using
	$\sum_{m=1}^{\infty} \frac{1}{m^2} = \frac{\pi^2}{6} < \infty$,
	which is known as the solution to the Basel problem, we get
	\begin{align}
		P \left(\underbrace{\sup_{N \geq 1} \frac{1}{N} \sum_{i=1}^{N} \zeta_i^2}_{\triangleq V} > x\right)
		\leq \tilde{C}x^{-4},
	\end{align}
	for a positive constant $\tilde{C}$, and
	$x > 2\mathbb{E}[W_i]$. Therefore, we can integrate
	$V$ and we get
	\begin{align}
		\mathbb{E}[V] & = \int_0^\infty P(V > x) dx                                                         \\
		              & \leq 2\mathbb{E}[W_i] + \int_{2\mathbb{E}[W_i]}^\infty \tilde{C}x^{-4} dx < \infty.
	\end{align}
	Since $\mathbb{E}[W_i] = \mathbb{E}[\zeta_0^2]<\infty$, this shows that the dominating function
	$D(\omega)$ is integrable, and we can apply the dominated convergence theorem.
	Note that if $\xi_{\mathrm{avg}}^* \in (0,1)$, then there exists a stationary randomized policy
	which achieves a smaller average \gls{aoi} than the two constant schedules of
	always choosing $n_1$ or always choosing $n_2$. In the previous steps, we have shown
	that we can choose a deterministic action sequence $\bm{a}^*$ that achieves the same average \gls{aoi}
	as the stationary randomized policy.
	This proves the theorem.
\end{proof}

\section{Proof of Lemma \ref{lem:StationaryRandomizedExpAoI}}\label{app:StationaryRandomizedExpAoI}

\begin{proof}
	The proof follows exactly the same lines
	as the proof of Lemma~\ref{lem:OptimalStationaryAoI}.
	By the renewal reward theorem, the average exponential age
	can be written as
	\begin{align}
		\Delta_{\exp,\mathrm{avg}}(\xi)
		=
		\frac{\mathbb{E}[Q^{\exp}]}{\mathbb{E}[Y]},
	\end{align}
	with $Q^{\exp}$ given by
	\begin{align}
		Q^{\exp}
		 & =
		\int_0^Y e^{\alpha(N_{\mathrm{prev}}+t)}dt
		=
		\frac{e^{\alpha N_{\mathrm{prev}}}}{\alpha}\bigl(e^{\alpha Y}-1\bigr).
	\end{align}
	Since $N_{\mathrm{prev}}$ and $Y$ are independent under
	the stationary randomized policy, we have
	\begin{align}
		\mathbb{E}[Q^{\exp}]
		=
		\frac{1}{\alpha}
		\mathbb{E}\!\left[e^{\alpha N_{\mathrm{prev}}}\right]\cdot
		\left(\mathbb{E}\!\left[e^{\alpha Y}\right]-1\right).
	\end{align}
	Conditioning on the event of an overall successful transmission gives
	\begin{align}
		\mathbb{E}\!\left[e^{\alpha N_{\mathrm{prev}}}\right]
		=
		\frac{\xi \bar{\epsilon}_1e^{\alpha n_1}+\bar{\xi}\bar{\epsilon}_2e^{\alpha n_2}}
		{\xi \bar{\epsilon}_1 + \bar{\xi}\bar{\epsilon}_2}
		=
		\frac{A(\xi)}{\xi \bar{\epsilon}_1 + \bar{\xi}\bar{\epsilon}_2}.
	\end{align}

	Next, let $M \triangleq \mathbb{E}\!\left[e^{\alpha Y}\right]$.
	Conditioning on the first transmission in the cycle yields
	\begin{align}
		M
		 & =
		\xi\left(\bar{\epsilon}_1e^{\alpha n_1}+\epsilon_1 e^{\alpha n_1}M\right) \\
		 & +
		\bar{\xi}\left(\bar{\epsilon}_2e^{\alpha n_2}+\epsilon_2 e^{\alpha n_2}M\right).
	\end{align}
	This can be written as
	$M = A(\xi) + B(\xi)M$,
	and solving for $M$ gives
	\begin{align}
		\mathbb{E}\!\left[e^{\alpha Y}\right]
		=
		M
		=
		\frac{A(\xi)}{1-B(\xi)},
	\end{align}
	provided that $B(\xi)<1$.
	From the proof of Lemma~\ref{lem:OptimalStationaryAoI}
	we get the expression for $\mathbb{E}[Y]$.
	Combining everything gives
	\begin{align}
		\Delta_{\exp,\mathrm{avg}}(\xi)
		=
		\frac{\mathbb{E}[Q^{\exp}]}{\mathbb{E}[Y]}
		=
		\frac{A(\xi)}{\alpha n_{\mathrm{avg}}(\xi)}
		\left(
		\frac{A(\xi)}{1-B(\xi)}-1
		\right),
	\end{align}
	as stated in the lemma.
\end{proof}

\section{Proof of Lemma \ref{lem:OptimalBetaExpAoI}}\label{app:OptimalBetaExpAoI}
\begin{proof}

	Any interior optimizer $\xi_{\exp}^* \in (0,1)$ must satisfy the first-order optimality condition
	\begin{align}
		\frac{d\Delta_{\exp,\mathrm{avg}}}{d\xi}(\xi_{\exp}^*)=0.
	\end{align}
	Differentiating the expression for the average exponential
	\gls{aoi} from Lemma~\ref{lem:StationaryRandomizedExpAoI} gives
	the equation in the lemma. Simplifying and
	sorting the coefficients
	shows that this is a quadratic equation in $\xi$.
	Finally, since $\Delta_{\exp,\mathrm{avg}}(\xi)$ is
	continuous on the feasible set
	$\{\xi\in[0,1]: B(\xi)<1\}$,
	a global minimizer is attained either at
	a boundary point or at an interior stationary point.
	Therefore, the optimizer can be found by evaluating
	the average exponential age at $\xi=0$, $\xi=1$
	and at all feasible
	roots of \eqref{eq:stationaryConditionCompactExp},
	and choosing the one that gives the smallest
	average exponential \gls{aoi}.
\end{proof}

\section{Proof of Theorem \ref{lem:ConstantSupoptimalityCertificateExpAoI}}\label{app:ConstantSupoptimalityCertificateExpAoI}

\begin{proof}
	We use the same selection argument as in the
	proof of Theorem \ref{lem:ConstantSupoptimalityCertificate}.
	Assume we are in a parameter regime with
	$\epsilon_1 e^{\alpha n_1} < 1$ and
	$\epsilon_2 e^{\alpha n_2} < 1$.
	Thus, for $\xi=\xi_{\exp}^*$, there exists a deterministic
	action sequence $\bm{a}^*$ such that the finite-horizon
	average exponential age converges almost surely to the
	average exponential age under the corresponding
	stationary randomized policy with parameter $\xi_{\exp}^*$.
	It remains to justify the exchange of limit
	and expectation via the dominated
	convergence theorem.
	Note that the approach from the proof
	of Theorem \ref{lem:ConstantSupoptimalityCertificate}
	does not work here, because the combination of the higher error
	probability with the larger blocklength leads to
	diverging expressions. Therefore,
	we use a martingale-based approach.

	Let $K(T)$ denote the number of successful receptions up to
	transmission index $T$ under the fixed sequence $\bm{a}^*$.
	As in Theorem \ref{lem:ConstantSupoptimalityCertificate}, let
	$Y_i$ be the duration of the $i$-th inter-reception interval
	and let $a_{i-}^*$ be the blocklength of the update received at
	the beginning of this interval. The exponential cost in
	this interval is given by
	\begin{align}
		Q_i^{\exp}
		\triangleq
		\int_0^{Y_i} e^{\alpha(a_{i-}^*+t)}\,dt .
	\end{align}
	In the following, the finite horizon is indexed by the integer $T$,
	where $T \in \{0,1,2,\ldots\}$, so that the corresponding time horizon
	is given by the sum of the blocklengths of the deterministic action sequence
	$\bm{a}^*$ up to index $T$, i.e., $\sum_{i=0}^{T} a_i^*$.
	Now, for every
	$T\geq 1$ and every sample path,
	the finite-horizon average
	exponential age can be bounded as follows:
	\begin{align}
		\Delta_{\exp,\mathrm{avg},T}(\bm{a}^*,\bm{Z})
		 & \overset{(a)} \leq
		\frac{1}{\sum_{t=0}^{T}a_t^*}
		\sum_{i=0}^{K(T)} Q_i^{\exp}     \\
		 & \overset{(b)}\leq
		\frac{e^{\alpha n_2}}{\alpha n_1(T+1)}
		\sum_{i=0}^{K(T)} e^{\alpha Y_i} \\
		 & \overset{(c)} \leq
		\underbrace{
			\frac{e^{\alpha n_2}}{\alpha n_1} \left(
			\sup_{N \geq 1}
			\frac{1}{N}\sum_{i=0}^{N}
			\underbrace{e^{\alpha Y_i}}_{\zeta_i} \right)
		}_{=:D_{\exp}(\omega)} .
	\end{align}
	Inequality (a) is valid because we have included
	the exponential cost of the last inter-reception interval
	which started before the horizon at transmission index $T$ ended.
	For the inequalities (b) and (c), we have
	used $\sum_{t=0}^{T}a_t^* \geq (T+1)n_1$,
	$K(T)\leq T+1$ and the non-negativity of the
	terms $e^{\alpha Y_i}$.
	To prove dominated convergence, we have
	to show that $\sup_{N \geq 1} \frac{1}{N} \sum_{i=0}^{N} \zeta_i$
	is integrable.
	Similar to the proof of Theorem \ref{lem:ConstantSupoptimalityCertificate},
	we first
	center the random variables $\zeta_i$ by
	subtracting their mean and bound the centered random variables
	and the mean separately.
	For this, let $\mathcal{F}_i$ denote the filtration
	generated by all information available up to the
	end of the
	$i$-th successful transmission. Note that
	this information determines the action sequence
	during the $i+1$-th inter-reception interval.
	Define the random variables
	\begin{align}
		 & \mathbb{E}[ \zeta_i  |\mathcal{F}_{i-1}]
		\triangleq \mu_i,                           \\
		 & M_i \triangleq \zeta_i - \mu_i.
	\end{align}
	With this, the
	random variable for which we want to show
	integrability can be written as
	\begin{align}
		\sup_{N \geq 1} \frac{1}{N} \sum_{i=0}^{N} \zeta_i    =
		\sup_{N \geq 1} \left( \frac{1}{N} \sum_{i=0}^{N} M_i + \frac{1}{N} \sum_{i=0}^{N} \mu_i\right). \label{eq:SumX_iSplit}
	\end{align}

	Next, we show that the
	expected values of the $\zeta_i$ are finite.
	With $\rho \triangleq \max_{j=1,2} \epsilon_j e^{\alpha n_j}$, the random variable $\mu_i$
	can be bounded as
	\begin{align}
		\mu_i
		= \mathbb{E}[ \zeta_i  |\mathcal{F}_{i-1}]
		\leq e^{\alpha n_2}\sum_{k=0}^\infty \rho^k
		\triangleq C_{\mu} < \infty.
	\end{align}
	To bound the $M_i$ terms, we will use Markov's inequality with an exponent greater than one.
	For this, we choose the exponent $q \in (1,2]$ such that
	\begin{align}
		\rho_q \triangleq\max_{j=1,2} \epsilon_j e^{\alpha n_j q} < 1.
	\end{align}

	With this, the $q$-th moment of $\zeta_i$ can be bounded as
	\begin{align}
		\mathbb{E}[\vert \zeta_i \vert ^q|\mathcal{F}_{i-1}]
		 & \leq
		e^{\alpha q n_2}\sum_{k=0}^\infty \rho_q^k
		\triangleq C_{\rho_q} < \infty.
	\end{align}
	With this and the Minkowski inequality, the $q$-th moment of $M_i$ can be bounded
	as
	\begin{align}
		 & \mathbb{E}[ \vert M_i \vert ^q|\mathcal{F}_{i-1}]^{1/q}                                                                  \\
		 & \leq \mathbb{E}[\vert \zeta_i \vert^q|\mathcal{F}_{i-1}]^{1/q} + \mathbb{E}[\vert \mu_i \vert^q|\mathcal{F}_{i-1}]^{1/q} \\
		 & \leq 2 \mathbb{E}[\vert \zeta_i \vert^q|\mathcal{F}_{i-1}]^{1/q},
	\end{align}
	where the second inequality follows from
	Jensen's inequality applied to the second term.
	Raising to the power of $q$ yields the bound
	\begin{align}
		\mathbb{E}[ \vert M_i \vert ^q|\mathcal{F}_{i-1}]
		\leq 2^q \mathbb{E}[\vert \zeta_i \vert^q|\mathcal{F}_{i-1}] \leq C_M < \infty.
	\end{align}
	Now, let
	$S_N \triangleq \sum_{i=0}^{N} M_i$.
	The stochastic process $S_N$ is a martingale:
	\begin{align}
		\mathbb{E}[M_i|\mathcal{F}_{i-1}]
		= \mathbb{E}[\zeta_i-\mu_i|\mathcal{F}_{i-1}] = 0.
	\end{align}
	We now show that
	$\sup_{N\geq 1}\left|\frac{1}{N}\sum_{i=0}^{N}M_i\right|$
	is integrable. Later we will need a bound on the expected value $\mathbb{E}[|S_N|^q]$,
	which can be obtained by
	the von Bahr-Esseen inequality
	with exponent
	$q\in(1,2]$ \cite{10.1214/aoms/1177700291}:
	\begin{align}
		\mathbb{E}\left[|S_N|^q\right]
		\leq 2 \sum_{i=0}^{N}\mathbb{E}\left[|M_i|^q\right].
	\end{align}
	Using the uniform conditional moment bound derived
	above, we get
	\begin{align}\label{eq:Bound_S_N_q}
		\mathbb{E}\left[|S_N|^q\right]
		\leq 2 C_M (N+1) \leq C_1 N,
	\end{align}
	for $N \geq 1$.
	We next use a refined union bound.
	For $x>0$,
	\begin{align}
		 & \mathbb{P}\left( \sup_{N\geq 1}\frac{|S_N|}{N}>x \right) \\
		 & \leq \sum_{m=0}^{\infty} \mathbb{P}\left( \max_{1
			\leq N\leq 2^{m+1}} |S_N| > 2^m x \right).
	\end{align}
	By Markov's inequality and Doob's
	$L^q$ maximal inequality,
	\begin{align}
		 & \mathbb{P}\left( \max_{1\leq N\leq 2^{m+1}} |S_N|
		> 2^m x \right)                                            \\
		 & \leq \frac{ \mathbb{E}\left[ \max_{1\leq N\leq 2^{m+1}}
		|S_N|^q \right] }{ 2^{mq}x^q }                             \\
		 & \leq \frac{ \left(\frac{q}{q-1}\right)^q
			\mathbb{E}\left[|S_{2^{m+1}}|^q\right] }
		{ 2^{mq}x^q }                                              \\
		 & \leq C_2\,2^{-m(q-1)}x^{-q},
	\end{align}
	where $C_2<\infty$ is independent
	of $m$ and $x$. In the last inequality, we have used the bound (\ref{eq:Bound_S_N_q}).
	Since $q>1$,
	\begin{align}
		\sum_{m=0}^{\infty}2^{-m(q-1)}<\infty.
	\end{align} Thus, there exists a
	finite constant $C_3$ such that
	\begin{align}
		\mathbb{P}\left( \sup_{N\geq 1}\frac{|S_N|}{N}>x
		\right) \leq C_3 x^{-q}.
	\end{align} Consequently,
	\begin{align}
		 & \mathbb{E}\left[ \sup_{N\geq 1}\frac{S_N}{N} \right]  \leq  \mathbb{E}\left[ \sup_{N\geq 1}\frac{|S_N|}{N} \right] \\
		 & = \int_0^\infty \mathbb{P}\left( \sup_{N\geq 1}\frac{|S_N|}{N}>x \right)\,dx                                       \\
		 & \leq 1+ \int_1^\infty C_3 x^{-q}\,dx <\infty.
	\end{align} Therefore,
	\begin{align}
		 & \mathbb{E}\left[ \sup_{N\geq 1} \frac{1}{N}\sum_{i=0}^{N}\zeta_i \right]       \\
		 & = \mathbb{E}\left[ \sup_{N\geq 1} \frac{1}{N}\sum_{i=0}^{N}(M_i+\mu_i) \right] \\
		 & \leq \mathbb{E}\left[ \sup_{N\geq 1}\frac{S_N}{N} \right]
		+ 2C_\mu <\infty.
	\end{align}
	Thus, we have constructed a dominating function and
	can interchange the limit and expectation by the dominated convergence theorem, which proves the theorem.
\end{proof}

\printbibliography

\end{document}